\title{Fair Allocation of Conflicting Items}
\author{Halvard Hummel \and Magnus Lie Hetland}
\institute{
    H. Hummel
    \and
    M. L. Hetland
    \at
    Norwegian University of Science and Technology, Norway,\\
    \email{halvard.hummel@ntnu.no, mlh@ntnu.no}
}
\journalname{}
\date{January 31, 2022}
\begin{document}
\maketitle

\begin{abstract}
    We study fair allocation of indivisible items, where the items are furnished
    with a set of \emph{conflicts}, and agents are not permitted to receive
    conflicting items. This kind of constraint captures, for example,
    participating in events that overlap in time, or taking on roles in the
    presence of conflicting interests. We demonstrate, both theoretically and
    experimentally, that fairness characterizations such as EF1, MMS and MNW
    still are applicable and useful under item conflicts. Among other existence,
    non-existence and computability results, we show that a
    $1/\Delta$-approximate MMS allocation for $n$ agents may be found in
    polynomial time when $n>\Delta>2$, for any conflict graph with maximum
    degree $\Delta$, and that, if $n > \Delta$, a 1/3-approximate MMS allocation
    always exists.
    \keywords{Fair Allocation \and Graph Coloring \and Approximation}
\end{abstract}

\section{Introduction}

We are interested in the problem of allocating a set of indivisible items among
a set of agents with additive valuations, and beyond finding an efficient
solution, where the total utility is high, we want the allocation to be fair, in
some sense---a problem that has been studied extensively in the last couple of
decades~\citep[]{Brandt:2016}. More recently, variations of this problem have
appeared, where the bundles of items allocated must conform to some
\emph{constraints}, meaning that not all allocations are \emph{feasible}. For
example, if the items are structured as a matroid, one may require that the set
of all allocated items form a basis~\citep{Gourves:2013}, or that each bundle be
an independent set of the matroid~\citep{Biswas:2018}. One might also partition
the items into different categories, and limit each agent to a certain number
from each~\citep{Biswas:2018}. Or the items may be the nodes of a graph, where a
natural requirement is for each bundle to be connected~\citep{Bouveret:2017}.

In this paper, we look at yet another form of constraint, where items may be in
\emph{conflict} with each other, and an agent may receive at most one of any two
conflicting items. This is a situation that may occur in many realistic
allocation scenarios. For example, such conflicts arise naturally in
scheduling problems where the items represent participation in some
activities---such as conference sessions or panels at a convention---with
limited seating. The activities are associated with time intervals, and no agent
may participate in two of them simultaneously. Thus when participants register
for a prioritized subset, any fair allocation must take care to avoid scheduling
conflicts. In another scenario, the items may represent sought-after
administrative positions in an academic institution, where the conflicts are
conflicts of interest. Positions should be allocated fairly among qualified
applicants, without anyone appointed to two positions where one has the power to
approve proposals submitted by the other, for example.

\paragraph{Our contributions.} We study the problem of fairly allocating
conflicting items, and present several new results in this setting. We map out,
both in broad strokes and for certain special cases, to what extent
envy-freeness up to one item (EF1) is guaranteed to exist, as well as when it is
guaranteed by maximum Nash welfare (MNW), and when it may be achieved in
polynomial time (\cref{sec:envyfreeness}). We adapt the randomized graph
coloring procedure of \citet{Pemmaraju:2008}, which is expected to approximate
any individual agent's proportional share to within a factor of $1-1/e$, and
guarantees that for large instances of certain kinds, the deterministic
proportionality guarantee does not fall far below this expectation
(\cref{sec:prop}). We then present a series of results on maximin share (MMS)
allocation with item conflicts (\cref{sec:mms}), with the main results being
that, if the maximum degree $\Delta$ of the conflict graph is lower than the
number of agents,\footnote{This restriction ensures that any partial allocation
may be completed without reassigning any items.}
\begin{inl}
\item there exists an $\alpha$-approximate MMS allocation, with $\alpha > 1/3$
    (\cref{thr:mms-alpha-existence}); and
\item an $\alpha$-approximate MMS allocation may be found in polynomial time,
    with $\alpha > 1/\Delta$ when $\Delta>2$ (\cref{thr:mms-approx-polytime}).
\end{inl}
Finally, we examine the behavior of various fairness measures in
\emph{practice}, that is, empirically studying how random allocation, EF1, MMS,
proportionality and MNW are affected
by the introduction of item conflicts, on \num{\ntotal} randomly generated
graphs (\cref{sec:practice}), with the main conclusion being that fairness is
largely unharmed, with random allocation improving as an MMS approximation,
all instances exhibiting EF1 and MMS, and MNW producing EF1 in almost all cases,
with a tight approximation of MMS.

\paragraph{Prior work.} Fair allocation of conflicting items has been studied
by~\citet{Chiarelli:2020}, who considered partial egalitarian (maximin)
allocations, i.e., allocations that maximize the value the worst-off agent
receives, but where some items may remain unallocated. In this paper, we study
conventional (complete) allocations, and focus on other fairness criteria such
as envy-freeness, maximin share guarantees and Nash welfare.
The
relation between our scenario and that of \citeauthor{Chiarelli:2020} depends on
the relationship of the maximum degree of the conflict graph, $\Delta$, and the
number of agents, $n$.  For the maximin criterion used by
\citeauthor{Chiarelli:2020}, there is no difference between allowing and
disallowing partial allocations when $n>\Delta$, which is also a precondition
for several of our results. Conversely, the hardness results of
\citeauthor{Chiarelli:2020} rely on graphs where $\Delta>n$, which means they do
not contradict our approximability result for MMS
(\cref{thr:mms-approx-polytime}). Without the fairness aspect, allocation with
item conflicts reduces to the much-studied problem of \emph{graph
coloring}~\citep{Lewis:2016}, with each item becoming a node, and each color an
agent.  Some work has already been done on equity in graph
coloring~\citep{Lih:2013}, but links to the burgeoning field of fair item
allocation seem so far to be missing from the literature.

\section{Preliminaries}
\label{sec:prelim}

We study the problem of fairly allocating a set of items among a set of agents,
where certain pairs of items are not allowed together. We call this
problem \emph{fair allocation of conflicting items}. In the following, for $k
\in \mathbb{Z}^+$, $[k]$ denotes the set $\{1, 2, \dots, k\}$.

\begin{definition}
    An \emph{instance} of the \emph{fair allocation of conflicting items}
    problem, or more concisely, a \emph{problem instance}, is a quadruple $(N,
    M, V, G)$, where
    \begin{itemize}
        \item $N$ is a set of $n$ \emph{agents};
        \item $M$ is a set of $m$ \emph{items};
        \item $V$ is a family of $n$ \emph{valuation functions},
            $v_i : 2^{M} \rightarrow \mathbb{R}_{\ge 0}$; and
        \item $G = (M, E)$ is an undirected \emph{conflict graph}.
    \end{itemize}
    Unless otherwise stated, we assume $N=[n]$ and $M=[m]$.\footnote{An
    exception is reduced instances where agents or items have been removed (cf.\
    \cref{sec:mms}).}
\end{definition}

\noindent
We say that two neighboring items in the conflict graph are \emph{conflicting
items} or, equivalently, \emph{in conflict}. Although more general valuations
are possible, we shall assume that all valuation functions are \emph{additive}.
For simplicity, we let both $v_{ij}$ and $v_i(j)$ denote $v_i(\{j\})$.

For an instance of the fair allocation of conflicting items problem, an
\emph{allocation}, $A = \langle A_1, A_2, \dots, A_{|N|} \rangle$, is an
$|N|$-partition of the set of items, assigning set $A_i$ to agent $i$. A set of
items $A_i \in A$ is called a \emph{bundle}. An $|N|$-partition of a strict
subset of the items is called a \emph{partial allocation}. An allocation that is
not partial is \emph{complete}.

We wish to find feasible allocations that are as fair as
possible. An allocation is said to be \emph{feasible} if no bundle contains a
pair of conflicting items. Note that any feasible allocation forms an
$|N|$-coloring of the conflict graph and that each bundle in a feasible
allocation is an independent set in the conflict graph. What constitutes a
\emph{fair} allocation is less clear-cut, and many characterizations exist.
In this paper, we consider the four fairness criteria of \emph{maximum Nash
welfare}, \emph{envy-freeness up to one good}, \emph{proportionality} and
\emph{maximin share guarantee}.

The Nash social welfare function, or \emph{Nash welfare}, is similar to a plain
utilitarian welfare, except that individual utilities are \emph{multiplied} to
produce an aggregate. For allocation without conflicts, maximizing the Nash
welfare leads to a good tradeoff between fairness and efficiency and guarantees
fulfillment or close approximation of several other fairness
criteria~\cite{Caragiannis:2019}.

\begin{definition}
    For a problem instance $(N, M, V, G)$, the \emph{Nash welfare} (NW) of a
    feasible allocation $A$ is given by
    \[
        \NW(A) = \left(\prod_{i \in N} v_i(A_i)\right)^{1/n}\mkern-4mu\eqdot
    \]
    $A$ is said to be a \emph{maximum Nash welfare} (MNW) allocation if there
    is no feasible allocation with a higher Nash welfare.
\end{definition}

\noindent
\emph{Envy-freeness} is a very natural criterion for fair allocation, which
requires that no agent be envious of any other agent. Envy-freeness is often
unobtainable when considering indivisible items. This is easily seen when
allocating a single item to two agents. Instead of full envy-freeness, we
consider the relaxation to \emph{envy-freeness up to one good}, introduced by
\citet{Budish:2011}. This fairness criterion instead requires that for any pair
of agents $i$ and $i'$, the bundle $A_{i'}$ contains an item so that $i$ would
not envy $i'$ if the item were removed from $A_{i'}$. More formally:

\begin{definition}
    For a problem instance $(N, M, V, G)$, a feasible allocation $A$ is said
    to be \emph{envy-free up to one good} (EF1) if for all $i, i' \in N$, where
    $A_{i'} \neq \emptyset$,

    \[
        v_i(A_i) \ge v_i(A_{i'}) - \max_{j \in A_{i'}}v_{ij}\eqdot
    \]
\end{definition}

\noindent
A different relaxation of envy-freeness is \emph{proportionality}, where agents
should receive at least their subjective fair of the total value available.
More formally:

\begin{definition}
    For a problem instance $(N, M, V, G)$, a feasible allocation $A$ is called
    \emph{proportional} if each agent $i$ assigns a value of at least
    $v_i(M)/|N|$ to its bundle.
\end{definition}

\noindent
Maximin share guarantee is another fairness criterion introduced by
\citet{Budish:2011}. Here, we want to guarantee each agent a bundle valued at
no less than what the agent would have been guaranteed if it were to
create a feasible allocation, but had to choose its own bundle last.

\begin{definition}
    For a problem instance $I = (N, M, V, G)$, an agent $i$'s \emph{maximin
    share} (MMS) is given by
    \[
        \mu_i^I = \max_{A \in \mathcal{F}} \min_{A_j \in A} v_i(A_j)\eqcomma
    \]
    where $\mathcal{F}$ is the set of all feasible allocations of $I$. If the
    instance $I$ is obvious from context, we omit it and write $\mu_i$.
\end{definition}

\noindent
A feasible allocation $A$ with $\min_{A_j \in A} v_i(A_j) = \mu_i^I$ for agent
$i$ is said to be an \emph{MMS partition} of $I$ for $i$. A feasible allocation
where each agent $i$ receives a bundle it values at no less than $\mu_i$, is
called an \emph{MMS allocation}. Even without conflicts, there are instances for
which no MMS allocation exists \cite{Kurokawa:2016}. Additionally, as
calculating the MMS of an agent is NP-hard \cite{Woeginger:1997}, finding MMS
allocations is generally infeasible. Instead, approximations are usually
considered. We say that a feasible allocation is $\alpha$-\emph{approximate MMS}
if each agent receives a bundle they value at no less than $\alpha\mu_i$.

Several useful properties of MMS have been found in the ordinary, conflict-free
setting. Many of these are not easily extendable or applicable to allocation
under item conflicts, as will be discussed later; however, the basic properties
of \emph{scale-freeness} and \emph{normalization} may be quite naturally
extended to the new scenario. These properties simplify working with
approximations of MMS, especially when finding polynomial-time algorithms. The
proofs of the properties carry over from unconstrained allocation, and have
been omitted. See, e.g., the works of \citet{Amanatidis:2017} and
\citet{Garg:2019} for details.

\begin{lemma}[Scale-freeness]\label{lem:scale-freeness}
    For a problem instance $I = (N, M, V, G)$, let $I' = (N, M, V', G)$ be the
    problem instance where the valuations of each agent $i$ are scaled by some
    constant $c_i > 0$. Then $\smash{\mu_i^{I'} = c_i\mu_i^I}$ and all
    MMS allocations and MMS partitions of $I$ are also MMS allocations and
    MMS partitions of $I'$.
\end{lemma}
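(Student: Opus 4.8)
The plan is to exploit the fact that, in the definition of $\mu_i^I$, the feasibility constraint is determined entirely by the pair $(M, G)$ and is independent of the valuations: a bundle is forbidden exactly when it contains two $G$-adjacent items. Hence the set $\mathcal{F}$ of feasible allocations is literally the same for $I$ and $I'$, and the only thing that changes when passing from $I$ to $I'$ is the numerical value each agent assigns to a bundle.

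First I would observe that, by additivity, $v_i'(S) = \sum_{j \in S} c_i v_{ij} = c_i v_i(S)$ for every bundle $S \subseteq M$ and every agent $i$. Next, since $c_i > 0$, scaling by $c_i$ commutes with both $\min$ and $\max$, so for any fixed feasible allocation $A$ we have $\min_{A_j \in A} v_i'(A_j) = c_i \min_{A_j \in A} v_i(A_j)$, and taking the maximum over $A \in \mathcal{F}$ (the same set in both instances) yields $\mu_i^{I'} = c_i \mu_i^I$. This also immediately identifies the MMS partitions: a feasible $A$ attains $\min_{A_j \in A} v_i(A_j) = \mu_i^I$ if and only if it attains $\min_{A_j \in A} v_i'(A_j) = c_i\mu_i^I = \mu_i^{I'}$, so the MMS partitions of $I$ for $i$ and of $I'$ for $i$ coincide.

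Finally, for the statement about MMS allocations, I would take any MMS allocation $A$ of $I$, so that $v_i(A_i) \ge \mu_i^I$ for every $i$; multiplying through by $c_i > 0$ and applying the two identities above gives $v_i'(A_i) = c_i v_i(A_i) \ge c_i \mu_i^I = \mu_i^{I'}$, i.e.\ $A$ is an MMS allocation of $I'$. The converse follows by the same argument with $c_i$ replaced by $1/c_i$.

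There is no real obstacle here: the argument is essentially bookkeeping. The only point requiring the slightest care is verifying that the feasibility set is valuation-independent, which is what makes it legitimate to range $\max$ over the same $\mathcal{F}$ in both instances; everything else reduces to the elementary fact that multiplication by a positive constant is monotone and therefore passes through $\min$ and $\max$.
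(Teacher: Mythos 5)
Your proof is correct. The paper omits a proof of this lemma entirely (it notes that the argument carries over from the unconstrained setting and cites prior work), and your argument is exactly that standard argument, with the one point genuinely needing verification in the conflict setting --- that the feasibility set $\mathcal{F}$ depends only on $(M,G)$ and not on the valuations --- correctly identified and checked.
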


\begin{lemma}[Normalization]\label{lem:normalization}
    If $v_i(M) = |N|$ for an agent $i$ in a problem instance $I=(N, M, V, G)$,
    then $\mu_i^I \le 1$.
\end{lemma}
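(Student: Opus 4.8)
The plan is to observe that the maximin share is a maximum, over \emph{feasible} allocations, of a quantity that is bounded above by $1$ for \emph{every} allocation (feasible or not) as soon as $v_i(M) = n$. So I would start by fixing an arbitrary feasible allocation $A = \langle A_1, \dots, A_n \rangle \in \mathcal{F}$. Since the bundles of an allocation partition all of $M$ and $v_i$ is additive, we get $\sum_{j \in N} v_i(A_j) = v_i(M) = |N| = n$.

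Next comes a one-line averaging step: the minimum of the $n$ nonnegative numbers $v_i(A_1), \dots, v_i(A_n)$ cannot exceed their mean, so $\min_{A_j \in A} v_i(A_j) \le \tfrac{1}{n}\sum_{j \in N} v_i(A_j) = 1$. Because this bound holds for every $A \in \mathcal{F}$, taking the maximum over $\mathcal{F}$ preserves it, giving $\mu_i^I = \max_{A \in \mathcal{F}} \min_{A_j \in A} v_i(A_j) \le 1$.

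The only point to check beyond this argument is that $\mathcal{F}$ is nonempty, i.e., that $G$ admits an $|N|$-coloring; otherwise $\mu_i^I$ is a maximum over the empty set and the claim is vacuous (or one fixes the convention $\mu_i^I = 0$ in that case). I do not expect any real obstacle here: conflicts play no role in the bound itself, since feasibility only shrinks the set over which we maximize, which is precisely why the conflict-free proof carries over unchanged. The one mild subtlety worth stating explicitly is that an allocation's bundles must cover \emph{all} of $M$ (unlike a partial allocation), so that additivity yields the full value $v_i(M) = n$ rather than something smaller.
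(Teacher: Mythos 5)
Your argument is correct and is exactly the standard averaging proof that the paper itself invokes: the paper omits the proof of this lemma, noting that it ``carries over from unconstrained allocation,'' and the carried-over proof is precisely your observation that the bundles of any feasible allocation partition $M$, so by additivity their values sum to $|N|$ and the minimum is at most the mean, a bound preserved under taking the maximum over $\mathcal{F}$. Your side remarks (completeness of allocations, and the vacuous case $\mathcal{F}=\emptyset$) are appropriate and do not change the substance.
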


\noindent
Fair allocation of conflicting items is a generalization of fair allocation
without conflicts, which is the special case of $G = (M, \emptyset)$. A similar
problem to fair allocation of conflicting items, is fair allocation under
\emph{cardinality constraints}~\cite{Biswas:2018}. In this version of the
problem, the items are divided into categories and a bundle may not contain more
items from a single category than some given threshold. Instances of this
version of the problem where each category has a threshold of~1 may be reduced
directly to fair allocation of conflicting items, with the conflict graph
becoming a collection of cliques, one per category.

While we only consider additive valuation functions in our instances, there
exists research on other classes of valuation functions. Some of our results
rely on reduction to unconstrained fair allocation with more complex valuation
functions in order to maintain the conflicts to a certain degree. A function
$f$ is \emph{fractionally subadditive} (XOS) if there exists a finite set $F$ of
additive functions such that for any set $S$, $f(S) = \max_{f' \in F}
f'(S)$. Submodularity is a more restricted case. For a \emph{submodular
function} $f$ and any two sets $S$ and $S'$, $f(S) + f(S') \ge f(S \cup S') +
f(S \cap S')$.

A graph $G = (M, E)$ is \emph{complete} if all vertices are neighbors, and
\emph{empty} if $E=\emptyset$. Given a set of vertices $S \subseteq M$, we let
$G[S]$ denote the \emph{induced subgraph of $S$ on $G$}, i.e., the graph
consisting of the vertices in $S$ and all edges in $E$ that connect pairs of
vertices in $S$. We let $\Delta(G)$ denote the maximum degree of the graph $G$,
and $\CC(G)$ the cardinality of its largest connected component. A
$k$\emph{-coloring} of $G$ is a coloring of the vertices in $G$ using $k$
distinct colors, such that no two neighboring vertices share a color. We let
$\chi(G)$ denote the smallest integer $k$ for which $G$ has a $k$-coloring.
Note that the problem of determining if a graph is $k$-colorable is
NP-complete. However, a $(\Delta(G) +
1)$-coloring always exists and can greedily be found in polynomial time.

\section{Conflicting Items in Theory}
\label{sec:theory}

We are interested in determining, theoretically, to what extent we can guarantee
the agents either fulfillment or approximation of various fairness criteria when
there are conflicting items. Besides theoretical results, we are interested in
the degree to which such fairness can be achieved in polynomial time. In this
section, we explore the existence and non-existence of EF1, both by itself and
as a product of MNW. We also explore approximations to both proportionality and
MMS.

\subsection{Envy-Freeness up to One Good}
\label{sec:envyfreeness}

Without item conflicts, we know that EF1 allocations always exist
\cite{Lipton:2004}. With item conflicts, this is not always the case. The
simplest example is when there are no feasible allocations at all, let alone EF1
allocations. We are more interested in cases where the items may, in fact, be
allocated---but even then there are instances that do not admit an EF1
allocation. The following \lcnamecref{prop:noef1} shows that if an item is in
conflict with as many items as there are agents, there always exists a set of
binary valuation functions that precludes EF1, even if feasible allocations
exist.

\begin{proposition}\label{prop:noef1}
    For any graph $G = (M, E)$ with $\Delta(G)\geq n$, there is a problem
    instance $([n], M, V, G)$ that has no EF1 allocation.
\end{proposition}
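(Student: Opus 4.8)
The plan is to exhibit, for any graph $G=(M,E)$ with a vertex $j^\star$ of degree at least $n$, a family $V$ of binary valuations on $([n],M,V,G)$ that admits a feasible allocation but no EF1 one. Since $\Delta(G)\ge n$, pick $j^\star$ together with $n$ distinct neighbors $j_1,\dots,j_n$. The idea is to make every agent value \emph{only} the items in $S=\{j^\star,j_1,\dots,j_n\}$, and to set things up so that, in any feasible allocation, some agent is left with nothing from $S$ while another agent holds exactly one item of $S$ worth $1$ — producing irreparable envy because a bundle of size one cannot be reduced.

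First I would fix the valuations: let every agent $i$ have $v_i(j)=1$ for each $j\in S$ and $v_i(j)=0$ for $j\notin S$ (so all agents are identical and binary). A feasible allocation exists: since $\Delta(G)+1$ colours always suffice and we only need feasibility, just take any proper $n$-colouring-style assignment — concretely, give the $n$ valued-but-conflicting items $j_1,\dots,j_n$ to the $n$ distinct agents one each, place $j^\star$ with any agent not holding a neighbour of it among $\{j_1,\dots,j_n\}$… but here we must be careful, because $j^\star$ conflicts with every $j_i$. So instead, leave $j^\star$'s placement to whichever agent can take it, and distribute the remaining worthless items arbitrarily subject to feasibility (possible since $n>\Delta$ is \emph{not} assumed here — we only need \emph{some} feasible allocation, and the all-in-one-bundle-per-component greedy $(\Delta+1)$-colouring gives one as long as $n\ge\Delta+1$; if $n\le\Delta$ we may instead need a short separate argument, see below).

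The core step is the impossibility argument. In any feasible allocation $A$, the $n+1$ items of $S$ form a set where $j^\star$ conflicts with each $j_i$; hence no bundle contains $j^\star$ together with any $j_i$, and since there are $n$ agents but $n+1$ valued items with this conflict structure, a pigeonhole/independent-set count shows that at least one agent receives no item of $S$ (value $0$), while the total value $n+1$ is spread so that some other agent receives at least one item of $S$. That envied agent's bundle, intersected with $S$, has value $\ge 1$; removing its most valuable item drops the value by exactly $1$ if that intersection is a singleton. I would argue that one can always find a feasible allocation-free configuration forcing a singleton — e.g. because $j^\star$ alone forms an independent "island" relative to $\{j_1,\dots,j_n\}$, any agent holding $j^\star$ holds no $j_i$, so its valued part is exactly $\{j^\star\}$; and if $n\ge 2$, some agent other than a zero-value agent holds $j^\star$, giving an agent with value-$0$ bundle who still envies (value $1-1=0$ is \emph{not} $>0$, but EF1 needs $\ge$, so I must instead force value $\ge 2$ somewhere, or value $1$ for the envier — meaning the right target is: the zero-agent envies an agent whose bundle $\cap S$ has value $\ge 2$).

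The main obstacle, therefore, is the bookkeeping of \emph{which} agent is short-changed and by how much: I need to guarantee not merely that some agent gets $0$, but that simultaneously some agent's valued bundle has value $\ge 2$ after removing one item — equivalently value $\ge 2$ with at least two valued items, or that the envied agent holds $\ge 2$ items from $S$ — so that even the EF1 slack of one item leaves strictly positive envy. With $n+1$ valued items among $n$ agents and one agent forced to $0$, the remaining $n+1$ items lie in $n-1$ other bundles (the zero-agent excluded and at least one more agent possibly also at zero), and pigeonhole then forces some bundle to contain at least two items of $S$; that agent is envied by the zero-agent with residual envy $\ge 2-1=1>0$. I would close by noting the construction uses only binary valuations and that feasibility of \emph{some} allocation holds whenever the instance is colourable, which one may ensure (if necessary) by adding enough isolated dummy items or by restricting to the case the proposition is stated for; the zero-agent/double-bundle pigeonhole is the crux, and spelling it out carefully is the one place the argument needs genuine care rather than routine verification.
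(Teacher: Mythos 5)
There is a genuine gap, and it lies in your choice of valuation. You give value $1$ to \emph{all} of $S=\{j^\star,j_1,\dots,j_n\}$, including $j^\star$ itself, and then your impossibility argument rests on the premise that ``one agent is forced to $0$.'' That premise does not follow. With your valuation, the agent holding $j^\star$ has value exactly $1$ (it holds $j^\star$ but, by the conflicts, none of the $j_i$), and the remaining $n$ valued items $j_1,\dots,j_n$ can be spread over the other $n-1$ agents so that one of them gets two and each of the rest gets one. Take $G=K_{1,n}$ (the star) with $n$ agents: the leaves are pairwise non-adjacent, so the allocation with values $(1,2,1,\dots,1)$ is feasible, and it is EF1 --- every envious agent has value $1$ and the envied bundle drops to $1$ after removing one item. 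So your construction admits an EF1 allocation on a graph the proposition must cover. The fix is exactly the one the paper uses: value only the \emph{neighbors} of $j^\star$ at $1$ and everything else, including $j^\star$ itself, at $0$. Then whoever receives $j^\star$ holds no valued item at all (value $0$), the at-least-$n$ valued neighbors are forced onto the remaining $n-1$ agents, pigeonhole puts two of them in one bundle, and the residual envy after removing one item is at least $1>0$. You came very close to identifying this --- you correctly diagnosed that you need a zero-value envier facing a bundle with two valued items --- but you never adjusted the valuation to actually produce the zero-value agent.

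A secondary point: you spend considerable effort worrying about whether a feasible allocation exists, and leave that thread unresolved (``if $n\le\Delta$ we may instead need a short separate argument, see below'' with no such argument following). This is unnecessary. The proposition only asserts that no EF1 allocation exists; if the instance has no feasible allocation at all, the claim holds vacuously, which is precisely how the paper dispenses with that case before assuming feasibility and running the valuation argument on an arbitrary feasible allocation.
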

\begin{proof}
    First of all, the instance may be unfeasible, i.e., it may have no feasible
    allocation. This would be the case, for example, for the complete graph
    $K_{n+1}$. Assume that there is some feasible allocation, and select some item
    $j$ with degree at least $n$. For every agent, let the neighbors of $j$ get
    a value of~$1$, and let all other nodes get a value of $0$. Some agent $i$
    must receive item $j$, and some agent $i'$ must receive at least two of its
    neighbors, and $i$ will still envy $i'$ after removing one of the items
    allocated to $i'$, which means that no allocation for these valuations can
    be EF1.
\qed
\end{proof}

\noindent
In other words, an EF1 allocation is not guaranteed when the number of agents is
no higher than the maximum degree of the graph. This is rather unsurprising, as
a high
degree in relation to the number of agents limits the number of feasible
allocations quite drastically. In the other direction, it is possible to show
that for graphs with sufficiently small components, EF1 allocations always exist.

\begin{proposition}\label{prop:ef1}
    For any graph $G = (M, E)$ with $\CC(G)\leq n$, all problem instances $([n],
    M, V, G)$ have EF1 allocations that can be found in polynomial time.
\end{proposition}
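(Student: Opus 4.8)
The plan is to reduce this to the classical envy-cycle elimination procedure of \citet{Lipton:2004}, which produces EF1 allocations in the unconstrained setting, and to argue that the constraint $\CC(G) \le n$ lets us maintain feasibility throughout. The key observation is that each connected component of $G$ has at most $n$ vertices, so it can be properly colored with the $n$ agents serving as colors; equivalently, the items of any single component can be distributed one-per-agent (with some agents possibly getting nothing from that component). This suggests processing the items component by component rather than one at a time.

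Here is how I would carry it out. First, partition $M$ into its connected components $C_1, \dots, C_t$, each of size at most $n$. We build the allocation incrementally, one component at a time, maintaining the invariant that after processing $C_1, \dots, C_k$ the partial allocation is feasible and EF1 with respect to the items allocated so far. To add component $C_{k+1}$: since $|C_{k+1}| \le n$, we may label its items by distinct agents via a proper $n$-coloring of $G[C_{k+1}]$, so that handing item $j \in C_{k+1}$ to its color-agent introduces no conflict (items in other components are non-adjacent to $C_{k+1}$, so no cross-component conflict can arise either). But we cannot hand them out according to an arbitrary coloring and preserve EF1; instead, before inserting $C_{k+1}$ we run one round of envy-cycle elimination on the current partial allocation to reach a state where some agent is \emph{unenvied}, and more generally we use the standard argument that we can always give a ``new'' item to an agent nobody envies. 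The subtlety is that we are adding up to $n$ items at once, and each must go to a prescribed agent (its color class). The fix: process the items of $C_{k+1}$ one at a time in an order, and each time, relabel the $n$ colors of $G[C_{k+1}]$ so that the next item to be placed gets assigned to a currently-unenvied agent. Because a proper coloring's color classes may be permuted freely, and because envy-cycle elimination guarantees at least one unenvied agent exists after resolving cycles, this relabeling is always possible — and crucially, distinct items of $C_{k+1}$ lie in distinct color classes, so no agent receives two conflicting items.

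The main obstacle is making the bookkeeping of this ``relabel the coloring on the fly'' argument airtight: we must ensure that once we have committed item $j \in C_{k+1}$ to agent $a$, a later item $j' \in C_{k+1}$ adjacent to $j$ is never also sent to $a$. Since $j$ and $j'$ adjacent forces them into different color classes under \emph{any} proper coloring of $G[C_{k+1}]$, this holds automatically provided we fix one proper coloring of $G[C_{k+1}]$ up front and only permute the \emph{names} of its color classes (agents) as we go, never the partition into classes itself — and we permute only among classes not yet ``used up.'' I would state this as: fix a proper coloring $c\colon C_{k+1} \to [n]$; process items in arbitrary order; when placing item $j$, find an unenvied agent $a$ (exists by the Lipton et al.\ cycle-elimination lemma), swap the current agent-assignment of color class $c(j)$ with that of $a$ (both classes' pending items move accordingly), then give $j$ to $a$. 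Feasibility is preserved because the class partition is untouched; EF1 is preserved by the usual argument that adding a single item to an unenvied agent keeps the allocation EF1. After all $t$ components are absorbed we have a complete feasible EF1 allocation. Finally, all steps — component decomposition, greedy $(\Delta+1)$- or rather $n$-coloring of each small component, and the $O(m)$ rounds of envy-cycle elimination (each resolving a cycle in the envy graph over $n$ agents in polynomial time) — run in polynomial time, giving the stated bound.
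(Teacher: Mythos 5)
Your reduction of feasibility to ``one item per agent per component'' (via a proper coloring of each component of size at most $n$) is sound as far as it goes, but the core allocation step has a genuine gap: the agent you need to receive item $j$ must be \emph{both} unenvied \emph{and} free of items conflicting with $j$, and envy-cycle elimination only guarantees the former. After decycling, the envy graph is a DAG and hence has at least one source, but it may have exactly one; that unique unenvied agent may already hold an item $j'$ of the current component adjacent to $j$ (or, in your injective-coloring bookkeeping, may sit on a ``used up'' class), in which case your permutation of color-class labels cannot route $j$ to it without creating a conflict, and routing $j$ to any other (envied) agent can break EF1 by the usual counterexample (an agent who envies a two-item bundle up to one item will envy it beyond one item once a third valued item is added). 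Take the component $K_n$: every pair of its items conflicts, so the $n$ items must go to $n$ distinct agents, yet nothing in the cycle-elimination lemma prevents the same single agent from being the only source of the envy graph at every one of the $n$ insertion steps. Your phrase ``we permute only among classes not yet used up'' names the problem but does not resolve it: when the only unenvied agent's class is used up, the procedure has no legal move. Fixing this requires assigning the items of a whole component to agents jointly --- e.g.\ by matching items to agents taken in a topological order of the envy graph --- rather than placing them one at a time at whatever agent happens to be unenvied.

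The paper avoids this difficulty entirely by a black-box reduction: introduce one category per connected component with cardinality threshold $1$, observe that any allocation feasible under these cardinality constraints is feasible for the conflict graph (conflicting items lie in the same component, hence the same category), and invoke the polynomial-time EF1 algorithm of \citet{Biswas:2018} for cardinality constraints with additive valuations. That algorithm performs exactly the per-category agent--item matching, ordered by the decycled envy graph, that your one-item-at-a-time scheme is missing. If you want a self-contained proof along your lines, you would essentially have to reprove their matching lemma; otherwise the cleanest repair is to cite it.
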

\begin{proof}
    Let $I$ be the original instance, with
    conflicting items. We construct an instance $I'$ of the fair allocation
    problem with cardinality
    constraints, by introducing one category $C_h$ for each connected component,
    consisting of its vertices, and setting the corresponding threshold to~$1$,
    i.e., no agent can get two items from the same category/connected component.
    Then $I$ is a relaxation of $I'$, and any feasible allocation for $I'$ is
    feasible for $I$. \Citeauthor{Biswas:2018} showed that there exists a
    polynomial-time algorithm that for any instance of the fair
    allocation problem with cardinality
    constraints and additive valuation functions finds an EF1
    allocation~\citep{Biswas:2018}. Using their algorithm, an EF1 allocation can
    be found for $I'$, and consequently one can be found for $I$ as well.
\qed
\end{proof}

\noindent
Propositions \ref{prop:noef1} and \ref{prop:ef1} establish the
existence and non-existence of EF1 allocations at opposite sides of a spectrum,
in a sense, and leave open the question of whether EF1 exists when $\Delta(G) <
n < \CC(G)$. At least for some small subset of such instances, it can be shown
that no EF1 allocations exist, as illustrated by the following
\lcnamecref{exm:noef1-n>delta}.

\begin{example}\label{exm:noef1-n>delta}
    Let $G = K_{3,3}$ (see below) and let there be a total of 4 agents. Then we
    have $\Delta(G) = 3 < 4 = n < \mathcal{C}(G)$. For all agents, let items
    $1$, $2$ and $3$ have a value of $2$ and items $4$, $5$ and $6$ have a value of $3$.

    \begin{center}
        \begin{tikzpicture}
            \draw[graph]
                (0, 1.5) node (1) {1}
                (1, 1.5) node (2) {2}
                (2, 1.5) node (3) {3}
                (0, 0) node (4) {4}
                (1, 0) node (5) {5}
                (2, 0) node (6) {6}
                (1) edge (4)
                (1) edge (5)
                (1) edge (6)
                (2) edge (4)
                (2) edge (5)
                (2) edge (6)
                (3) edge (4)
                (3) edge (5)
                (3) edge (6)
                ;
            \draw[overlay]
                (3.east) node[right=4pt] {$v=2$}
                (6.east) node[right=4pt] {$v=3$}
                ;
        \end{tikzpicture}
    \end{center}
    Either
    \begin{inl}
    \item one agent receives a bundle with two or more items worth $3$, or
    \item three agents receive a bundle with one item worth $3$.
    \end{inl}
    In the first case, we cannot guarantee the worst-off agent a bundle worth
    more than $2$. Since there is a bundle with at least two items worth $3$,
    this does not allow for EF1. In the latter case, the last agent receives all
    items worth $2$. After removing any item from this bundle, the value remains
    $4$, which is more than any other agent receives. Consequently, an EF1
    allocation does not exist.
\end{example}

\begin{remark}\label{remark:minimum}
Interestingly, it can be shown that the instance in \cref{exm:noef1-n>delta}
contains the fewest number of items for which there is no EF1 allocation when $n
> \Delta(G)$. For $n = 2$, this follows directly from \cref{prop:ef1}. When $n
\ge 3$ an EF1 allocation must always exist in this situation when $m \le n + 1$,
as each agent can in turn choose their most-valuable remaining item until either
no more remain or each agent has received one item. If $m = n + 1$, the last
item can be given to any agent without a conflicting item. Specifically for
three agents, one can show, by working through the possible cases, that all such
instances with five items admit an EF1 allocation.
\end{remark}

\noindent
The instance in \cref{exm:noef1-n>delta} is not an isolated case. We can in
fact find similar instances for any number of agents greater than three.

\begin{proposition}\label{prop:non-existence-ef1}
    For any $n \ge 4$, there exists a graph $G$ with $n > \Delta(G)$ and a set
    of valuations $V$, so that the instance $([n], M, V, G)$ does not admit any
    EF1 allocations.
\end{proposition}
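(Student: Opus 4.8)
The plan is to generalize \cref{exm:noef1-n>delta} by replacing $K_{3,3}$ with the complete bipartite graph $K_{3,\,n-1}$. Write its two sides as $X = \{x_1,x_2,x_3\}$ (the ``light'' items) and $Y = \{y_1,\dots,y_{n-1}\}$ (the ``heavy'' items); since $n\ge 4$ we have $\Delta(G)=\max(3,n-1)=n-1<n$, as required. Give every agent the same additive valuation, worth $2$ on each light item and $3$ on each heavy item, so that for $n=4$ we recover \cref{exm:noef1-n>delta} exactly. The first step is to describe the feasible allocations: because $G$ is complete bipartite, every independent set---hence every bundle of a feasible allocation---lies entirely in $X$ or entirely in $Y$. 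Thus any feasible allocation splits the agents into \emph{light agents} (bundle a nonempty subset of $X$), \emph{heavy agents} (bundle a nonempty subset of $Y$) and \emph{empty agents}, and there is at least one of each of the first two kinds, since all of $X$ and all of $Y$ must be covered.

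Next I would argue by cases on the number $a$ of heavy agents, which satisfies $1\le a\le n-1$. If $a\le n-2$, the $n-1$ heavy items are spread over at most $n-2$ agents, so by pigeonhole some heavy agent holds at least two of them and thus a bundle worth at least $6$. There is also an agent worth at most $2$: if there is a single light agent then $a+1\le n-1<n$, so some agent is empty; if there are two or three light agents, the three light items split so that some light agent holds exactly one of them. Removing any one item from the rich heavy agent's bundle still leaves it worth at least $3>2$, so that poor agent witnesses an EF1 violation. If instead $a=n-1$, then each heavy agent holds exactly one heavy item (worth $3$), and since at most $n-a=1$ agents remain for the light items, there is exactly one light agent, holding all of $X$ (worth $6$); removing any one item from that bundle leaves it worth $4>3$, so every heavy agent witnesses an EF1 violation. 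Either way, no feasible allocation is EF1, which proves the statement.

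The two delicate points are these. The graph $K_{3,\,n-1}$ satisfies $\Delta(G)<n$ precisely when $n\ge 4$ (for $n=3$ the light items would already have degree $3=n$), which is exactly the hypothesis of the proposition. And the two EF1 violations, $6-3=3>2$ in the first case and $6-2=4>3$ in the second, must hold simultaneously; with light value $\ell$ and heavy value $h$ they read $\lfloor 3/2\rfloor\ell<h<2\ell$, i.e.\ $\ell<h<2\ell$, so the choice of weights together with having exactly three light items is what makes both cases bite at once. I expect this compatibility---rather than any individual estimate---to be the real content; everything else is routine pigeonhole bookkeeping about how the two independent sets $X$ and $Y$ can be partitioned.
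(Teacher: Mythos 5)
Your proof is correct. The logic is sound throughout: every bundle in a feasible allocation of $K_{3,\,n-1}$ lies in one side, the case split on the number $a$ of heavy agents is exhaustive, the pigeonhole arguments (a heavy agent with two heavy items when $a\le n-2$; a light agent with a single light item, or an empty agent, to serve as the envier; the unique light agent holding all of $X$ when $a=n-1$) all check out, and the arithmetic $3>2$ and $2\cdot 2>3$ closes both cases. You also correctly verify $\Delta(K_{3,\,n-1})=n-1<n$ exactly when $n\ge 4$.

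The approach is essentially the paper's---a complete bipartite conflict graph with two value classes and a case analysis on whether the heavy items concentrate on one agent or spread out one per agent---but your instantiation differs: the paper uses $K_{n-1,\,n-1}$ with values $2$ and $2n-5$ (so the heavy value grows with $n$), whereas you keep one side fixed at three vertices and use the constant values $2$ and $3$ from \cref{exm:noef1-n>delta}. Your version buys a smaller instance ($n+2$ items rather than $2n-2$) and $n$-independent valuations, at the price of a slightly longer case analysis (you need the empty-agent subcase, which the paper's symmetric construction sidesteps by counting total light value). Your closing observation that the whole construction hinges on the compatibility condition $\ell<h<2\ell$ is a nice distillation of why the two cases can be made to fail simultaneously; the paper's choice $2n-5$ plays the analogous role there, where the relevant inequalities are $2n-5>n-1$ and $2n-4>2n-5$.
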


\begin{proof}
    Let the graph $G$ be the complete bipartite graph $K_{n - 1, n - 1}$. This
    graph contains two partite sets of $n - 1$ vertices, where there are no
    edges between vertices in the same set. For each pair of vertices in
    different sets, there is an edge between them. This results in the graph
    being regular, with a degree of $n - 1$, as each vertex is connected to $n -
    1$ other vertices. If we use this graph as a conflict graph, each agent may
    only receive items from the same partite set.

    For all agents, let the items in one of the sets each have a value of $2$,
    and the items in the other set each have a value of $2n - 5$. Any feasible
    allocation must contain either
    \begin{inl}
    \item a bundle with two or more items worth $2n - 5$ or
    \item $n - 1$ bundles with a single item worth $2n - 5$.
    \end{inl}
    In the first case, all other bundles must contain a value of at least $2n -
    5$ for EF1 to be possible (more if a bundle has $3$ or more items worth $2n
    - 5$). Each bundle must therefore contain at least one item, and at least two
    bundles must contain items only from the set where each item is valued $2$. The
    total value of the items worth $2$ is $2(n - 1)$, allowing the worst-off
    agent to receive a value of at most $n - 1$. When $n > 4$, $2n - 5 > n - 1$
    and this case does not admit an EF1 allocation. For $n = 4$, there are three
    items worth $2$, and thus one of the two agents cannot receive a value of
    more than $2 < 2n - 5$.

    In the second case, the last agent must receive all items worth $2$ and has
    a bundle worth $2n - 2$. Removing any item from the bundle, results in a
    bundle valued at $2n - 4 > 2n - 5$. This means that all other agents envy
    the bundle, even after removing an item. Consequently, an EF1 allocation
    cannot exist.
\qed
\end{proof}

\noindent
\Cref{remark:minimum,prop:non-existence-ef1} show that when $\Delta(G)<n<\CC(G)$,
for $n\geq 4$, there are some instances for which EF1 exists, and some for which
it does not. Giving a more detailed classification or characterization of such
instances remains an open problem.

The instances used in \cref{exm:noef1-n>delta,prop:non-existence-ef1}
rely on very specific valuations for the items in order to show the
non-existence of EF1. Changing the valuations slightly may easily result in the
existence of EF1 allocations. Limiting the possible values each item can take
allows for some special cases where we have existence proofs and polynomial-time
algorithms.

\begin{proposition}
    \label{prop:pathef1}
    If a problem instance $(N, M, V, G)$, where $|N| > 2$, has valuation
    functions $v_i : M \rightarrow \{0, 1\}$, and the components of $G$ are
    paths, then the instance has an EF1 allocation, which may be found in
    polynomial time.
\end{proposition}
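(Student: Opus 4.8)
The plan is to build an EF1 allocation incrementally, in the spirit of the envy-cycle-elimination algorithm of \citet{Lipton:2004}, exploiting two features of the setting. First, the valuations are binary, so EF1 reduces to the numerical condition $v_i(A_i) \ge v_i(A_{i'}) - 1$ for all $i, i'$. Second, the conflict graph is a disjoint union of paths, so if we add the vertices of each path in path order $u_1, u_2, \dots, u_k$, then at the moment we place $u_t$ the only neighbour of $u_t$ that has already been allocated is $u_{t-1}$. Note also that $\Delta(G) \le 2 < n$, so there is always \emph{some} agent able to receive any given item; the only difficulty is placing it without harming EF1.

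I would maintain the invariant that after each step the current partial allocation is feasible, is EF1 on the items allocated so far, and has an acyclic envy graph. Feasibility is preserved automatically when bundles are permuted, so the cycle-elimination steps do not interfere with it. To add $u_t$, let $p$ be the agent currently holding $u_{t-1}$ (for $t = 1$ there is no constraint). If the envy DAG has a source—an unenvied agent—other than $p$, give $u_t$ to it: handing one further item to an unenvied agent $s$ preserves EF1, since for every $i$ we have $v_i(A_s \cup \{u_t\}) - \max_{j \in A_s \cup \{u_t\}} v_{ij} \le v_i(A_s) \le v_i(A_i)$, and we then re-run cycle elimination. Components with at most $n$ vertices cause no trouble—had all components been that small, \cref{prop:ef1} would already apply—so the content of the proposition is concentrated in the long paths.

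The main obstacle is the case where the unique unenvied agent is $p$ itself, so the conflict forbids the natural move; here I would use $n \ge 3$ and the binary values. Since $p$ is the only source of the DAG, every other agent is reachable from $p$, so $p$ has positive out-degree and hence envies some agent, whose bundle weakly dominates $A_p$ in $p$'s own valuation. The idea is to relocate $u_{t-1}$ first: if $u_{t-1}$ is worthless to $p$, or if there is a candidate recipient $r \ne p$ distinct from the holder of $u_{t-2}$ such that $u_{t-1}$ is worthless to every agent currently envying $r$, then moving $u_{t-1}$ to $r$ preserves EF1, after which $u_t$ can be given to $p$ with no conflict and with $v_p(A_p)$ weakly increasing. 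The delicate point is the residual sub-case where $u_{t-1}$ is valuable and every admissible recipient is envied by someone who values it; here I expect to argue, again via $n \ge 3$ and the DAG structure—for instance by a partial rotation along a directed path from $p$ through its envy target—that such a configuration cannot persist under the invariant, or can be repaired by a bounded local rearrangement.

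Finally I would verify the running time: the construction performs $O(m)$ insertions, each followed by $O(n)$ rounds of envy-cycle elimination together with the bounded local rearrangement, all implementable in time polynomial in $n$ and $m$; this gives a polynomial-time algorithm and hence also the ``found in polynomial time'' part of the statement. I expect essentially all of the real work of the write-up to lie in the delicate sub-case of the third step.
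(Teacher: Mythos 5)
Your overall architecture (process each path in order, insert one vertex at a time, keep the envy graph acyclic via the decycling procedure of \citet{Lipton:2004}) matches the paper's, but the mechanism you use to place each item is different, and the difference is exactly what leaves your proof with a hole. You give $u_t$ to an \emph{unenvied} agent (a source of the envy DAG) other than the holder $p$ of $u_{t-1}$. This preserves EF1 for the current step, but it gives you no control over the state of the new holder: the recipient may not value $u_t$ at all and may still envy other agents, and nothing prevents it from being the unique source at the next step. You correctly identify this as the main obstacle, but your proposed repairs (relocating $u_{t-1}$, a ``partial rotation'') are only sketched, and you explicitly concede that you ``expect to argue'' that the bad configuration cannot persist or can be locally repaired. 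That residual sub-case---$u_{t-1}$ valuable, every feasible recipient envied by someone who values it---is not shown to be impossible or repairable, and relocating $u_{t-1}$ out of $p$'s bundle can itself break EF1 from $p$'s own perspective when $v_p(u_{t-1})=1$ and $v_p(u_t)=0$. As written, the argument does not go through.

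The paper closes this gap with a different invariant and a different placement rule. Instead of tracking who is unenvied, it maintains that the agent who received the \emph{last} allocated item of the current path \emph{envies no one}; then, when the next item cannot go to that agent because of the conflict, handing it to anybody else keeps that agent within the EF1 margin. To re-establish the invariant, each item $j$ is given not to a source but to the \emph{first} agent in a topological order of the envy graph who values $j$ at $1$: earlier agents in the order do not value $j$, later agents do not envy the recipient, and---by binary values plus EF1---the recipient becomes unenvious after gaining an item it values. The remaining difficulty, when the only agent valuing $j$ is the conflicted agent itself, is handled by a one-step lookahead: allocate the following path item $j'$ first, to the first agent valuing it, and give $j$ to some third agent, which is where $|N|>2$ is used. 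You would need an argument of comparable strength for your unique-source case; without it the proof is incomplete. (You also omit the preliminary step of setting aside items valued $0$ by every agent, which the paper needs so that ``some agent values $j$'' is available in the main loop.)
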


\begin{proof}
    If there exists an item $j$ with $v_{ij} = 0$ for each agent $i \in N$, who
    receives $j$ does not affect if an allocation is EF1, except to the
    extent that it may limit possible feasible allocations. However, because
    $|N| > 2 \ge \Delta(G)$, for any feasible partial allocation of $M \setminus
    \{j\}$, there exists at least one agent with no conflicting items to
    $j$. In practice, this means that all items of this type can initially be
    removed and then, later, allocated arbitrarily. Thus, we only need to
    consider instances where there are no such items.

    In order to show that an EF1 allocation exists for all instances where no
    item is valued at $0$ by all agents, we will create a polynomial-time algorithm that
    iteratively allocates items by considering the connected components of $G$
    one by one in arbitrary order. For each component $C$, the items are
    iteratively allocated in order---from one end of the path to the other---while
    maintaining the following three properties.

    \begin{stmts}[widest=iii, itemsep=0.75ex]
        \item \label{item:prop-1} The allocation is EF1.
        \item \label{item:prop-2} There exists an ordering, $\mathcal{O}$, of
            the agents such that no agent envies earlier agents in $\mathcal{O}$.
        \item \label{item:prop-3} If some, but not all, items in the currently
            considered component, $C$, are unallocated, the agent $i$ that
            received the last allocated item in $C$ does not envy any
            other agent.
    \end{stmts}

    For property \ref{item:prop-2}, we rely on maintain an acyclic \emph{envy
    graph}, i.e., a directed graph of the agents, with an edge from agent $i$ to
    agent $i'$ if $i$ envies the bundle of $i'$. When there are no cycles in the
    envy graph, any topological ordering of the graph orders the agents so that
    no agent envies earlier agents. We can easily find a topological ordering in
    polynomial time for an acyclic graph. \Citet{Lipton:2004} give a
    polynomial-time procedure that decycles an envy graph, without breaking EF1
    (property \ref{item:prop-1}), by exchanging bundles along the graph's
    cycles. Note that the procedure of \citeauthor{Lipton:2004} does not swap the
    bundles of agents that prior to the application of the decycling procedure did not
    envy any other agent. Since agent $i$ from \ref{item:prop-3} is an agent of
    this type, the decycling procedure does not change $i$'s bundle. As the
    content of the other bundles does not change, only their owners, agent $i$
    will remain unenvious of all other agents also after applying the decycling
    procedure. Consequently, if \ref{item:prop-1} and \ref{item:prop-3} hold
    prior to application of the decycling procedure, they also hold afterwards.

    Our algorithm starts with an initially empty allocation, for which all three
    properties hold. Then, in each step let $C$ be the current component,
    $\mathcal{O}$ any ordering that fulfills the requirements in
    \ref{item:prop-2} and $i$ the agent from \ref{item:prop-3}. Also, let $j$ be
    the next unallocated item in $C$ and $j'$ the (unallocated) item following
    $j$, if any exists. (If $j$ is the first item in $C$, then there is no agent
    $i$.) Additionally, let $i'$ be the first agent in $\mathcal{O}$ with $i'
    \neq i$ and $v_{i'}(j) = 1$, if any exists. Since each item is valued at
    $1$ by at least one agent, $i'$ always exists when $i$ does not. Then we
    allocate item $j$ and, possibly, $j'$ by the following rules:

    \begin{enumerate}
        \item \label{item:case-1} If $i'$ exists, then allocate $j$ to $i'$.
        \item \label{item:case-2} If $i'$ and $j'$ do not exist, then allocate
            $j$ to the first agent $i'' \in \mathcal{O}$, $i'' \neq i$.
        \item \label{item:case-3} If $i'$ does not exist and $j'$ exists, then
            allocate $j'$ to the first agent $i''' \in \mathcal{O}$ with
            $v_{i'''}(j') = 1$ and $j$ to any $i'' \in N$ with $i'' \neq i'''$
            and $i'' \neq i$.\footnote{An agent $i'''$ always exists, as each item is
            valued at $1$ by at least one agent.}
    \end{enumerate}

    We must now show that in each of these three cases, if \ref{item:prop-1},
    \ref{item:prop-2} and \ref{item:prop-3} hold prior to the allocation, then
    they hold afterwards as well. As seen earlier, property \ref{item:prop-2}
    can be achieved by simply using the procedure of \citeauthor{Lipton:2004} after
    the allocations. Thus, we need only show that after applying the rules,
    but before using the decycling procedure of \citeauthor{Lipton:2004},
    \ref{item:prop-1} and \ref{item:prop-3} hold. Let $A = \langle A_1, A_2,
    \dots, A_{|N|} \rangle$ be the allocation prior to giving away $j$ and,
    possibly, $j'$ in cases \ref{item:case-1}, \ref{item:case-2} and
    \ref{item:case-3}, and $A' = \langle A'_1, A'_2, \dots, A'_{|N|} \rangle$
    the allocation afterwards.

    \paragraph{Case \ref{item:case-1}:}

    The only bundle that changes is $i'$'s bundle. That is $A'_{i'} = A_{i'}
    \cup \{j\}$. For any agent $i^* \in N$, $i^* \neq i'$ we know that either
    $v_{i^*}(A_{i^*}) \ge v_{i^*}(A_{i'})$ or $v_{i^*}(j) = 0$. Otherwise, $i^*$
    would have appeared before $i'$ in $\mathcal{O}$ and been chosen instead of
    $i'$. For any $i^*$ with $v_{i^*}(A_{i^*}) \ge v_{i^*}(A_{i'})$ we have
    $v_{i^*}(A'_{i^*}) \ge v_{i^*}(A_{i'}) = v_{i^*}(A'_{i'}) - v_{i^*}(j)$.
    Thus, since $v_{i^*}(A'_{i'}) = v_{i^*}(A_{i'})$ for all other $i^*$, $A'$
    is EF1 and \ref{item:prop-1} holds for $A'$. Additionally, with binary
    valuations each item is valued at either $0$ or $1$ and combined with EF1
    this implies that $v_{i'}(A_{i'}) \ge v_{i'}(A_{i^*}) + 1$ for all $i^* \in
    N$, $i^* \neq i'$. As a result, we have
    \[
        v_{i'}(A'_{i'}) = v_{i'}(A_{i'}) + v_{i'j} \ge v_{i'}(A_{i^*}) =
        v_{i'}(A'_{i^*})\,,
    \]
    and \ref{item:prop-3} holds for $A'$.

    \paragraph{Case \ref{item:case-2}:}

    The only bundle that changes is $i''$'s bundle. That is $A'_{i''} = A_{i''}
    \cup \{j\}$. Since the only agent that values $j$ at $1$ is $i$, the only
    way that EF1 can be broken in $A'$ is if $i$ is envious of $i''$. However,
    since \ref{item:prop-3} holds for $A$, $v_i(A_{i}) \ge v_i(A_{i''})$ and we
    have $v_{i}(A'_{i}) = v_{i}(A_{i}) \ge v_{i}(A_{i''}) = v_{i}(A'_{i''}) -
    v_{ij}$. Hence $A'$ is EF1 and \ref{item:prop-1} holds for $A'$.  Since
    $j$ is the last item in $C$, the current component does not have any
    unallocated items and the next component, if any, has no allocated items.
    Thus, \ref{item:prop-3} holds for $A'$.

    \paragraph{Case \ref{item:case-3}:}

    The bundles of $i''$ and $i'''$ both change. That is $A'_{i''} = A_{i''}
    \cup \{j\}$ and $A'_{i'''} = A_{i'''} \cup \{j'\}$. Since $i$ is the only
    agent that values $j$ at $1$, the changes to $i''$'s bundle cannot, by the
    same logic as in case \ref{item:case-2}, result in $A'$ not being EF1.
    Similarly, the changes to the bundle of $i'''$ are equivalent to the changes
    to $i'$'s bundle in case \ref{item:case-1} (while the item differs, the
    properties of the change are the same). Thus, $A'$ is EF1 and
    \ref{item:prop-1} holds for $A'$. For \ref{item:prop-3} there are two
    possibilities, if $i''' \neq i$, then $v_{i'''}(j) = 0$ and
    $v_{i'''}(A'_{i''}) = v_{i'''}(A_{i''})$. Thus, due to binary valuations and
    $A$ being EF1, we have (as in case \ref{item:case-1})
    \[
        v_{i'''}(A'_{i'''}) = v_{i'''}(A_{i'''}) + v_{i'''}(j') \ge
        v_{i'''}(A_{i^*}) = v_{i'''}(A'_{i^*})\,,
    \]
    for all $i^* \in N$, $i^* \neq i'''$, and \ref{item:prop-3} holds for $A'$
    when $i''' \neq i$. If $i''' = i$, then we have that $v_i(A'_{i''})
    = v_i(A'_{i''}) + 1$. In which case the step $v_{i'''}(A_{i^*}) =
    v_{i'''}(A'_{i^*})$ from the above equation does not hold when $i^* = i''$.
    Since $v_{i}(A_{i}) \ge v_{i}(A_{i''})$ by \ref{item:prop-3}, the following
    equation holds
    \[
        v_i(A'_i) = v_i(A_i) + v_i(j') \ge v_i(A_{i''}) + v_i(j) =
        v_i(A'_{i''})\,,
    \]
    and \ref{item:prop-3} also holds for $A'$ when $i''' = i$.

    \vspace{.4em}

    \noindent
    It can easily be verified that the three cases cover all possible situations
    and that no agent receives two conflicting items. Since all three properties
    hold before and after each step and all items are eventually given away, the
    algorithm produces complete EF1 allocations. It can also easily be verified
    that each of the described steps can be performed in polynomial time in the
    number of agents and items. Since the number of steps is bounded by the number
    of items, the algorithm runs in polynomial time.
\qed
\end{proof}

\noindent
As mentioned in \cref{sec:prelim}, valid allocations for a given conflict graph
$G$ are merely the $n$-colorings of $G$, and quite a lot of work has been done
on certain forms of fairness in the graph coloring domain. In particular,
a coloring is said to be \emph{equitable} if the number of vertices colored by
any two colors differ by at most one. For identical values, this is
equivalent to EF1. \Citet{Lih:2013} provides an overview of many results of
the equitable coloring problem for various graph classes, where the focus is on
minimizing the number of colors. In our setting, however, the number of colors
is given, and we have already seen in \cref{prop:noef1} that we need
$n > \Delta(G)$ for all conflict graphs to have EF1 instances. In this case, we
are guaranteed an equitable coloring, or, equivalently, an EF1 allocation for
identical values.\footnote{Note also that for identical values, an EF1 solution
will be MNW, and if EF1 exists, all MNW solutions are EF1.}

\begin{proposition}
    \label{prop:equit}
    If a problem instance $(N, M, V, G)$, where $|N|>\Delta(G)$, has identical
    values, i.e., $v_{ij}=v_{i'j'}$ for $i,i'\in N, j,j'\in M$, then the
    instance has an EF1 allocation, which may be found in polynomial time.%
    \qed
\end{proposition}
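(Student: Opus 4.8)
The plan is to reduce the claim to the existence and polynomial-time construction of an \emph{equitable} $n$-coloring of the conflict graph $G$ --- a proper coloring in which any two color classes differ in size by at most one --- and then to appeal to the Hajnal--Szemer\'edi theorem. Fix the common item value $v$, so that $v_{ij} = v$ for all $i \in N$ and $j \in M$. If $v = 0$ the statement is trivial: every feasible allocation is EF1, and since $n > \Delta(G)$ a greedy $(\Delta(G)+1)$-coloring already uses at most $n$ colors and can be computed in polynomial time (padding with empty bundles to an $n$-partition if necessary). So assume $v > 0$, and recall from \cref{sec:prelim} that a feasible allocation is exactly an $n$-coloring of $G$, with bundle $A_i$ the $i$th color class and $v_i(A_i) = v\,|A_i|$.

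The first step is to observe that an equitable $n$-coloring gives an EF1 allocation. If $\bigl|\,|A_i| - |A_{i'}|\,\bigr| \le 1$ for all $i, i' \in N$, then for every $i'$ with $A_{i'} \neq \emptyset$,
\[
    v_i(A_i) = v\,|A_i| \ge v\bigl(|A_{i'}| - 1\bigr) = v_i(A_{i'}) - v = v_i(A_{i'}) - \max_{j \in A_{i'}} v_{ij}\eqcomma
\]
which is precisely the EF1 inequality. (As noted in the discussion preceding the proposition, EF1 in fact forces the coloring to be equitable for identical values, so nothing is lost by this reduction.)

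It then remains to produce an equitable $n$-coloring efficiently. Since $n > \Delta(G)$ --- equivalently $n \ge \Delta(G) + 1$ --- the Hajnal--Szemer\'edi theorem guarantees that $G$ has an equitable coloring with $n$ colors, and the algorithmic versions of this result (Kierstead and Kostochka; Kierstead, Kostochka, Mydlarz and Szemer\'edi) construct one in time polynomial in the size of the instance. Degenerate situations, such as $m < n$, where assigning each item to a distinct agent is already equitable, are handled directly. Together with the previous step, this yields an EF1 allocation in polynomial time.

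The only genuine obstacle is the equitable-coloring step, but this is exactly the content of a classical, algorithmically effective theorem whose hypothesis coincides with $n > \Delta(G)$; the rest is bookkeeping. One might instead try a self-contained, item-by-item argument that keeps the bundle sizes balanced (in the style of the proof of \cref{prop:pathef1}), but a naive version breaks down when the unique smallest bundle belongs to an agent already holding a neighbor of the incoming item, forcing a rebalancing move whose analysis would essentially re-derive Hajnal--Szemer\'edi --- so deferring to that theorem is the cleaner path.
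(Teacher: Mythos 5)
Your proposal is correct and follows essentially the same route as the paper, which proves this proposition by observing that EF1 under identical values is equivalent to an equitable $n$-coloring and then invoking the Hajnal--Szemer\'edi theorem (Erd\H{o}s's conjecture) together with the polynomial-time algorithm of \citet{Kierstead:2008}. Your verification of the EF1 inequality for an equitable coloring and your handling of the degenerate cases are both sound.
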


\noindent
In the context of equitable coloring, this is a well-known result, originating
as a conjecture of Erd\H{o}s. See, e.g., the~\citeyear{Kierstead:2008} paper by
\citet{Kierstead:2008} for some of its history, as well as a simplified proof and
corresponding algorithm.

An interesting result for unconstrained allocation is that MNW leads to EF1
\cite{Caragiannis:2019}. With conflicting items, we have seen that there are
instances that do not admit an EF1 allocation, despite there being many feasible
allocations. Consequently, we cannot guarantee that MNW leads to EF1 in this
setting. Even so, it may be useful to determine whether MNW leads to EF1 when
EF1 allocations \emph{do} exist. This is not the case, as can be seen in the
following \lcnamecref{prop:mnwnotef1}.

\begin{proposition}
    If the graph $G = (M, E)$ is complete or empty, then for any problem
    instance $([n], M, V, G)$ with $n > \Delta(G)$, all MNW allocations are EF1.
    If $G$ is neither complete nor empty, there is a problem
    instance $([n], M, V, G)$ with $n \ge \Delta(G)$ for which there exists at
    least one EF1 allocation, but for which no MNW allocation is EF1.%
    \label{prop:mnwnotef1}%
\end{proposition}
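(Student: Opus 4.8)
The statement has two parts, and I would handle them separately. For the positive direction, I would first observe that when $G$ is empty, the claim is exactly the classical result of \citet{Caragiannis:2019} that MNW implies EF1 in the unconstrained setting, so nothing new is needed there. The interesting positive case is $G$ complete: then a feasible allocation gives each agent at most one item, and since $n > \Delta(G) = m - 1$, i.e.\ $n \ge m$, every set of $m$ distinct items can be handed out one-per-agent. I would argue that an MNW allocation in this case is one that maximizes the product of the chosen single-item values (with the usual tie-breaking on the number of agents receiving positive value), and that such an allocation is automatically EF1: if agent $i$ envies agent $i'$, then $A_{i'}$ is a single item $j$, and removing $j$ leaves $i'$ with the empty bundle, so $v_i(A_{i'}) - \max_{j \in A_{i'}} v_{ij} = 0 \le v_i(A_i)$. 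Hence \emph{every} feasible allocation with $G$ complete is EF1, so in particular all MNW allocations are — this makes the complete case almost trivial once the feasibility observation is in place.

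For the negative direction, the task is: given any $G$ that is neither complete nor empty, construct $n \ge \Delta(G)$, valuations $V$, such that an EF1 allocation exists but no MNW allocation is EF1. The key structural fact I would exploit is that a non-complete, non-empty graph contains an induced path on three vertices, say $a - b - c$ with $a \not\sim c$ (equivalently, $G$ has both an edge and a non-edge, and a short argument connects these to a $P_3$; if $G$ is disconnected one can also just take an edge in one component and isolated-from-each-other vertices). I would then design valuations concentrated on $\{a,b,c\}$ (all other items valued $0$ by everyone, so they can be parked on agents with no conflict — possible since $n > \Delta(G)$ — exactly as in the proof of \cref{prop:pathef1}). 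The plan is to pick identical valuations on $a,b,c$ — something like $v(a) = v(c) = 1$ and $v(b) = 2$ — arranged so that the Nash-welfare-maximizing choice forces the high-value pair $\{a,c\}$ (which is feasible, being a non-edge) onto a single agent, giving that agent value $2$ while some other agent gets only $b$, worth $2$, or only a zero bundle; and then check that this MNW outcome fails EF1 because an agent holding a $0$-bundle (or the small bundle) envies the $\{a,c\}$-holder even after one item is removed. Meanwhile I must exhibit a genuinely EF1 feasible allocation — e.g.\ give $b$ to one agent and $a,c$ to two different agents — and verify it is EF1. The delicate point is calibrating the numbers and the agent count so that (i) splitting $a$ and $c$ strictly lowers the Nash welfare (forcing MNW to bundle them), yet (ii) the bundled allocation is not EF1, and (iii) an EF1 allocation nonetheless exists; I expect some case-tuning here depending on whether $G$ is connected, on the exact value of $\Delta(G)$, and on how the zero-valued padding items interact with the three special vertices.

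The main obstacle, and where I would spend the most care, is the negative direction's \emph{uniformity over all} non-complete non-empty $G$: I need one construction template that simultaneously (a) locates a usable induced $P_3$ (or edge/non-edge pair) in an arbitrary such graph, (b) handles the leftover items — which may themselves be highly constrained — by the zero-value parking trick, relying on $n > \Delta(G)$ to guarantee a free agent at each step, and (c) makes the Nash-welfare comparison go through even though agents receiving padding items still have value $0$ (so the product is $0$ unless the positive mass is spread to enough agents — I must make sure the MNW maximizer is forced to concentrate $\{a,c\}$ rather than escaping via some alternative spreading of value). A clean way to sidestep (c) is to also give every agent a private, conflict-free item worth a tiny $\varepsilon$ so that all bundles have positive value and Nash welfare is a smooth function of the $\{a,b,c\}$ allocation; then the comparison reduces to a finite, explicit optimization over how $a,b,c$ are distributed, which I can resolve by inspection. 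Once the template is fixed, verifying EF1 of the witness allocation and non-EF1 of every MNW allocation is routine arithmetic of the same flavor as \cref{exm:noef1-n>delta}.
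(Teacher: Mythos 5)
Your positive direction is correct and matches the paper: for empty $G$ it is exactly the result of \citet{Caragiannis:2019}, and for complete $G$ with $n>\Delta(G)=m-1$ every feasible bundle is a singleton or empty, so \emph{every} feasible allocation, in particular every MNW allocation, is EF1.

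The negative direction has the right skeleton (locate an induced $P_3$ or $\overline{P_3}$, concentrate value there, park the rest), but the core construction fails: identical valuations across agents cannot produce the required counterexample. With $v(a)=v(c)=1$, $v(b)=2$ on the path $a\mathrel{-}b\mathrel{-}c$ and two agents, the only feasible complete split of $\{a,b,c\}$ is $\{a,c\}\mid\{b\}$; both bundles are worth $2$ to both agents, so the MNW allocation is envy-free, not merely EF1. Tuning identical values cannot rescue this: to make $\{a,c\}\mid\{b\}$ violate EF1 you need $v(b)<\min(v(a),v(c))$, but then \emph{no} feasible two-agent allocation is EF1 (killing the ``EF1 exists'' clause), and with three or more agents MNW spreads $a,b,c$ to distinct agents (any concentration sends the Nash product to $0$) and is again EF1. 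In the $\overline{P_3}$ case the components are cliques, where MNW with identical valuations provably yields EF1 \citep{Biswas:2018}. Asymmetry between the two special agents is therefore essential; the paper uses, e.g., $v_1=(2,2,3)$ and $v_2=(6,5,6)$ on $P_3$, so that $\langle\{1,3\},\{2\}\rangle$ is the unique Nash maximizer ($5\cdot 5>2\cdot 12$) yet agent~$2$ still envies $\{1,3\}$ after removing one item, while the reversed assignment is EF1. Two further points you would need to repair: you cannot add private $\varepsilon$-items, since $M$ is fixed as the vertex set of the given $G$ --- the paper instead gives each of the remaining $n-2$ agents positive value only on a private set $S'$ of $n-2$ \emph{existing} items, which forces any positive-Nash-welfare allocation to reduce to the split of $S$ between the two special agents; and the agent count requires the case split $n=\Delta(G)$ when $|M|=\Delta(G)+1$ (else $n=\Delta(G)+1$) so that $|M\setminus S|\ge n-2$, which is precisely why the statement only promises $n\ge\Delta(G)$ rather than $n>\Delta(G)$.
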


\begin{proof}
    First of all, if $G = K_k$ for some $k \le n$, each bundle can only contain
    at most one item. Obviously, all allocations must then be EF1. Note
    that the case of $k > n$ has no feasible allocations. When $E = \emptyset$,
    the problem is equivalent to ordinary fair allocation (without conflicts).
    For this problem, \citeauthor{Caragiannis:2019} showed that all MNW
    allocations are EF1~\citep{Caragiannis:2019}.

    If neither $G = K_k$ nor $E = \emptyset$, then $G$ contains at least one
    subset of three vertices that form an induced subgraph of either $P_3$ or
    $\overline{P_3}$ (see \cref{fig:mnw-ef1-graph-classes}). If $G = P_3$ or $G
    = \overline{P_3}$, with two agents, the valuations in
    \cref{fig:mnw-ef1-valuations} cause the maximum Nash welfare to be $5$.
    However, the only feasible allocation with a Nash welfare of $5$ is $\langle
    \{1, 3\}, \{2\} \rangle$.  This allocation is obviously not EF1, as agent
    $2$ envies agent $1$, even after removing one item. EF1 allocations do exist
    for these instances, with allocations $\langle \{2\}, \{1, 3\} \rangle$ and
    $\langle \{1\}, \{2, 3\} \rangle$ being EF1 for $P_3$ and $\overline{P_3}$,
    respectively.

    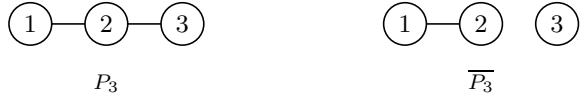
\begin{figure}%
        \hfill
        \begin{tikzpicture}
            \draw[graph, font=\normalsize]
                (0, 0) node (P3-1) {1}
                (1, 0) node (P3-2) {2}
                (2, 0) node (P3-3) {3}
                (P3-1) edge (P3-2)
                (P3-2) edge (P3-3)
                ;

            \node at (1, -.75) {$\vphantom{\overline{P_3}}P_3$};
            \path (0,.75) (0,-1.2); 
        \end{tikzpicture}%
        \hfill
        \begin{tikzpicture}
            \draw[graph, font=\normalsize]
                (4.5, 0) node (P3C-1) {1}
                (5.5, 0) node (P3C-2) {2}
                (6.5, 0) node (P3C-3) {3}
                (P3C-1) edge (P3C-2)
                ;

            \node at (5.5, -.75) {$\overline{P_3}$};

            \path (5,.75) (5,-1.2); 
        \end{tikzpicture}%
        \hfill\mbox{}%
        \caption{The graph classes $P_3$ and $\overline{P_3}$}
        \label{fig:mnw-ef1-graph-classes}
    \end{figure}

    \begin{figure}
        \hfill
        \begin{tabular}{*4c}
            \toprule
            $i$ & $v_{i1}$ & $v_{i2}$ & $v_{i3}$ \\
            \midrule
            1 & 2 & 2 & 3 \\
            2 & 6 & 5 & 6 \\
            \bottomrule
        \end{tabular}%
        \hfill
        \begin{tabular}{*4c}
            \toprule
            $i$ & $v_{i1}$ & $v_{i2}$ & $v_{i3}$ \\ \midrule
            1 & 2 & 1 & 3 \\
            2 & 6 & 5 & 6 \\
            \bottomrule
        \end{tabular}%
        \hfill\mbox{}%
        \caption{Valuations for two agents, where MNW is not EF1, for $G = P_3$
        (left) and $G = \overline{P_3}$}
        \label{fig:mnw-ef1-valuations}
    \end{figure}

    For any graph with both $G \neq K_k$ and $E \neq \emptyset$, we can use the
    valuations for two agents on $P_3$ and $\overline{P_3}$ to construct an
    instance where MNW does not result in EF1. First, let $n = \Delta(G)$ if
    $\Delta(G) + 1 = |M|$ and $n = \Delta(G) + 1$ otherwise. Choose two agents
    and a set $S$ of three items, where the induced subgraph $G[S]$ is either
    $P_3$ or $\overline{P_3}$. Let the two agents value the items in $S$ as
    shown in \cref{fig:mnw-ef1-valuations}, and let them value all other items
    at $0$. Since $G$ contains at least $\Delta(G) + 1$ vertices and $|S| = 3$,
    either $n = \Delta(G) = |M| - 1$ and $|M \setminus S| = \Delta(G) + 1 - 3 =
    n - 2$ or $n = \Delta(G) + 1 \le |M| - 1$ and $|M \setminus S| \ge \Delta(G)
    + 2 - 3 = n - 2$. Consequently, there is always at least $n - 2$ items in $M
    \setminus S$. Let $S'$ be a set of $n - 2$ items from $M \setminus S$ and
    let each of the $n - 2$ remaining agents assign a non-zero value to each
    item in $S'$ and a value of $0$ to the items in $M \setminus S'$.

    The MNW of the constructed instance is non-zero, as it is possible to
    allocate at least one item with non-zero value to each agent. For example,
    the items in $S$ can be allocated to the two agents that assign them a
    non-zero value, in a way that respects the item conflicts and gives each
    agent at least one item from $S$. The remaining $n - 2$ agents can then be
    allocated one item each from $S'$. When $n = \Delta(G)$, this is a feasible
    complete allocation, since $M = S \cup S'$. When $n = \Delta(G) + 1$, the
    instance can contain additional items that are all worth zero to all agents,
    i.e., $M \setminus (S \cup S') \neq \emptyset$. Since each item has at most
    $\Delta(G)$ conflicts, there exists for each item at least one agent that
    has not been allocated a conflicting item, no matter how the other items are
    allocated. Thus, any partial allocation that allocates all goods in $S$ and
    $S'$ can trivially be extended to a complete allocation. Consequently, the
    MNW is only dependent on how $S$ and $S'$ are allocated. Any allocation must
    allocate one item from $S'$ to each of the $n - 2$ agents in order to have a
    non-zero Nash welfare. Each item in $S$ must be given to one of the two
    agents that give them a non-zero value; otherwise, we could increase the
    Nash welfare by doing so. Thus, any MNW allocation must maximize the Nash
    welfare for the two agents on the induced subgraph $G[S]$, and,
    consequently, the allocation cannot be EF1. In the same fashion, a non-MNW,
    EF1 allocation may be constructed by allocating one item in $S'$ to each of
    the $n - 2$ agents and allocating the items in $S$ to the two agents in a
    way that neither envies the other after removing the other's most valuable
    item.%
\qed
\end{proof}

\noindent
\Cref{prop:mnwnotef1} has implications for fair allocation under cardinality
constraints. \Citeauthor{Biswas:2018} have already pointed out that there exist
instances of this problem where MNW does not lead to EF1~\citep{Biswas:2018}.
However, they do not provide any classification of such cases. By using similar
valuations as those used to prove \cref{prop:mnwnotef1}, one can construct
examples whenever there are at least two categories, one of which contains two
or more items and has a threshold of~1. Then the same valuation function
construction as for $\overline{P_3}$ can be used, picking two items from the
category with a threshold of $1$ and any item from another category.

While \cref{prop:mnwnotef1} shows that MNW does not lead to EF1 in general, it
may still be possible that for some combinations of graphs, number of agents and
restricted valuation functions, MNW does lead to EF1. \citet{Biswas:2018} showed
that for cardinality constraints (and other related matroid constraints), MNW
leads to EF1 if all agents have identical valuations. As fair allocation of
conflicting items is equivalent to cardinality constraints when all components are
cliques, MNW leads to EF1 for instances with this class of conflict graph and
identical valuations.\footnote{Note that this argument does not work for all
the instances that were reduced to cardinality constraints in \cref{prop:ef1}.
When the components of the conflict graph are not cliques, there may be
additional MNW allocations for the item conflict instance that are \emph{not}
EF1; indeed, an allocation that is MNW under cardinality constraints might
conceivably, though feasible, not even \emph{be} MNW under item conflicts.}

\subsection{Proportionality}
\label{sec:prop}

While equitable coloring is equivalent to EF1 with identical values, the
\emph{weighted} version has \emph{not} traditionally been a more general EF1
coloring, but is rather linked to the proportion any color receives of the total
weight. That is, each node is weighted, and a weighted \emph{almost-equitable}
coloring is where each of the $n$ colors covers nodes with a total weight of at
least $W/n$, where $W$ is the total of all the weights in the graph. This, of
course, is equivalent to \emph{proportional} allocation, where the agents have
identical valuations (though the items do not necessarily have identical
values). Some results apply even for general valuations (cf.\@
\cref{prop:random}).

\begin{algorithm}
    \label{alg:random}
    This is the randomized coloring procedure with symmetry breaking, described by
    \citet{Pemmaraju:2008}. Specifically, it is the version that applies to
    weighted almost-equitable colorings.
    \begin{enumerate}
    \item Let $\pi$ be a permutation of $M$, selected uniformly at random.
    \item Tentatively allocate each item to an agent, selected uniformly at
        random.
    \label{ln:randpick}%
    \item If an agent received conflicting items $j$ and $j'$, where $\pi_j<\pi_{j'}$,
    deallocate $j'$.
    \label{ln:partial}%
    \item Allocate remaining items arbitrarily, avoiding conflicts.
    \label{ln:arbitrary}%
    \end{enumerate}
\end{algorithm}

\noindent
Note that in step~\ref{ln:partial}, conflicts are based on the initial,
tentative allocation, before any deallocations have taken place. Items $j$ and
$j'$ are considered to be in conflict even if $j$ has already been deallocated,
so the order of deallocation is irrelevant.

It should be obvious that this algorithm runs in polynomial time. What is more,
it provides each agent with an expected fraction of $1 - 1/e$, or almost 2/3, of
a proportional share.

\begin{proposition}
    Given a problem instance $(N, M, V, G)$, where $|N|>\Delta(G)$,
    \Cref{alg:random} finds an approximate proportional allocation with item
    conflicts, where each agent can expect to get at least a fraction of $1 -
    1/e$ of its share.%
    \label{prop:random}
\end{proposition}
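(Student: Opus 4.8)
The plan is to lower-bound, for a fixed agent $i$, the expected value $\mathbb{E}[v_i(A_i)]$ by discarding whatever $i$ receives in the final arbitrary completion (step~\ref{ln:arbitrary}), since that only adds non-negative value. It then suffices to estimate, for each item $j$, the probability $p_j$ that $j$ is tentatively assigned to $i$ in step~\ref{ln:randpick} and is \emph{not} deallocated in step~\ref{ln:partial}: by additivity and linearity of expectation, $\mathbb{E}[v_i(A_i)] \ge \sum_{j\in M} v_{ij}\, p_j$, so a uniform bound $p_j \ge (1-1/e)/n$ yields $\mathbb{E}[v_i(A_i)] \ge (1-1/e)\,v_i(M)/n$, i.e.\ a $(1-1/e)$-fraction of $i$'s proportional share.

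To compute $p_j$, I would condition first on the random permutation $\pi$ and on the event that $j$ goes to $i$ (probability $1/n$, independent of $\pi$). Given this, $j$ is deallocated exactly when some neighbour $j'$ of $j$ in $G$ with $\pi_{j'} < \pi_j$ also lands on $i$; since the tentative assignments in step~\ref{ln:randpick} are mutually independent and uniform, each such neighbour does so independently with probability $1/n$, so $j$ survives with probability $(1-1/n)^{k}$, where $k$ is the number of neighbours of $j$ preceding $j$ in $\pi$. Averaging over $\pi$: restricted to the $d+1$ items $\{j\}\cup N_G(j)$ (with $d=\deg_G(j)$), a uniformly random $\pi$ induces a uniformly random order, so the rank of $j$ within this set is uniform on $\{1,\dots,d+1\}$ and hence $k$ is uniform on $\{0,\dots,d\}$. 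Therefore
\[
    p_j \;=\; \frac{1}{n}\cdot\frac{1}{d+1}\sum_{k=0}^{d}\Bigl(1-\frac1n\Bigr)^{k}
        \;=\; \frac{1}{d+1}\left(1-\Bigl(1-\frac1n\Bigr)^{d+1}\right).
\]

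It remains to bound this below by $(1-1/e)/n$. Set $g(x)=\frac1x\bigl(1-(1-1/n)^{x}\bigr)$; writing $1-(1-1/n)^{x}=\ln\!\tfrac{1}{1-1/n}\int_0^x(1-1/n)^t\,dt$ exhibits $g(x)$ as a positive constant times the average of the decreasing function $(1-1/n)^t$ over $[0,x]$, so $g$ is decreasing on $x\ge 1$. The hypothesis $|N|>\Delta(G)$ gives $d+1\le\Delta(G)+1\le n$, so $p_j=g(d+1)\ge g(n)=\frac1n\bigl(1-(1-1/n)^n\bigr)\ge\frac1n(1-1/e)$, using the standard inequality $(1-1/n)^n<e^{-1}$. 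This is precisely where $n>\Delta(G)$ enters: it caps $d+1$ at $n$, and (by greedy extension, each unallocated item having fewer than $n$ conflicts) it also guarantees step~\ref{ln:arbitrary} can always complete the feasible partial allocation left after step~\ref{ln:partial}. I expect the only delicate point to be the two-stage conditioning argument — cleanly decoupling $\pi$ from the random assignments and correctly identifying the distribution of the number of earlier conflicting neighbours of $j$ — after which the remaining estimate is just the elementary monotonicity of $g$ and $(1-1/n)^n<1/e$.
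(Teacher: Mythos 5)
Your proposal is correct and follows essentially the same route as the paper's proof (itself adapted from Lemma~5 of \citet{Pemmaraju:2008}): lower-bound each agent's expected bundle value via the per-item survival probability $p_j=\frac{1}{d+1}\bigl(1-(1-1/n)^{d+1}\bigr)$, using that the number of earlier-ranked neighbours is uniform on $\{0,\dots,d\}$, and then apply $d+1\le n$ together with $(1-1/n)^n<1/e$. Your explicit monotonicity argument for $g(x)=\frac1x\bigl(1-(1-1/n)^x\bigr)$ just spells out the inequality the paper states directly.
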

\begin{proof}
    Adapted from Lemma~5 of \citet{Pemmaraju:2008}. Specifically, their
    result uses exactly $\Delta(G)+1$ colors, whereas we need to permit any
    $n>\Delta(G)$. We wish to describe $p_{ij}$, the probability that item $j$
    is allocated $i$ after step~\ref{ln:partial}. (Note that this value will be
    identical for all $i$.) This will tell us the expected value of any bundle
    at that point, a value that can only increase in step~\ref{ln:arbitrary}.
    That is, if we can show that $p_{ij}\geq(1-1/e)/n$ for every item $j$, each
    bundle will---in expectation---be valued at no less than $(1-1/e)\cdot
    v_i(M)/n$ by agent $i$. In other words, each agent can expect a fraction of
    $1-1/e$ of its share.

    The probability that $j$ is allocated to $i$ in step~\ref{ln:randpick} is
    $1/n$. If exactly $k$ of its neighbors are ranked earlier by $\pi$, the
    probability that $j$ is not removed again in step~\ref{ln:partial} is
    $(1-1/n)^k$, meaning that we would have $p_{ij}=1/n\,\cdot\,(1-1/n)^k$. The
    probability of exactly $k$ neighbors are ranked before $j$ is
    $1/(\delta(j)+1)$, where $\delta(j)$ is the degree of node $j$ in $G$, which
    gives us:
    \begin{IEEEeqnarray*}{rCl}
        p_{ij}
    &=&
        \frac{1}{n}\sum_{k=0}^{\delta(j)}\frac{1}{\delta(j)+1}
        \cdot
        \biggl(1-\frac{1}{n}\biggr)^k
    =
        \frac{1}{\delta(v)+1}
        \biggl(1 -
            \biggl(1 -
                \frac{1}{n}
            \biggr)^{\delta(v)+1}
        \biggr)
    \\
    &\geq&
        \frac{1}{n}
        \biggl(1 -
            \biggl(1 -
                \frac{1}{n}
            \biggr)^{n}
        \biggr)
    \geq
        \frac{1}{n}
        \biggl(
            1 - \frac{1}{e}
        \biggr)
    \end{IEEEeqnarray*}
    This, then, produces the desired bound on the individual expectations.
\qed
\end{proof}

\noindent
Approximate proportionality implies approximate MMS, that is, if you receive a
fraction $\alpha$ of your proportional share, you receive \emph{at least} a
fraction $\alpha$ of your maximin share. Note, however, that
\cref{prop:random} only bounds the \emph{individual} expectations. That is,
this is not the expectation of the proportionality or MMS \emph{guarantee},
i.e., the lowest fraction received by any agent.

The following result provides a deterministic bound for such a guarantee for
proportionality, in a scenario where all agents have identical valuation
functions.\footnote{In this case, MMS allocations are guaranteed to exist.}

\begin{proposition}
    For a problem instance $(N, M, V, G)$, where $|N|>\Delta(G)$, if agents have
    identical valuations $v:M\to[0,1]$, there exists an $\alpha$-approximate
    proportional allocation with
    \[
    \alpha\geq 1 - 1/e - cn\sqrt{\ln(n)/W}\eqcomma
    \]
    for any $c>\sqrt{8}$, where $W=v(M)$.%
    \label{prop:randprop}
\end{proposition}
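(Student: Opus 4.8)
The plan is to analyze \cref{alg:random} directly via a concentration argument, in the spirit of the result of \citet{Pemmaraju:2008}, and then invoke the probabilistic method. Fix an agent $i$ and let $X_i$ denote the value $v(A_i)$ of its bundle after step~\ref{ln:partial}; since step~\ref{ln:arbitrary} only adds items to bundles --- and can always be performed because $n>\Delta(G)$ --- a lower bound on $X_i$ is also a lower bound on what $i$ receives in the final allocation. By the computation in the proof of \cref{prop:random}, specialized to identical valuations (so $v_{ij}=v_j$), we have $\mathbb{E}[X_i]\ge(1-1/e)\,W/n$. It therefore suffices to show that, with positive probability, $X_i\ge(1-1/e)\,W/n - c\sqrt{W\ln n}$ holds \emph{simultaneously} for all $i$: indeed $c\sqrt{W\ln n}=\bigl(cn\sqrt{\ln n/W}\bigr)\cdot(W/n)$, so this inequality is precisely $v(A_i)\ge\alpha\cdot v(M)/n$ with $\alpha = 1-1/e-cn\sqrt{\ln n/W}$.

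To obtain per-agent concentration, I would set up a Doob (exposure) martingale for $X_i$ that reveals the items \emph{in order of the random permutation} $\pi$: at step $t$ we reveal the identity $\sigma(t)$ of the $t$-th item together with its tentative assignment. After step $t$ we know exactly whether $\sigma(t)$ is kept in bundle $i$, which changes the conditional expectation by at most $v_{\sigma(t)}\le 1$; moreover, learning that $\sigma(t)$ lands in bundle $i$ forbids its not-yet-revealed neighbours from that bundle, and each such neighbour $j'$ has conditional probability at most $1/n$ of otherwise ending up there, changing the conditional expectation by at most a further $\tfrac1n\sum_{j'\sim\sigma(t)}v_{j'}$. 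Hence the $t$-th martingale difference is bounded by $c_t = v_{\sigma(t)}+\tfrac1n\sum_{j'\sim\sigma(t)}v_{j'}$. The crucial point of exposing items in rank order is that $\sigma$ ranges over all items, so $\sum_{t}c_t^2 = \sum_{j\in M}\bigl(v_j+\tfrac1n\sum_{j'\sim j}v_{j'}\bigr)^2$ is a \emph{deterministic} quantity; using $(a+b)^2\le 2a^2+2b^2$, then $v_j^2\le v_j$ (valuations lie in $[0,1]$), Cauchy--Schwarz together with $|N(j)|\le\Delta(G)$, and finally $\Delta(G)<n$, this sum is at most $2W+\tfrac{2\Delta(G)^2}{n^2}W\le 4W$.

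With $\sum_t c_t^2\le 4W$ in hand, the Azuma--Hoeffding inequality gives $\Pr\bigl[X_i\le\mathbb{E}[X_i]-t\bigr]\le\exp(-t^2/8W)$. Taking $t=c\sqrt{W\ln n}$ yields a failure probability of at most $n^{-c^2/8}$ for a single agent, and a union bound over the $n$ agents gives total failure probability at most $n^{1-c^2/8}$, which is strictly below $1$ precisely when $c>\sqrt{8}$ (the case $n=1$ being trivial). Hence some run of the algorithm produces an allocation in which every agent receives at least $(1-1/e)\,W/n - c\sqrt{W\ln n} = \alpha\,v(M)/n$, establishing the claimed existence.

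I expect the delicate step to be justifying the martingale-difference bound $c_t = v_{\sigma(t)}+\tfrac1n\sum_{j'\sim\sigma(t)}v_{j'}$ rigorously --- in particular, verifying that revealing the rank and assignment of $\sigma(t)$ perturbs the survival probabilities of the remaining items only through $\sigma(t)$'s neighbours, and by at most $1/n$ per neighbour. A cruder analysis via McDiarmid's inequality applied to the i.i.d.\ choices $(\pi,a_j)$ gives bounded differences of order $\sum_{j'\sim j}v_{j'}$ rather than $\tfrac1n\sum_{j'\sim j}v_{j'}$, inflating $\sum c_t^2$ to $\Theta(\Delta(G)^2 W)$ and costing an extra factor of $\Delta(G)$ in the deviation term; recovering the stated bound really does require exposing items in permutation order, so that each ``future'' neighbour is only in expectation assigned to bundle $i$.
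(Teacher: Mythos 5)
Your proposal is correct and follows essentially the same route as the paper's proof: an exposure martingale along the permutation order with difference bounds $c_j \le v_j + \tfrac1n\sum_{j\prec k} v_k$, the estimate $\sum_j c_j^2 \le 4W$ (the paper gets this slightly differently, via $c_j < 2$ and $\sum_j c_j < 2W$ using $\Delta(G)<n$, rather than your $(a+b)^2\le 2a^2+2b^2$ expansion, but both give $4W$), then Azuma and a union bound over the $n$ agents with $c>\sqrt{8}$. The ``delicate step'' you flag is exactly the one the paper also treats informally, so no substantive gap separates the two arguments.
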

\begin{proof}
    Based on Lemma~6 of \citet{Pemmaraju:2008}, adapted from using exactly
    $\Delta(G)+1$ colors to permitting any $n>\Delta(G)$. Let $S$ be the bundle
    assigned to some agent by \cref{alg:random}. The strategy is to show that
    for any fixed $c>0$,
    \begin{equation}\label{eq:prbound}
        \Pr\biggl[
            \,
            v(S) <
            \biggl(1-\frac{1}{e}\biggr)\cdot\frac{W}{n}-c\sqrt{W\ln n}
            \,
        \biggr]
        \leq
        \frac{1}{n^{c^2/8}}
        \eqdot
    \end{equation}
    The probability that the leftmost inequality holds for \emph{at least one}
    of the bundles is then, by the union bound, at most $n/n^{c^2/8} =
    n^{1-c^2/8}$. Choosing $c>\sqrt{8}$ leads to a positive probability that
    \emph{no} agents receives less than $(1-1/e)W/n - c\sqrt{W\ln n}$, which
    means that such an allocation must exist. Dividing by $W/n$ yields the
    approximation ratio $\alpha$.

    For simplicity, we renumber the items according to $\pi$,
    and define $j\prec k$ to mean that $j<k$ and $jk\in E$.
    Let $a:M\to N$ represent the tentative allocation after
    step~\ref{ln:randpick}. We now consider differences in conditional
    expectations along $\pi$, for use in Azuma's martingale inequality:
    \[
        c_j = \bigl|
        \Ex[v(S) \mid a_1 = i_1,\dots,a_j=i_j]
        -
        \Ex[v(S) \mid a_1 = i_1,\dots,a_j=i'_j]
        \bigr|
    \]
    There may be no difference (i.e., we may have $c_j=0$), as the change need
    not affect our bundle $S$. If, however, item $j$ is added or removed,
    an expected proportion of $1/n$ of its lower-priority conflicting items will
    consequently be removed or added, respectively. Thus we can bound the
    differences by:%
    \footnote{\Citeauthor{Pemmaraju:2008} have
    a factor of $2$ rather than $1$ in front of the second term, leading to
    $c>\sqrt{18}$, rather than $c>\sqrt{8}$~\citep{Pemmaraju:2008}. This is
    presumably to account for a fraction of $1/n$ nodes lost as well as
    gained, leading to an overly conservative bound.}%
    \[
        c_j \leq
        \biggl|
        \mkern2mu
        v_j - \frac{1}{n}\sum_{\mathclap{j\prec k}} v_k
        \mkern2mu
        \biggr| \leq
        v_j + \frac{1}{n}\sum_{\mathclap{j\prec k}} v_k
    \]
    Because values are in $[0,1]$, we have $c_j\leq 1 + \Delta(G)/n < 2$. We
    also have
    \[
        \sum_{j=1}^m c_j
        \leq
        \sum_{j=1}^m v_j +
        \frac{1}{n}\sum_{j=1}^m
            \sum_{
            j\prec k
        } v_k \leq
        W + \frac{\Delta(G)}{n} \cdot \sum_{j=1}^m
        v_k
        <
        2W
        \eqcomma
    \]
    and $\sum_{j=1}^m c^2_j < 2\sum_{j=1}^m c_j < 4W$. Azuma's inequality then
    produces
    \[
        \exp\left(\frac{-t^2}{2\sum_{j=1}^mc_j^2}\right)
        \leq
        \exp\left(\frac{-c^2W\ln n}{8W}\right)
        =
        \frac{1}{n^{c^2/8}}
        \eqcomma
    \]
    which gives us \eqref{eq:prbound}.%
\qed
\end{proof}

\noindent
While the bound given by \cref{prop:randprop} may be rather weak for many
practical instances, it does show that, given the assumptions about $V$, for a
given number of agents, the guaranteed fraction of proportionality improves as
the number of items (or, rather, the sum of their values) grows. That is,
\[
    \alpha(W) = 1 - 1/e - o(1)\eqcomma
\]
where $W=v(M)$. (\Citet{Pemmaraju:2008} also present two other similar
guarantees, but they are given without proof, and so are not as straightforward
to generalize to an arbitrary number of agents.)

\subsection{Maximin Shares}
\label{sec:mms}

A common strategy for unconstrained MMS approximation is to use
algorithms that build on the basic concept of bag filling, where items are
placed in a bag one by one. If the
bag reaches a certain value for some remaining agent, it is allocated
as this agent's bundle. Then, the algorithm continues drawing items into a new
bag, repeating the procedure until there are no remaining agents. It can easily
be shown that a $(1 - \alpha)$-MMS approximation can be achieved using this
method when no item is worth more than $\alpha\mu_i$.

Bag filling is usually combined with preprocessing procedures, in order to
reduce the impact of high-valued items on the achievable $\alpha$. A common
procedure is to carefully construct a small set of high-valued items that
combined are worth at least $\alpha\mu_i$ to some agent $i$. If the MMS of each
agent is not reduced in the instance created by removing $i$ along with the
items in the constructed set, i.e., assigning the set as $i$'s bundle, then one
can solve the problem for the reduced instance instead. A simple variant is to
allocate individual items worth at least $\alpha\mu_i$.  Other somewhat typical
steps are conversion to so-called \textit{ordered instances}
\cite{Bouveret:2016} and creating initial bundles of high-valued items for the
bag filling to fill with low-valued items \citep[see, e.g.,][]{Garg:2020}.

For instances with conflicting items, the conflict graph and especially its
degree, $\Delta(G)$, limit the viability of these strategies. Most are simply
not usable as they would cause the creation of infeasible bundles and their
correctness proofs fail. Other strategies can be modified to work with the
conflict graph, albeit at the cost of reduced efficiency. Bag filling is one of
these, where instead of using $M$ as the source of items, we select one of a set
of feasible bundles as the source for each bag to be filled, as described in the
following \lcnamecref{alg:bag-filling}.

\begin{algorithm}\label{alg:bag-filling}
    Takes as input a problem instance $(N, M, V, G)$, a subset $N' \subseteq N$,
    a feasible partition $A = \langle A_1, A_2, \dots, A_\ell \rangle$ of some
    $M' \subseteq M$, where $\ell \ge 1$, and a limit $x_i > 0$ for each $i
    \in N'$.
    Repeating the following steps until no suitable $A_k$ is available will
    allocate bundles to some subset of agents $N'' \subseteq N'$, with each
    bundle worth at least $x_i$ to the agent $i$ that receives it.

    \begin{enumerate}
        \item Let $A_k\in A$ be a bundle with $v_i(A_k) \geq x_i$ for some $i\in
            N'$.
        \item Let $B$ be an empty bag. \label{item:bundle}%
        \item Add items from $A_k$ to $B$, one at a time, until $v_i(B) \ge x_i$
            for some $i \in N'$. \label{item:bag-filling}%
        \item Give $B$ to some $i \in N'$ with $v_i(B) \ge x_i$, and let $N' =
            N' \setminus \{i\}$ and $A_k = A_k \setminus B$.%
            \label{item:allocation}%
    \end{enumerate}
\end{algorithm}

\noindent
Separating the items into several feasible sets prevents the creation of
unfeasible bundles. However, it will cause some items to stagnate, as some sets
may be valued almost, but not quite, $x_i$ and will therefore not be used in the
algorithm. Consequently, the guaranteed value each agent will receive is
generally lower with multiple sets, if all agents are to receive a
bundle. \Cref{lem:bag-filling} provides a general formula to calculate the
possible guarantees, which depends on the value of $M'$, and the number of
source bundles, $\ell$, and agents, $|N'|$.

\begin{lemma}\label{lem:bag-filling}
    Given a problem instance $(N, M, V, G)$, a subset of agents $N' \subseteq N$
    and a feasible partition $A = \langle A_1, A_2, \dots, A_\ell \rangle$ of
    some subset of items $M' \subseteq M$, where $\ell \ge 1$,
    \cref{alg:bag-filling} allocates a feasible bundle to each agent in $N'$ if
    $v_{ij} \le x_i$ and $x_i \le v_i(M')/(\ell + 2(|N'| - 1))$ for all $j \in
    M', i \in N'$. Items that are allocated are taken from at most $\min(\ell,
    |N'|)$ sets in $A$ and the procedure runs in polynomial time.
\end{lemma}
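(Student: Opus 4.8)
The plan is to analyze \cref{alg:bag-filling} by tracking how much value each source bundle $A_k$ can "waste" before it becomes unusable, and how much total value remains available. First I would establish the easy structural claims: every bundle handed out is a subset of some $A_k$, hence an independent set in $G$, so all bundles are feasible; and since each iteration either exhausts or abandons a source bundle only after it has been used, and abandoned bundles are never revisited, items are drawn from at most $\min(\ell,|N'|)$ of the $A_k$. Polynomial running time is immediate since each iteration allocates to a distinct agent, so there are at most $|N'|$ iterations, each doing $O(|M'|\cdot|N'|)$ work.

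The heart of the argument is showing that the loop does not terminate prematurely, i.e., that it allocates to \emph{all} of $N'$ rather than some proper subset $N''$. Suppose for contradiction the loop halts with a nonempty set $R\subseteq N'$ of unserved agents and no usable source bundle remaining, i.e., no current $A_k$ satisfies $v_i(A_k)\ge x_i$ for any $i\in R$. Fix any $i\in R$. I would bound $v_i$ of the leftover items from two directions. On one hand, the loop has run $|N'|-|R|$ times, each time removing from circulation a bag $B$ worth \emph{less than} $x_i + \max_{j}v_{ij} \le 2x_i$ to agent $i$ (the bag crossed the threshold $x_i$ by adding one item of value at most $x_i$, and $i$ need not even be the agent who received it — but $v_i(B)$ is still at most $x_i$ plus the value $i$ assigns to the last item added, which is $\le x_i$). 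Actually the cleaner bound: the bag is given away only once it first reaches value $\ge x_i$ for \emph{some} agent, so for that agent its value is $< x_i + \max_j v_{ij} \le 2x_i$; I then need the analogous bound measured by $i$'s valuation, which requires a short argument that $v_i(B) < 2x_i$ as well, using $v_{ij}\le x_i$ for all $j$. On the other hand, each of the (at most $\ell$) source bundles that got \emph{touched} may have a residual worth up to $x_i$ to $i$ that is now stranded (it dropped below $x_i$), and each untouched source bundle is worth $< x_i$ to $i$ by the termination hypothesis — again at most $x_i$ per bundle. Summing: $v_i(M') < 2x_i(|N'|-|R|) + x_i\cdot\ell \le 2x_i(|N'|-1) + x_i\ell$, using $|R|\ge 1$. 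This contradicts the hypothesis $x_i \le v_i(M')/(\ell + 2(|N'|-1))$.

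The step I expect to be the main obstacle is the bookkeeping in the previous paragraph: being careful about which source bundles have been "touched" versus left pristine, and about the fact that the agent triggering a bag's release need not be the agent we are accounting against. In particular one must make sure each unit of $v_i(M')$ is counted at most once — leftover-in-touched-bundles, given-away-bags, and untouched-bundles must form a genuine partition of the items of $M'$ as seen at termination — and that the per-bag slack is bounded by $x_i$ (for stranded residuals and untouched bundles) or $2x_i$ (for allocated bags) \emph{in agent $i$'s own valuation}, which is where $v_{ij}\le x_i$ for all $j\in M'$ is used. Once that accounting is pinned down, the arithmetic contradiction with the stated bound on $x_i$ closes the proof. \qed
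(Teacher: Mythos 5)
Your proposal is correct and follows essentially the same route as the paper's proof: the same two bounds (each closed bag is worth less than $2x_i$ to any still-unserved agent $i$ because the bag was below $x_i$ before its last item and $v_{ij}\le x_i$; and the $\ell$ source bundles can strand less than $x_i$ each), combined into the same arithmetic $v_i(M') \ge (\ell + 2(|N'|-1))x_i$. The only difference is presentational—you argue by contradiction at termination where the paper maintains the invariant that the remaining value never drops below $\ell x_i$—which is just the contrapositive of the same counting argument.
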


\begin{proof}
    First, note that each allocated bundle consists of items taken from a single
    set $A_k \in A$. Since $A$ is feasible, each allocated bundle must be as
    well. Additionally, since each bundle is created from a single set in $A$,
    the algorithm cannot use items from more than $|N'|$ sets in $A$, or $\ell$
    sets if $|N'| > \ell$. It is obvious that the procedure runs in polynomial
    time, as all operations can easily be verified to be performable in
    polynomial time in the number of agents and items.

    If at all points during the execution of the algorithm, the remaining value
    in $A$, for any remaining agent, $i$, is no less than $\ell x_i$, there is
    always a bundle $A_k \in A$ with $v_i(A_k) \ge x_i$. Consequently,
    step~\ref{item:bag-filling} will at some point create a bundle worth $x_i$
    to $i$, that is allocated to $i$ in step~\ref{item:allocation}.
    Note that since step~\ref{item:bag-filling} adds items to $B$
    one at a time, when $v_{ij} \le x_i$, each remaining agent, $i$, will value
    each already allocated bundle at no more than $2x_i$. Otherwise, the value
    would already be at least $x_i$ before adding the last item. Hence, in the
    worst case, where $|N'| - 1$ agents have been allocated a bundle, the
    remaining agent $i$ will always value the remaining items at no less than
    $\ell x_i$ if
    \begin{align*}
        \ell x_i &\le v_i(M) - (|N'| - 1)2x_i \\
        x_i &\le \frac{v_i(M)}{\ell + 2(|N'| - 1)}\,,
    \end{align*}

    \noindent
    which is the condition in the lemma.
\qed
\end{proof}

\noindent
The condition used in the proof of \cref{lem:bag-filling} where the remaining
value is no less than $\ell x_i$, while sufficient, is overly strict for many
inputs. In practice, one could, e.g., have that all remaining items are located
in the same $A_k \in A$. In this case, it would be sufficient that the remaining
value is no less than $x_i$. However, in the worst case, where the remaining
value is spread evenly across $A$, then $\ell x_i$ is the minimum possible value
for which the algorithm will allocate a bundle to each agent in $N'$.

As with unconstrained bag filling, the guarantees of \cref{alg:bag-filling} are
dependent on the value of the highest-valued items. Thus, a way to handle
high-valued items is needed. The standard method used without item conflicts,
is to create a maximal matching between agents and items they value at least
$\alpha\mu_i$. Allocating items based on this matching guarantees that a subset
of agents each receives a one-item bundle valued at least $\alpha\mu_i$. In the
reduced instance created by removing all matched agents and items, each agent
has at least the same MMS as in the original instance and no items are worth
$\alpha\mu_i$ or more. However, with a conflict graph, there are severe
limitations on which allocations are feasible, and these are highly dependent on
the number of agents. Consequently, a reduction using this type of matching of
agent--item-pairs may lead either to a reduction in maximin shares or to the
reduced instance becoming infeasible. Nonetheless, we can replicate some similar
results when constrained by a conflict graph.

\begin{lemma}\label{lem:alpha-reduction}
    For a problem instance $I = (N, M, V, G)$, with $|N| > \Delta(G)$, let $N'
    \subseteq N$ and $M' \subseteq M$, such that a perfect matching exists
    between the agents in $N'$ and items they value at no less than
    $\alpha\mu_i^I$ in $M'$. Let $I' = (N \setminus N', M \setminus M', V',
    G')$, where $V' = \{v_i:i\in N'\}$ and $G'=G[M\setminus M']$. Then,
    \begin{stmts}[widest=iii, itemsep=0.75ex]
        \item $v_i(M \setminus M') \ge |N \setminus N'|\mu_i^I$ for all $i \in
            N$;
        \item $|N \setminus N'| > \Delta(G') \implies
            \smash{\mu_i^{I'}} \ge \smash{\mu_i^{I}}$ for all $i \in N \setminus
            N'$; and
        \item given a feasible partial allocation of $I'$, where each agent $i
            \in N \setminus N'$ receives at least $\alpha\mu_i^{I}$, a feasible
            complete $\alpha$-approximate MMS allocation of $I$ can be
            constructed in polynomial time.\label{item:partial-to-complete}%
    \end{stmts}
\end{lemma}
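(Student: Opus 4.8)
The plan is to prove the three parts essentially in order, since each is largely self-contained, though I expect part~\ref{item:partial-to-complete} to require the most care. For part~(i), the key observation is that the perfect matching assigns to each $i \in N'$ a distinct item worth at least $\alpha\mu_i^I \ge 0$; more importantly, consider any MMS partition of $I$ for agent $i$ (an arbitrary but fixed $i \in N$). It splits $M$ into $|N|$ bundles each worth at least $\mu_i^I$ to $i$. Removing the $|N'|$ matched items can destroy at most $|N'|$ of those bundles entirely, so at least $|N| - |N'|$ bundles survive intact inside $M \setminus M'$, each still worth $\ge \mu_i^I$; hence $v_i(M \setminus M') \ge |N \setminus N'|\mu_i^I$. (The matched items need not even lie in distinct bundles, but the counting only needs the crude bound.)

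For part~(ii), assume $|N \setminus N'| > \Delta(G')$ and fix $i \in N \setminus N'$. I want a feasible partition of $M \setminus M'$ in $G'$ into $|N \setminus N'|$ bundles each worth $\ge \mu_i^I$ to $i$. Start from the surviving $\ge |N \setminus N'|$ intact bundles from part~(i)'s argument; if there are more than $|N \setminus N'|$ of them, merge the extras arbitrarily (feasibility is preserved only if merged bundles stay independent in $G'$, which need not hold) --- so instead the cleaner route is: take an MMS partition of $I$ for $i$, restrict it to $M \setminus M'$ to get a feasible partition of $G'$ with possibly fewer than $|N \setminus N'|$ nonempty ``good'' bundles worth $\ge \mu_i^I$, then distribute the leftover items (those freed up when bundles were broken) and split/recombine to reach exactly $|N \setminus N'|$ bundles. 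Here is where $|N \setminus N'| > \Delta(G')$ is used: any leftover item has at most $\Delta(G')$ conflicts among the existing bundles, so there is always a bundle it can be appended to without creating a conflict; and if we have too few bundles we can peel a singleton off a good bundle (a good bundle worth $> \mu_i^I$ after an item is removed may drop below, so more carefully we argue there are at least $|N\setminus N'|$ bundles to begin with by the part~(i) count, then only need to place leftovers). This gives $\mu_i^{I'} \ge \mu_i^I$.

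For part~\ref{item:partial-to-complete}, we are handed a feasible partial allocation of $I'$ in which every agent of $N \setminus N'$ gets value $\ge \alpha\mu_i^I$. Build the candidate complete allocation of $I$ by (a) giving each $i \in N'$ its matched item (worth $\ge \alpha\mu_i^I$), and (b) extending the partial allocation of $I'$ to a complete one of $I$ by assigning the still-unallocated items of $M \setminus M'$ one at a time. The crucial point, exactly the one flagged in the paper's footnote, is that each such leftover item has degree at most $\Delta(G) < |N|$ in $G$, so at every step there is an agent holding no conflicting item, and we may place it there; this never decreases anyone's value, so the $\alpha$-approximation is preserved for all of $N$, and the whole construction is clearly polynomial time. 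The main obstacle I anticipate is the bookkeeping in part~(ii): making precise the splitting/merging of bundles so that we land on exactly $|N \setminus N'|$ feasible bundles each still worth $\ge \mu_i^I$, using the degree bound $|N \setminus N'| > \Delta(G')$ at just the right moments; everything else is a routine matching-plus-greedy-extension argument.
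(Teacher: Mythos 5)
Your proposal is correct and follows essentially the same route as the paper: count the at least $|N|-|N'|$ bundles of an MMS partition left intact after removing the $|N'|$ matched items, reallocate all remaining items greedily into $|N\setminus N'|$ chosen intact bundles using $|N\setminus N'|>\Delta(G')$, and complete the allocation via the matching plus greedy placement of leftovers using $|N|>\Delta(G)$. The bookkeeping you worry about in part (ii) resolves exactly as you suspect --- every item outside the chosen bundles is placed one at a time into a non-conflicting chosen bundle --- which is precisely the paper's argument.
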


\begin{proof}
    Let $A = \langle A_1, A_2, \dots, A_{|N|}\rangle$ be an MMS partition of the
    original instance for one of the agents, $i \in N$. After removing the items
    in $M'$, at least $|N| - |M'| = |N| - |N'| = |N \setminus N'|$ bundles in
    $A$ remain unchanged. Each unchanged bundle is valued at no less than
    $\mu_i^I$ by agent $i$.
    Consequently, $v_i(M \setminus M') \ge |N \setminus N'|\mu_i^I$.
    Additionally, if $|N \setminus N'| > \Delta(G')$, it is possible to chose
    $|N \setminus N'|$ unchanged bundles in $A$ and reallocate the remaining
    items in the other bundles to these. This will be a feasible allocation of
    $I'$, with each bundle worth at least $\mu_i^I$ to $i$, i.e., $\mu_i^{I'}
    \ge \mu_i^I$.

    For \ref{item:partial-to-complete}, any perfect matching of the agents in
    $N'$ to items in $M'$ they value at no less than $\alpha\mu_i^I$ can be used
    to turn the feasible partial allocation of $I'$ into a feasible partial
    $\alpha$-approximate MMS allocation of $I$. The currently unallocated items
    can be arbitrarily allocated to agents that have no conflicting items, to
    create a complete $\alpha$-approximate MMS allocation. Since $|N| >
    \Delta(G)$, at least one such agent exists for each unallocated item. Both
    of these steps can easily be performed in polynomial time.
\qed
\end{proof}

\noindent
A matching of the type required for \cref{lem:alpha-reduction} can easily be
found in the same way as in the unconstrained case. This is done by finding a
maximum-cardinality matching in a bipartite graph of the agents and items,
connecting each agent and item by an edge if the agent values the item at no
less than $\alpha\mu_i$.\footnote{It is not strictly necessary to find a
maximum-cardinality matching---a maximal matching is all that is needed. In
other words, one may arbitrarily allocate items one by one, to agents who value
them at no less than $\alpha\mu_i$.} Thus, finding an $\alpha$-approximate MMS
allocation is equivalent to finding a feasible partial allocation in the reduced
instance, where each agent $i$ receives a bundle valued at least
$\alpha\mu_i^I$. We can use \cref{alg:bag-filling} on varying colorings of $G$
to achieve this, providing different guarantees for $\alpha$ depending on the
conflict graph and the number of agents in the reduced instance.

\begin{lemma}\label{lem:alpha-limits}
    For a problem instance $(N, M, V, G)$, with $|N| > \Delta(G)$, let $n'$ be
    the number of remaining agents after allocating items as described in
    \cref{lem:alpha-reduction}, so that $v_{ij} \le \alpha\mu_i$ for all
    remaining agents and items. Then there exists a feasible
    $\alpha$-approximate MMS allocation if $\alpha$ is at most

    \[
        \begin{cases}
            1 & \text{if } n' \le 1\eqcomma \\
            n'/(3n' - 4) & \text{if } n' = \Delta(G) + 1 = \chi(G)\eqcomma \\
            n'/(2n' + \chi(G) - 3) & \text{if } n' \ge \chi(G)\eqcomma \\
            n'/(3n' - 3) & \text{if } n' < \chi(G)\eqdot \\
        \end{cases}
    \]
\end{lemma}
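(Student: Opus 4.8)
The plan is to chain \cref{lem:alpha-reduction} and \cref{lem:bag-filling}. First I would perform the reduction: take a maximal matching of agents to items they value at least $\alpha\mu_i^I$, set it aside, and write $\tilde N = N\setminus N'$ (so $|\tilde N| = n'$), $\tilde M = M\setminus M'$ and $G' = G[\tilde M]$ for what remains. The first part of \cref{lem:alpha-reduction} gives $v_i(\tilde M)\ge n'\mu_i^I$ for every agent, while maximality of the matching is precisely the stated hypothesis $v_{ij}\le\alpha\mu_i^I$ for all $i\in\tilde N$, $j\in\tilde M$. By the third part of \cref{lem:alpha-reduction} it then suffices to hand each surviving agent a feasible bundle inside $\tilde M$ worth at least $\alpha\mu_i^I$; the matched items and any leftovers can be reinserted in polynomial time, since $|N|>\Delta(G)$.

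For $n'\le 1$ this is immediate --- if $n'=1$, delete the $|N|-1$ matched items from an MMS partition of $I$ for the lone surviving agent; at least one of its $|N|$ bundles survives unchanged, is feasible, and is worth at least $\mu_i^I$. For $n'\ge 2$ I would invoke \cref{alg:bag-filling} on $\tilde N$ with limits $x_i=\alpha\mu_i^I$ and, as source partition, a proper coloring of $G'$ (or of a high-value induced subgraph of it) into $\ell$ classes. The requirement $v_{ij}\le x_i$ of \cref{lem:bag-filling} is the stated hypothesis, and by the first part of \cref{lem:alpha-reduction} the source partition is worth at least
\[
    n'\mu_i^I \;=\; \bigl(\ell+2(n'-1)\bigr)\cdot\frac{n'}{2n'+\ell-2}\,\mu_i^I \;\ge\; \bigl(\ell+2(n'-1)\bigr)\,x_i
\]
to agent $i$ as soon as $\alpha\le n'/(2n'+\ell-2)$, so \cref{lem:bag-filling} delivers the desired bundles. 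Matching the three non-trivial cases then amounts to exhibiting colorings with $\ell=n'-1$ classes when $n'<\chi(G)$, with $\ell=\chi(G)-1$ classes when $n'\ge\chi(G)$, and with $\ell=\chi(G)-2$ classes in the rigid case $n'=\Delta(G)+1=\chi(G)$, together with the routine arithmetic that turns $\alpha\le n'/(2n'+\ell-2)$ into each of the displayed fractions.

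The crux --- and the step I expect to be the main obstacle --- is this coloring claim, since the generic bound $\chi(G')\le\min\{\chi(G),\Delta(G)+1\}$ is not strong enough in any of the three branches. One has to exploit the case hypotheses in tandem with the value guarantee $v_i(\tilde M)\ge n'\mu_i^I$: I would argue that the surviving bundles of an MMS partition already force $\tilde M$ to decompose into few feasible classes in the regimes considered, and that when $\chi(G)=\Delta(G)+1$ Brooks' theorem confines the obstruction to a single $K_{\Delta(G)+1}$ (or odd-cycle) component, which can be recolored by hand to save the extra color --- all while shedding no value that is needed, since the first part of \cref{lem:alpha-reduction} may be tight. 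Once such colorings are in hand, verifying the four cases and appealing once more to the third part of \cref{lem:alpha-reduction} to complete the allocation are mechanical.
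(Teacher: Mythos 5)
There is a genuine gap, and it sits exactly where you flagged it: the coloring claim is not just the main obstacle, it is false as stated, and the paper does not resolve it by producing better colorings. A single application of \cref{lem:bag-filling} to a proper coloring of $G'$ with $\ell$ classes and $n'$ agents yields $\alpha \le n'/(2n'+\ell-2)$, so matching the stated bounds would require $\ell=\chi(G)-1$, $\ell=n'-1$, or $\ell=\chi(G)-2$ classes --- i.e.\ proper colorings with fewer than $\chi(G)$ colors, which need not exist ($\chi(G')$ can equal $\chi(G)$, and Brooks' theorem cannot push below the chromatic number). The value guarantee $v_i(\tilde M)\ge n'\mu_i^I$ gives you no leverage here either: it says nothing about how few independent sets $\tilde M$ decomposes into.

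What the paper actually does is a \emph{two-phase} (or iterated) bag filling that changes the waste accounting rather than the number of color classes. For $n'\ge\chi(G)$: first run \cref{alg:bag-filling} with the restriction of an MMS partition of one designated agent $i$ as the source, deprioritizing $i$ in tie-breaking. This guarantees $i$ a bundle, and --- crucially --- every agent left unserved values $i$'s bundle at less than $\alpha$ (cost $\alpha$ instead of the generic $2\alpha$); the items left over from this phase are not discarded but recolored with $\chi(G)$ colors and fed to a second bag filling for the remaining agents. The combined budget inequality is $\chi(G)\alpha \le n' - 2(n'-2)\alpha - \alpha$, which is where the $-3$ in the denominator $2n'+\chi(G)-3$ comes from; your single-phase scheme can only reach $2n'+\chi(G)-2$. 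For $n'<\chi(G)$ the second phase is replaced by repeatedly sourcing from the MMS partitions of the still-unserved agents themselves (no coloring of $G'$ is ever used), giving $3n'-3$ by the same accounting. For $n'=\Delta(G)+1=\chi(G)$ one bag-fills an MMS partition of the \emph{reduced} instance for agent $i$ with only the other $n'-1$ agents, so $\ell=n'$ and $|N'|=n'-1$ in \cref{lem:bag-filling} give $n'/(3n'-4)$ directly, and $i$ takes an untouched class at the end. Your reduction step, the $n'\le1$ case, and the appeal to \cref{lem:alpha-reduction}\ref{item:partial-to-complete} to complete the allocation are all fine; the missing idea is the hybrid use of MMS partitions as bag-filling sources combined with tie-breaking against a designated agent, not sub-chromatic colorings.
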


\begin{proof}
    Throughout the proof, assume that the valuations have been scaled so that
    $\mu_i = 1$ for all agents $i$. As long as it can be shown that there exists
    a partial allocation where each agent in the reduced instance receives a
    bundle valued at least $\alpha$, then by \cref{lem:alpha-reduction} a
    feasible complete $\alpha$-approximate MMS allocation exists in the original
    instance. Note that \cref{lem:alpha-reduction} also guarantees that the
    items in the reduced instance are valued at no less than $n'$ by each agent
    in the reduced instance.

    \begin{case}($n' \le 1$).
        The case of $n' = 0$ is trivial. If $n' = 1$,
        at least one bundle of any of the remaining agent's MMS partitions of the
        original instance remains unchanged after the reduction. Thus, the agent can
        be given this bundle valued at no less than $1$.
    \end{case}

    \begin{case}($n' \ge \chi(G)$).\label{case:n-chi-g}
        Create an MMS partition of the original instance for some agent $i$ in
        the reduced instance. Remove all items from the MMS partition that are
        not in the reduced instance and perform \cref{alg:bag-filling} on the
        MMS partition with all remaining agents, never choosing $i$ when
        breaking a tie in step
        \ref{item:allocation}. Finally, create a $\chi(G)$-coloring of the
        remaining unallocated items in the reduced instance and perform
        \cref{alg:bag-filling} on this coloring using the remaining agents.
        Agent $i$ is guaranteed to receive a bundle valued at $\alpha$ in the
        first bag filling if $\alpha \le 1$, as there are at least $n'$ bundles
        valued at no less than $1$ in the MMS partition. Any agent that did not
        receive a bundle in the first bag filling valued the bundle of $i$ at
        less than $\alpha$. Thus, if there are $n''$ agents remaining after the
        first bag filling, it follows from \cref{lem:bag-filling} that all
        agents will receive a bundle in the second bag filling if:
        \begin{align*}
            \chi(G)\alpha &\le n' - (n' - n'' - 1)2\alpha - \alpha - (n'' -
            1)2\alpha \\
            \chi(G)\alpha &\le n' - (n' - 2)2\alpha - \alpha \\
            \alpha &\le n'/(2n' + \chi(G) - 3)
        \end{align*}
    \end{case}

    \begin{case}($\chi(G) > n'$).
        This case can be solved in a similar way to Case \ref{case:n-chi-g}.
        However, instead of performing a second bag filling with a
        $\chi(G)$-coloring, choose an agent $i'$ that did not receive a bundle
        in the first bag filling. For this agent, create an MMS partition of the
        original instance and remove all items that are currently allocated,
        i.e., let only unallocated items in the reduced instance remain. Perform
        \cref{alg:bag-filling} on this partition using all remaining agents.
        Repeat this last bag filling step, exchanging $i'$ with some remaining
        agent, until there are no remaining agents. As in Case
        \ref{case:n-chi-g}, agent $i$ receives a bundle worth at least $\alpha$,
        when $\alpha \le 1$. For the remaining agents, we require that they
        receive a bundle when \cref{alg:bag-filling} is performed on their
        modified MMS partition.  \cref{lem:alpha-reduction} guarantees that for
        any agent $i'$, there are at least $n'$ bundles in their MMS partition
        with a value of at least $1$ each, before removing the allocated items
        in the reduced instance. Thus, it must be the case that after the earlier
        bag fillings, there remains enough value across these $n'$ bundles that
        $i'$ is guaranteed to receive a bundle in the bag filling on its
        modified MMS partition. By the same logic as in Case \ref{case:n-chi-g},
        this is by \cref{lem:bag-filling} guaranteed if:
        \begin{align*}
            n'\alpha &\le n' - (n' - 2)2\alpha - \alpha \\
            \alpha &\le n'/(3n' - 3)
        \end{align*}
    \end{case}

    \begin{case}($n' = \Delta(G) + 1 = \chi(G)$).
        Create an MMS partition of the reduced instance for some agent $i$ and
        perform \cref{alg:bag-filling} on this partition with all other agents
        in the reduced instance. For these $n' - 1$ agents, we can choose an
        $\alpha = n'/(3n' - 4)$, by \cref{lem:bag-filling}. Since there are $n'$
        bundles in the MMS partition and at most $n' - 1$ were used in the bag
        filling, there is at least one untouched bundle in the MMS partition and
        this can be given to $i$ guaranteeing a value of at least $1$. In this
        specific situation, this method is slightly better than Case
        \ref{case:n-chi-g}. \qed
    \end{case}
\end{proof}

\noindent
The minimum guaranteed value each agent will receive from
\cref{alg:bag-filling}, is, as we have seen, highly dependent on the number of
bundles in the supplied partition of $M'$. The fewer bundles, the less value is
required to guarantee the last agent a bundle and the more value can be given
away to each agent. While it is never possible to partition $G$ into fewer than
$\chi(G)$ feasible bundles, we saw in \cref{lem:alpha-limits} that using specific
partitions of $G$ provided a skewed distribution of the items across the bags,
resulting in fewer wasted items. \Cref{lem:alpha-limits} does, however, not fully
take into consideration the possibilities for reallocating items between bundles in
order to slightly reduce the value of the unallocated items at the end of the
bag filling. Depending on the situation, it may be possible to further limit the
type of partition of $M'$ used, or either combine or reorganize some of the bags
during the execution of \cref{alg:bag-filling}, e.g., if the induced subgraph of
$G$ over the remaining unallocated items can be colored with fewer than
$\chi(G)$ colors. While we have not been able to find any general improvements,
it is possible to use these insights to find slightly better guarantees when
there are three agents.

\begin{lemma}\label{lem:alpha-limit-3}
    For a problem instance $(N, M, V, G)$, with $|N| = 3 > \Delta(G)$, there
    exists a $2/3$-approximate MMS allocation.
\end{lemma}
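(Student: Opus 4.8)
The plan is to reduce to the situation where no single item is worth too much, then apply \cref{lem:alpha-limits} with $n' \le 3$. Concretely, set $\alpha = 2/3$ and, after scaling so that $\mu_i = 1$ for all three agents, first run the matching reduction of \cref{lem:alpha-reduction}: match as many agents as possible to distinct items worth at least $2/3$ to them. By \cref{lem:alpha-reduction}\ref{item:partial-to-complete}, any feasible partial $2/3$-approximate MMS allocation of the reduced instance extends to a complete one of the original; and since the original has $|N| = 3 > \Delta(G)$, the degree condition $|N \setminus N'| > \Delta(G')$ needed for $\mu_i^{I'} \ge \mu_i^I$ holds automatically when at least one agent is removed, and holds trivially when none is (the reduced instance equals the original). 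So after the reduction we face an instance with $n' \in \{0,1,2,3\}$ remaining agents, every remaining item worth at most $2/3$ to every remaining agent, and each remaining agent valuing the remaining items at no less than $n'$.

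The cases $n' \le 1$ are immediate (as in \cref{lem:alpha-limits}). For $n' = 2$: here $\Delta(G') \le 1$, so $\chi(G') \le 2$. If $\chi(G') = 1$ the items form an independent set and ordinary two-agent bag filling on the single bundle $M'$ gives each agent at least $v_i(M')/3 \ge 2/3$ once no item exceeds $2/3$ (or directly: cut-and-choose). If $\chi(G') = 2$, then $n' = 2 \ge \chi(G')$, and \cref{lem:alpha-limits} already yields $\alpha \le n'/(2n' + \chi(G') - 3) = 2/3$. For $n' = 3$: now $\chi(G') \le \Delta(G') + 1 \le 3$. When $\chi(G') \le 2$ we have $n' \ge \chi(G')$ and \cref{lem:alpha-limits} gives $\alpha \le 3/(6 + \chi(G') - 3) \ge 3/6 = 1/2$ — which is \emph{not} enough, so this is exactly where the extra work is needed. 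When $\chi(G') = 3 = \Delta(G') + 1 = n'$, \cref{lem:alpha-limits} gives $\alpha \le n'/(3n'-4) = 3/5$, again short of $2/3$.

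Thus the main obstacle is the genuine three-remaining-agent case, and the idea (foreshadowed in the paragraph before the lemma) is to exploit reorganization of bags mid-filling. The plan: take an MMS partition $\langle B_1, B_2, B_3\rangle$ of the reduced instance for some agent $i$ (each $B_k$ worth $\ge 1$ to $i$), hand $B_1$ to $i$, and allocate to the other two agents from $B_2 \cup B_3$. The point is that $B_2$ and $B_3$ are each independent sets, so their union induces a bipartite-like structure one can $2$-color; more carefully, one runs a bag-filling that is allowed, when the bag currently being filled stagnates in one $B_k$, to merge in the other $B_k$ provided the induced subgraph on what remains is still $2$-colorable — and with only two agents left to serve and $v_i(B_2) + v_i(B_3) \ge 2$, value $2$ spread over effectively $\ell = 2$ source bundles and $|N'| - 1 = 1$ already-served agent gives, via \cref{lem:bag-filling}, the bound $x_i \le v_i(M')/(\ell + 2(|N'|-1)) = 2/(2 + 2) = 1/2$ in the worst split but $2/3$ once one accounts for the at-most-$2/3$ item bound and the fact that an already-allocated bundle is then worth at most $2 \cdot (2/3)$ rather than $2 \cdot x_i$. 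I expect the delicate point to be verifying that the reorganization keeps every bundle feasible (i.e., that the merge step never forces two conflicting items together), which is where $\Delta(G') \le 2$ and the independent-set structure of the $B_k$ are essential, and checking the small sub-cases by the degree of $G'$ (paths and short cycles on few vertices) exhaustively, much as \cref{exm:noef1-n>delta} and \cref{remark:minimum} handle three-agent instances by case analysis.
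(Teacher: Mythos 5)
There is a genuine gap, and it sits exactly where you flagged ``the extra work is needed.'' Your reduction via \cref{lem:alpha-reduction}, the handling of $n'\le 2$, and the identification of $n'=3$ (with $\chi\in\{2,3\}$) as the only problematic case all match the paper. But your plan for that case does not work as stated. You propose to take an MMS partition $\langle B_1,B_2,B_3\rangle$ for some agent $i$, hand $B_1$ to $i$, and serve the other two agents from $B_2\cup B_3$. The bound $v_i(B_2)+v_i(B_3)\ge 2$ that you invoke is in $i$'s valuation, but $i$ is the agent you have already served; the two agents who must be fed from $B_2\cup B_3$ may assign almost all of their value to $B_1$ (e.g., each values $B_1$ at $2.9$ and $B_2\cup B_3$ at $0.1$), so no bag filling or bag merging can give them $2/3$ each. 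The paper reverses the roles: it serves the two \emph{non-owner} agents first, showing their two bundles can be carved out of at most two of the $A_k$, and then hands the untouched third bundle (worth $\ge 1$) to the partition owner. Concretely, either the two non-owners already value two distinct $A_k$'s at $\ge 2/3$, or both concentrate on a single $A_k$ worth $\ge 5/3$ to each of them, and that bundle is split into two feasible pieces each worth $\ge 2/3$ to one agent (hence $\ge 5/6$ to the other), possibly borrowing items from another bundle.

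Even setting aside the role reversal, the quantitative step is missing. Your own computation via \cref{lem:bag-filling} yields $x_i\le 2/(2+2)=1/2$, and the hoped-for improvement to $2/3$ ``once one accounts for the at-most-$2/3$ item bound'' does not materialize: the $2x_i$ overshoot bound in \cref{lem:bag-filling} already uses the per-item cap, so there is no slack left to harvest. The paper closes this gap not by a generic bag-merging rule but by an explicit case analysis on the two largest items $j,j'$ of the critical bundle ($1/3 < v_{ij'}\le v_{ij} < 2/3$, $5/3 < v_i(A_k) < 2$), using $|C_j|\le\Delta(G)\le 2$ to locate either enough non-conflicting value in another bundle or a high-valued neighbor $j''$ of $j$ to recombine. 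Your intuition that the independent-set structure and $\Delta(G)\le 2$ must be exploited is right, but the argument that actually delivers $2/3$ is absent from the proposal.
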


\begin{proof}
    This result follows partially from \cref{lem:alpha-limits}. The lemma
    provides a guarantee of at least $2/3$ in all cases, except when both $n' =
    3$ and $\chi(G) \in \{2, 3\}$. These cases have the slightly worse guarantee
    of $3/5$ and must be handled differently. For the remainder of the proof, we
    assume that the values have been scaled so that for each agent $i$, $\mu_i =
    1$. Consequently, for each agent $i$, we have that $v_i(M) \ge 3$ and since
    $n' = 3$, all items are worth less than $2/3$.

    Let $A = \langle A_1, A_2, A_3 \rangle$ be an MMS partition for one of the
    agents. Let $i$ and $i'$ be the two other agents. If there is a way to
    create bundles $B_i$ and $B_{i'}$, worth at least $2/3$ to respectively $i$
    and $i'$ such that $(B_i \cup B_{i'}) \subseteq (A_k \cup A_{k'})$, with $k,
    k' \in \{1, 2, 3\}$, then a $2/3$-approximate MMS allocation exists.

    If there are two distinct bundles $A_k, A_{k'} \in A$ such that $v_i(A_k)
    \ge 2/3$ and $v_{i'}(A_{k'}) \ge 2/3$, assigning $A_k$ to $i$ and $A_{k'}$
    to $i'$ is sufficient for a $2/3$-approximate MMS allocation. Otherwise, both
    $i$ and $i'$ value only a single bundle $A_k \in A$ at~$2/3$ or more. This
    bundle is then worth at least $5/3$ to both agents. We will show that it is
    always possible to partition $A_k$ together with possibly some items from
    another bundle $A_{k'} \in A$, into two feasible bundles $B_1$ and $B_2$
    both worth at least $2/3$ to $i$.  Consequently, one of the bundles must be
    worth at least $5/6$ to $i'$ and a $2/3$-approximate MMS allocation exists.

    If $v_i(A_k) \ge 2$ or there is at most one item in $A_k$ valued more than
    $1/3$, then $A_k$ can be partitioned into two bundles worth at least $2/3$
    by performing bag filling, always selecting the most valuable remaining
    item.  Thus, assume that $5/3 < v_i(A_k) < 2$ and that the two most valuable
    items in $A_k$, $j$ and $j'$, are such that $1/3 < v_{ij'} \le v_{ij} <
    2/3$. Let $A_k' = A_k \setminus \{j, j'\}$. Then $v_i(A_k') > 1/3$ and if
    $v_i(A_k') \ge 2/3$, then $\langle A_k', \{j, j'\} \rangle$ is a set of
    bundles that satisfies our requirements.

    We now assume that $1/3 < v_i(A_k') < 2/3$, which implies that $v_{ij} >
    1/2$.  Let $C_j$ and $C_{j'}$ be the set of items in $M$ that conflict with,
    respectively, $j$ and $j'$. If $v_i(C_j) \le 2/3$, there must exist a set
    $A_{k'} \in A \setminus \{A_k\}$, such that $v_i(A_{k'} \setminus C_j) \ge
    1/6$. In this case, $\langle \{j\} \cup (A_{k'} \setminus C_j), A_k
    \setminus \{j\}\rangle$ is a set of bundles that satisfies our requirements.
    If $v_i(C_j) > 2/3$, on the other hand, then because $|C_j| \le \Delta(G)$,
    there must exist an item $j'' \in C_j$, with $v_i(j'') > 1/3$.  Depending on
    whether $j''$ is in $C_{j'}$ or not, one can combine either $A_k'$ or $j'$,
    respectively, with $j''$ to produce a bundle worth at least $2/3$, which can
    be complemented by a bundle of the other items in $A_k$.
\qed
\end{proof}

\noindent
Combining the results of
\cref{lem:alpha-reduction,lem:alpha-limits,lem:alpha-limit-3}, it can be shown
that any instance with $|N| > \Delta(G)$ has an $\alpha$-approximate MMS
allocation with $\alpha > 1/3$.%
\footnote{Post-publication note: The results of \citet{Li:2021} can be used to
show the existence of $11/30$-approximate MMS allocations, providing a better
result for the case $n \ge 4$ and $\chi(G) > 11$.}

\begin{theorem}\label{thr:mms-alpha-existence}
    For a problem instance $(N, M, V, G)$, with $|N| > \Delta(G)$, there exists
    an $\alpha$-approximate MMS allocation, where $\alpha > 1/3$. Specifically,
    \[
        \alpha =
        \begin{cases}
            1 & \text{if } n \le 2 \eqcomma \\
            \frac{2}{3} & \text{if } n = 3 \eqcomma \\
            \frac{n}{2n - 1} & \text{if } n \ge 4 \text{ and } \chi(G) = 2
            \eqcomma \\
            \frac{\chi(G)}{3\chi(G) - 3} & \text{if } n \ge 4 \text{ and }
            \chi(G) \ge 3 \eqcomma \\
        \end{cases}
    \]
    where $n=|N|$.
\end{theorem}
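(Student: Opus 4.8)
The plan is to prove \cref{thr:mms-alpha-existence} by case analysis on $n = |N|$, in each case invoking the appropriate earlier lemma and, where needed, the reduction machinery of \cref{lem:alpha-reduction}. First, I would handle the two easy cases directly. For $n \le 2$, the case $n = 1$ is immediate (give the single agent everything), and for $n = 2$ we have $|N| > \Delta(G)$ forces $\Delta(G) \le 1$, so $\CC(G) \le 2 = n$ and \cref{prop:ef1} (or a direct argument) yields an exact MMS allocation, giving $\alpha = 1$. For $n = 3$, the claim $\alpha = 2/3$ is precisely \cref{lem:alpha-limit-3}, so nothing new is required.

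The substance is the case $n \ge 4$, where I would use the $\alpha$-reduction of \cref{lem:alpha-reduction} followed by \cref{alg:bag-filling} as quantified in \cref{lem:alpha-limits}. Fix the target $\alpha$ as stated (depending on whether $\chi(G) = 2$ or $\chi(G) \ge 3$). Starting from the original instance $I$ with $|N| > \Delta(G)$, repeatedly pull out a maximal matching between remaining agents and remaining items valued at least $\alpha\mu_i$, as in \cref{lem:alpha-reduction}; each such step removes equally many agents and items, so the precondition $|N\setminus N'| > \Delta(G')$ needed for part~(ii) of \cref{lem:alpha-reduction} is preserved (removing a vertex cannot raise the maximum degree, and we remove one agent per item), hence $\mu_i^{I'} \ge \mu_i^I$ throughout. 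After this preprocessing we reach a reduced instance with $n'$ agents in which every item is worth less than $\alpha\mu_i$ to every remaining agent, and \cref{lem:alpha-limits} guarantees a feasible $\alpha$-approximate MMS allocation of the reduced instance provided $\alpha$ does not exceed the relevant bound; by part~(iii) of \cref{lem:alpha-reduction} this lifts to a feasible complete $\alpha$-approximate MMS allocation of $I$.

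It then remains to check that the stated $\alpha$ is always permitted by \cref{lem:alpha-limits}, which is where I expect the only real bookkeeping. For $\chi(G) = 2$: the relevant branches of \cref{lem:alpha-limits} are $n' = \Delta(G)+1 = \chi(G) = 2$ (giving $n'/(3n'-4) = 1 \ge n/(2n-1)$ since $n \ge 4$ — but note $n' \le 2$ already falls under the ``$\le 1$''-type slack and exact arguments, so the binding subcase is handled directly) and $n' \ge \chi(G) = 2$, giving the bound $n'/(2n' - 1)$, which is decreasing in $n'$ and maximized among $2 \le n' \le n$; since $n' \le n$ we get $n'/(2n'-1) \ge n/(2n-1)$, matching the claim. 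For $\chi(G) \ge 3$: the worst branch is $n' < \chi(G)$ with bound $n'/(3n'-3)$, decreasing in $n'$, so over $n' \le \chi(G) - 1$ the minimum of the bound is at $n' = \chi(G)-1$, giving $(\chi(G)-1)/(3\chi(G)-6)$, which one checks is $\ge \chi(G)/(3\chi(G)-3)$; the branch $n' \ge \chi(G)$ gives $n'/(2n'+\chi(G)-3)$, again decreasing in $n'$ and at $n' = n$ bounded below by $\chi(G)/(3\chi(G)-3)$ because $n \ge \chi(G)$ implies $n/(2n+\chi(G)-3) \ge \chi(G)/(3\chi(G)-3)$; and the special branch $n' = \Delta(G)+1 = \chi(G)$ gives $\chi(G)/(3\chi(G)-4) > \chi(G)/(3\chi(G)-3)$. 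Thus in every branch the stated $\alpha$ is admissible. Finally, to get the uniform conclusion $\alpha > 1/3$, observe $n/(2n-1) > 1/3$ for all $n$, and $\chi(G)/(3\chi(G)-3) = \frac{1}{3}\cdot\frac{\chi(G)}{\chi(G)-1} > 1/3$ since $\chi(G) \ge 3$; together with $\alpha = 1$ and $\alpha = 2/3$ in the small cases, this completes the proof. The main obstacle, such as it is, is making sure each monotone-in-$n'$ comparison is checked at the correct endpoint and that the $n' \le 1$ slack cases of \cref{lem:alpha-limits} are not silently relied upon where $n' = n$; none of this is deep, but it must be laid out carefully.
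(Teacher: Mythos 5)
Your proposal is correct and follows essentially the same route as the paper: the small cases via a direct argument and \cref{lem:alpha-limit-3}, and the cases $n \ge 4$ by combining the matching reduction of \cref{lem:alpha-reduction} with the bag-filling bounds of \cref{lem:alpha-limits}, then minimizing over the possible values of $n'$. Two minor slips, neither fatal. First, for $\chi(G) \ge 3$ the branch $n'/(2n'+\chi(G)-3)$ is \emph{nondecreasing} in $n'$ (its derivative has the sign of $\chi(G)-3$), so its minimum over $[\chi(G), n]$ is attained at $n'=\chi(G)$, where it equals the stated $\alpha$ exactly, not at $n'=n$ as you assert; your conclusion survives only because the bound happens to hold at both endpoints of a monotone function, so you should check $n'=\chi(G)$ rather than $n'=n$. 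Second, your justification that the precondition $|N\setminus N'| > \Delta(G')$ of part~(ii) of \cref{lem:alpha-reduction} is preserved does not hold up: removing $k$ matched items lowers the agent count by $k$ but need not lower the maximum degree at all, so the inequality can fail after the reduction. This does not damage the proof, however, because part~(ii) is not actually needed---\cref{lem:alpha-limits} already delivers an $\alpha$-approximation relative to the \emph{original} maximin shares, and part~(iii) of \cref{lem:alpha-reduction} alone handles the lifting, which is exactly how the paper proceeds.
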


\begin{proof}
    We proceed by cases.

    \begin{case}($n \le 2$).
        The case of $n = 1$ is trivial. For $n = 2$, this follows from
        \cref{lem:alpha-limits}. The same result could also be achieved by using
        the divide and choose protocol~\citep{Steinhaus:1948}, which works in a
        similar way to the method in the lemma for $n = 2$.
    \end{case}

    \begin{case}($n = 3$).
        Follows directly from \cref{lem:alpha-limit-3}.
    \end{case}

    In the remaining two cases, we use the fact that
    \cref{lem:alpha-reduction,lem:alpha-limits} guarantee the existence of
    $\alpha$-approximate MMS allocations if $\alpha$ is not higher than the
    limits in \cref{lem:alpha-limits}. Since we cannot guarantee that a certain
    number of items are valued more than $\alpha$ in the general case, the
    highest value for $\alpha$ we may choose is the one that works for all $n'$.

    \begin{case}($n \ge 4$ and $\chi(G) = 2$).
        Either $n' = 1$ and $\alpha$ can be $1$, or $n' \ge \chi(G)$ and we
        require $\alpha \le n'/(2n' - 1)$. This is minimized when $n' = n$ and
        thus $\alpha = n/(2n - 1)$ will work.
    \end{case}

    \begin{case}($n \ge 4$ and $\chi(G) \ge 3$).
        Here, one can easily verify that the guarantee of
        \cref{lem:alpha-limits} increases on the interval $[\chi(G), n]$ and
        decreases on the interval $[1, \chi(G)]$, if the improvements in the
        special case of $\chi(G) = \Delta(G) + 1$ are ignored. Thus, the worst
        case, $n' = \chi(G)$, allows for $\alpha = \chi(G)/(3\chi(G) - 3)$.
        \qed
    \end{case}
\end{proof}

\noindent
Note that \cref{thr:mms-alpha-existence} does not handle the case of $\chi(G) =
\Delta(G) + 1$ by itself, for which \cref{lem:alpha-limits} provides better
guarantees. \cref{remark:mms-guarantee} shows the achievable guarantees in this
case, which can be proven in a similar way to \cref{thr:mms-alpha-existence}.

\begin{remark}\label{remark:mms-guarantee}
    For a problem instance $(N, M, V, G)$, with $|N| > \Delta(G)$, $|N| \ge 4$
    and $\Delta(G) + 1 = \chi(G)$, slightly better guarantees than in
    \cref{thr:mms-alpha-existence} can be found for $\Delta(G) \ge 2$ using
    \cref{lem:alpha-limits}. Specifically,
    \[
        \alpha =
        \begin{cases}
            \frac{\chi(G) + 1}{3\chi(G) - 1} & \text{if } \chi(G) < 7
            \eqcomma \\[1ex]
            \frac{\chi(G) - 1}{3\chi(G) - 6} & \text{if } \chi(G) \ge 7
            \eqdot \\
        \end{cases}
    \]
\end{remark}

\noindent
\cref{thr:mms-alpha-existence} shows the existence of $\alpha$-approximate MMS
allocations with approximation factors better than $1/3$. The proof of the
theorem is in fact constructive. However, the method used in the proof relies
heavily on being able to both find a minimum vertex coloring of $G$ and being
able to find MMS partitions for the agents. Both problems are NP-hard, leaving
us without a polynomial-time approximation algorithm. Relaxing the approximation
guarantees and employing some of the tricks used in MMS approximation when there
are no restrictions on the bundles, it is possible to find a polynomial-time
approximation algorithm for cases where $|N| > \Delta(G)$.
\cref{alg:mms-polynomial} outlines this algorithm.

\begin{algorithm}\label{alg:mms-polynomial}
    Given the problem instance $(N, M, V, G)$, with $|N| > \Delta(G)$, find an
    $\alpha$-approximate MMS allocation for sufficiently small $\alpha > 0$.

    \begin{enumerate}
        \item Scale valuations so that $v_i(M) = |N|$ for all $i \in N$.
        \item Repeatedly reduce the instance by allocating single items worth
            at least $\alpha$ to some remaining agent.
            Let $N'$ and $M'$ be the remaining agents and items at any stage,
            respectively. Rescale valuations to $v_i(M') = |N'|$ for all $i \in
            M'$ after each reduction.
        \item
            \begin{enumerate}
                \item If $|N'| = 1$, give the remaining agent $i$ an approximate
                    maximum weighted independent set on $G[M']$, weighted by
                    $v_i$, using the approximation algorithm of
                    \citet{Halldorsson:1997}.
                \item If $|N'| \ge 2$ and $G[M']$ is $\Delta(G)$-colorable,
                    perform \cref{alg:bag-filling} on any $\Delta(G)$-coloring
                    of $G[M']$.
                \item If $\Delta(G) \ge |N'| \ge 2$ and $G[M']$ is not
                    $\Delta(G)$-colorable, select an agent $i \in N'$. Create a
                    bundle $B$ of $i$'s least-valued item in each
                    non-$\Delta(G)$-colorable component of $G[M']$ and create a
                    $\Delta(G)$-coloring of $G[M'\setminus B]$. Scale the
                    valuations of $i$ so that $v_i(M' \setminus B) = |N'|$,
                    unless this results in $v_i(M') > |N'| + 1$, in which case
                    rescale to $v_i(M') = |N'| + 1$. Perform
                    \cref{alg:bag-filling} on $B$ and the $\Delta(G)$-coloring,
                    prioritizing other agents than $i$ in when breaking ties.
                \item If $|N'| > \Delta(G) \ge 1$ and $G[M']$ is not
                    $\Delta(G)$-colorable, perform \cref{alg:bag-filling} on a
                    ($\Delta(G) + 1$)-coloring of $G[M']$.
            \end{enumerate}
        \item Allocate remaining items without introducing conflicts in any
            bundle.
    \end{enumerate}
\end{algorithm}

\noindent
\Cref{alg:mms-polynomial} uses the same basic strategy as the approach in the
existence proof, except that it relies on colorings that may be found in
polynomial time. Note that instead of each special case in the bag filling step
of the algorithm, one could use a ($\Delta(G) + 1$)-coloring of $G[M']$. This
would, however, result in a reduction in the highest viable $\alpha$ by almost a
factor of $2$ in the worst case.

\begin{theorem}\label{thr:mms-approx-polytime}
    For a problem instance $(N, M, V, G)$, with $|N| > \Delta(G)$,
    \cref{alg:mms-polynomial} finds an $\alpha$-approximate MMS allocation in
    polynomial time, with $\alpha > 1/\Delta(G)$ when $\Delta(G) > 2$.
    Specifically,

    \[
        \alpha =
        \begin{cases}
            1/2 & \text{if } \Delta(G) = 1 \eqcomma \\
            3/7 & \text{if } \Delta(G) = 2 \eqcomma \\
            2/(\Delta(G) + 2) & \text{if } \Delta(G) > 2 \eqdot \\
        \end{cases}
    \]
\end{theorem}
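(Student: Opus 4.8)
The plan is to walk through \cref{alg:mms-polynomial} step by step, verifying that it runs in polynomial time and that every agent ends up with a bundle worth at least $\alpha\mu_i$, for the stated piecewise $\alpha$. For the running time, the only steps that need comment are the colorability tests of step~3 and the two external subroutines it calls. Since every component $C$ of $G[M']$ has $\Delta(C)\le\Delta(G)$, Brooks' theorem tells us that $\chi(C)>\Delta(G)$ is possible only when $C=K_{\Delta(G)+1}$, or (and only when $\Delta(G)=2$) when $C$ is an odd cycle; both are decidable in polynomial time, and a $\Delta(G)$-coloring whenever one exists — and a $(\Delta(G)+1)$-coloring in all cases — can be produced in polynomial time. \Citeauthor{Halldorsson:1997}'s procedure is polynomial, \cref{lem:bag-filling} makes each call to \cref{alg:bag-filling} polynomial, the loop of step~2 runs at most $|N|$ times, and step~4 terminates because each item conflicts with fewer than $|N|$ items; hence the whole algorithm is polynomial.

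For correctness, I would first note that after step~1 we have $v_i(M)=|N|$, so $\mu_i^I\le 1$ by \cref{lem:normalization}; it therefore suffices to give each agent a bundle worth $\alpha$ in its current scaling, provided the following invariant holds throughout step~2: for every remaining agent $i$, $v_i^I(M')\ge|N'|\mu_i^I$ in the original (post-step-1) scaling. This I would prove by induction on the reductions of step~2: an item removed for agent $i$ is worth at least $\alpha$ in the then-current scaling, hence, by the inductive hypothesis and \cref{lem:scale-freeness}, at least $\alpha\mu_i^I$ in the original scaling, so the value-preservation clause of \cref{lem:alpha-reduction}, applied to the singleton matching of that agent to that item, re-establishes the invariant for the remaining agents and confirms the departing agent is served. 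Two consequences are used below: when step~2 halts, no remaining item is worth $\alpha$ or more to any remaining agent (the bounded-item hypothesis of \cref{lem:bag-filling}), and a bundle worth at least $\alpha$ in a remaining agent's current scaling, where $v_i(M')=|N'|$, is worth at least $\alpha\mu_i^I$ originally. Finally, step~4 distributes the leftover items without decreasing any agent's value and keeps the allocation feasible, since $|N|>\Delta(G)$ (cf.\ the completion clause of \cref{lem:alpha-reduction}).

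It then remains to handle the four subcases of step~3. In case~(a), $|N'|=1$: at least one bundle of an MMS partition of the original instance for the surviving agent escapes the at most $|N|-1$ deletions of step~2, so $G[M']$ contains an independent set worth at least $\mu_i^I$, and the approximation guarantee of \citeauthor{Halldorsson:1997}'s algorithm on graphs of maximum degree $\Delta(G)$ then yields a feasible bundle worth at least $\alpha\mu_i^I$, with room to spare above $2/(\Delta(G)+2)$. In case~(b), $|N'|\ge 2$ with $G[M']$ being $\Delta(G)$-colorable: \cref{lem:bag-filling} on a $\Delta(G)$-coloring demands $\alpha\le|N'|/(2|N'|+\Delta(G)-2)$, whose minimum over $|N'|\ge 2$ is $2/(\Delta(G)+2)$ and which is at least $1/2$ when $\Delta(G)\le 2$. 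In case~(d), $|N'|>\Delta(G)$ with $G[M']$ not $\Delta(G)$-colorable: \cref{lem:bag-filling} on a $(\Delta(G)+1)$-coloring demands $\alpha\le|N'|/(2|N'|+\Delta(G)-1)$, minimized at $|N'|=\Delta(G)+1$ with value $(\Delta(G)+1)/(3\Delta(G)+1)$, which equals $1/2$ and $3/7$ for $\Delta(G)=1$ and $2$, and is at least $2/(\Delta(G)+2)$ for $\Delta(G)\ge 3$. Taking, for each $\Delta(G)$, the worst applicable bound over all possible $|N'|$ reproduces exactly the piecewise $\alpha$ of the statement, and for $\Delta(G)>2$ the resulting $\alpha=2/(\Delta(G)+2)$ exceeds $1/\Delta(G)$.

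The main obstacle is case~(c), $2\le|N'|\le\Delta(G)$ with $G[M']$ not $\Delta(G)$-colorable, where a black-box appeal to \cref{lem:bag-filling} across all $\Delta(G)+1$ source sets — the bundle $B$ plus the $\Delta(G)$ color classes — would only give $2/(\Delta(G)+3)$, short of the claimed bound. One first uses Brooks' theorem again to see that deleting one vertex from each problematic component (each $K_{\Delta(G)+1}$, or odd cycle when $\Delta(G)=2$) restores $\Delta(G)$-colorability of the rest. One then exploits the auxiliary rescaling of agent $i$'s valuations, capped so that $v_i(M')\le|N'|+1$: in the capped case this forces $v_i(B)>1>\alpha$, so $i$ is already satisfied by $B$ alone, while in the uncapped case one has $v_i(M'\setminus B)=|N'|$, so that the requirement reduces to serving the other $|N'|$ agents from the $\Delta(G)$ color classes — the same condition $\alpha\le|N'|/(2|N'|+\Delta(G)-2)$ as case~(b). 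Checking that the tie-breaking that defers $i$, together with the possibly inflated item values $i$ now assigns, does not derail the bag-filling accounting for the remaining agents, so that case~(c) never becomes the binding constraint, is where the argument is most delicate.
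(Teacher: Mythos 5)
Your overall architecture matches the paper's proof: the same normalization and reduction invariant, the same treatment of steps 3a, 3b and 3d with the same numerical bounds, and the same polynomial-time arguments (Brooks' theorem for the colorability test, a greedy $(\Delta(G)+1)$-coloring, and Lov\'asz's algorithm for the $\Delta(G)$-coloring). You also correctly identify step 3c as the crux---but that is precisely where your argument stops short, in two places. First, you invoke the rescaling of agent $i$'s valuations as a given device without showing that it is sound, i.e.\ that $\mu_i \le 1$ still holds after $i$'s values are inflated from the $v_i(M')=|N'|$ normalization to $v_i(M'\setminus B)=|N'|$ (or $v_i(M')=|N'|+1$). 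Your step-2 invariant only yields $v_i(M')\ge|N'|\mu_i$, which does not survive this inflation. The missing idea is the paper's dichotomy: every non-$\Delta(G)$-colorable component is a $K_{\Delta(G)+1}$ (or, for $\Delta(G)=2$, an odd cycle), so its vertices occupy at least $\Delta(G)+1$ distinct bundles of any MMS partition $A$ of the original instance; hence either every such component lost an item to a step-2 reduction, in which case $v_i(M')\ge|N'|\mu_i+v_i(B)$ (because $B$ takes only the \emph{least}-valued item of each component) and normalizing $v_i(M'\setminus B)=|N'|$ is safe, or some component is untouched, in which case at least $\Delta(G)+1\ge|N'|+1$ bundles of $A$ survive and $v_i(M')\ge(|N'|+1)\mu_i$, so the cap is safe. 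Taking the minimum of the two scalings is exactly what guarantees $\mu_i\le1$ in either case; without this, the guarantee for $i$ is vacuous.

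Second, the bag-filling accounting in step 3c, which you explicitly defer as ``delicate,'' is where the naive bound $2/(\Delta(G)+3)$ must actually be improved, and the needed argument is not optional. The paper's observation is that, because $i$ is never chosen when breaking ties, any bundle handed to $i$ before a remaining agent $i'$ must be worth less than $\alpha$ to $i'$ (otherwise $i'$ would have been chosen instead), so the value lost to $i'$ is bounded by $2(|N'|-2)\alpha+\alpha$ rather than $2(|N'|-1)\alpha$; this yields $\alpha\le|N'|/(2|N'|+\Delta(G)-2)$, whose minimum over $|N'|\ge2$ is $2/(\Delta(G)+2)$. Relatedly, your claim that in the capped case ``$i$ is already satisfied by $B$ alone'' is an existence statement, not a statement about what \cref{alg:mms-polynomial} does---bag filling may well hand $B$ to another agent---so $i$'s guarantee in both subcases must come from the same bag-filling accounting (using $v_i(M')\ge|N'|+1$ in the capped case and the fact that the color classes alone carry value $|N'|$ for $i$ in the uncapped case). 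These gaps are fixable along the paper's lines, but as written the $2/(\Delta(G)+2)$ bound in case (c) is not established.
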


\begin{proof}
    In order for the algorithm to run in polynomial time, it cannot perform the
    NP-hard calculation of $\mu_i$. Instead, normalization is used, providing an
    upper bound on the value. This is done by continuously scaling the values of
    each agent $i$, so that $v_i(M') = |N'|$ before, during and after step 2,
    where $M'$ and $N'$ are the remaining items and agents, respectively. By
    \cref{lem:normalization,lem:alpha-reduction}, this guarantees
    that $\mu_i \le 1$ in all parts of the algorithm, except for agent $i$ in
    step 3c. The equivalent validity of the rescaling of agent $i$'s valuations
    in step 3c will be covered later, together with the rest of the step.

    In step 3a, all agents, except one, have received one-item bundles. This
    means that for the remaining agent, the allocated items can at most have
    been taken from $|N| - 1$ of the $|N|$ bundles in any MMS partition of the
    original instance. Consequently, at least one of the bundles in each
    MMS partition exists in $G[M']$. Since a feasible bundle forms an
    independent set in the conflict graph, any maximum weighted independent set,
    $S$, of $G[M']$ weighted by this agent's valuations must fulfill $v_i(S)
    \ge \mu_i$. The approximation algorithm of \citet{Halldorsson:1997} has an
    approximation factor of $3/(\Delta(G) + 2)$, which for all $\Delta(G) \ge 1$
    is better than the limits of the theorem.

    In step 3b, \cref{lem:bag-filling} guarantees that $\alpha$ can be
    $|N'|/(2|N'| + \Delta(G) - 2)$, which when $\Delta(G) \le 2$ is at least
    $1/2$ and otherwise minimized when $|N'| = 2$, i.e., $\alpha = 2/(\Delta(G)
    + 2)$. Similarly, in step 3d, the function is minimized when $|N'| =
    \Delta(G) + 1$, allowing for $\alpha = (\Delta(G) + 1)/(3\Delta(G) + 1)$,
    which is $1/2$ for $\Delta(G) = 1$, $3/7$ for $\Delta(G) = 2$ and otherwise
    at least $2/(\Delta(G) + 2)$, satisfying the guarantees of the theorem.

    For step 3c we must show that
    \begin{inl}
        \item $G[M' \setminus B]$ is $\Delta(G)$-colorable,\label{item:3c-i}
        \item the rescaling of agent $i$'s valuation guarantees that $\mu_i \le
            1$, and\label{item:3c-ii}
        \item each remaining agent receives a bundle that satisfies the
            requirements set forth in the theorem.\label{item:3c-iii}
    \end{inl}

    For \ref{item:3c-i}, Brooks' theorem \cite{Brooks:1941} tells us that each
    non-$\Delta(G)$-colorable component is either the complete graph of
    $\Delta(G) + 1$ vertices, $K_{\Delta(G) + 1}$, or, if $\Delta(G) = 2$, a
    cycle of odd length, $C_{2k + 1}$. Removing a single item from the component
    will result in an induced subgraph of, respectively, $K_{\Delta(G)}$ and a
    path. Each of these can easily be verified to be $\Delta(G)$-colorable for
    $\Delta(G) \ge 2$. Subsequently, $G[M' \setminus B]$ is
    $\Delta(G)$-colorable.

    For the rescaling of agent $i$'s valuations, \ref{item:3c-ii}, let $A =
    \langle A_1, A_2, \dots, A_{|N|} \rangle$ be any of $i$'s MMS partitions of
    the original instance. Since $G[M']$ is not $\Delta(G)$-colorable, either a
    subset of the bundles in $A$ that contained items allocated in step 2 must
    contain an item from each non-$\Delta(G)$-colorable component or at least
    $\Delta(G) + 1 \ge |N'| + 1$ bundles remain untouched in step 2. In the
    first case, the remaining value is at least $|N'|\mu_i + v_i(B)$, i.e.,
    $v_i(M' \setminus B) \le |N'|$ guarantees that $\mu_i \le 1$. In the latter,
    the remaining value is at least $(\Delta(G) + 1)\mu_i$, and $v_i(M') \le
    |N'| + 1 \le \Delta(G) + 1$ guarantees that $\mu_i \le 1$. Scaling the
    valuations to the minimum of these, guarantees that either the value of the
    bundles in the $\Delta(G)$-coloring is $|N'|$ or the total value is $|N'| +
    1$. In the first case, \cref{lem:bag-filling} allows $\alpha = 2/(\Delta(G)
    + 2)$ for $i$, while in the latter it allows $\alpha = 3/(\Delta(G) + 2)$
    for $i$.  In either case, agent $i$'s bundle is compatible and $\alpha$ is
    at least as high as the one given in the theorem.

    For the remaining agents, \ref{item:3c-iii}, the bag filling algorithm is
    performed with $\Delta(G) + 1$ bundles and a total value of $|N'|$. This
    would by \cref{lem:bag-filling} result in $\alpha = 2/(\Delta(G) + 3)$;
    however, since $i$ is never selected when breaking ties, either all other
    agents are given a bundle before $i$ or the bundle given to agent $i$ is
    worth less than $\alpha$. Consequently, we need to select $\alpha$ so that
    the following holds:
    \begin{align*}
        (\Delta(G) + 1)\alpha &\le |N'| - 2(|N'| - 2)\alpha - \alpha \\
        \alpha&\le 2/(\Delta(G) + 2)
    \end{align*}
    \cref{lem:alpha-reduction} guarantees that step 4 completes the
    $\alpha$-approximate MMS allocation.

    It can easily be verified that steps 1, 2, 3a and 4 can all be completed in
    polynomial time in the number of agents and items. Checking that
    $G[M']$ is $\Delta(G)$-colorable is equivalent to checking if the induced
    subgraph contains either $K_{\Delta(G) + 1}$ or if $\Delta(G) = 2$ a
    component that forms an odd cycle. Each of these checks can be performed in
    polynomial time. A $(\Delta(G) + 1)$-coloring can greedily be constructed in
    polynomial time and a polynomial-time algorithm of \citet{Lovasz:1975} can
    be used to construct a $\Delta(G)$-coloring in a $\Delta(G)$-colorable
    graph. Thus, steps 3b, 3c and 3d can be completed in polynomial time and the
    algorithm will find an $\alpha$-approximate MMS allocation in polynomial
    time.
\qed
\end{proof}

\noindent
\cref{alg:mms-polynomial} relies on ($\Delta(G) + 1$)- and $\Delta(G)$-colorings
in order to achieve its guarantee, as at least one of them can be constructed in
polynomial time for any type of graph. Many graphs admit colorings using fewer
colors. Optimal colorings can generally not be found in polynomial time.
However, for some restricted graph classes we can find optimal colorings in
polynomial time and for other there exists methods for finding colorings within
a certain factor of optimum. For these classes, using the simpler colorings
(i.e., with fewer colors) will improve the guarantees of \cref{alg:bag-filling},
which would improve the possible approximation factor that can be used in a
polynomial-time algorithm. One such special case, is bipartite graphs, for which
a 2-coloring can be found in polynomial time. \Cref{cor:bipartite} shows the
possible approximation factor in this case, which is almost as good as our
earlier theoretical lower bound.

\begin{corollary}\label{cor:bipartite}
    For a problem instance $(N, M, V, G)$, where $G$ is a bipartite graph and
    $|N| > \Delta(G)$, a $1/2$-approximate MMS allocation can be found in
    polynomial time.
\end{corollary}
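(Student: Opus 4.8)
The plan is to specialize the argument behind \cref{thr:mms-approx-polytime}---that is, \cref{alg:mms-polynomial}---to the bipartite case, exploiting the fact that a bipartite graph, and every induced subgraph of one, is $2$-colorable, and that such a $2$-coloring is produced in polynomial time by breadth-first search. We fix $\alpha = 1/2$. First I would normalize, scaling each $v_i$ so that $v_i(M) = |N|$; by \cref{lem:normalization} this gives $\mu_i \le 1$. Then, exactly as in step~2 of \cref{alg:mms-polynomial}, I repeatedly allocate a single item worth at least $\alpha$ to some agent who still values such an item that highly, rescaling the valuations of the remaining agents to $v_i(M') = |N'|$ after each removal, where $N'$ and $M'$ denote the agents and items that remain. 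The matching of removed agents to their allocated items meets the hypothesis of \cref{lem:alpha-reduction}, so that lemma together with \cref{lem:normalization} keeps $\mu_i \le 1$ for every remaining agent and reduces the task to finding a feasible partial allocation of the reduced instance in which each remaining agent receives a bundle worth at least $\alpha$. After this phase, no remaining item is worth $\alpha$ or more to any remaining agent.

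On the reduced instance I would simply take a $2$-coloring $\langle A_1, A_2 \rangle$ of $G[M']$ and run \cref{alg:bag-filling} on it with $\ell = 2$ and $x_i = \alpha$ for every $i \in N'$. Both conditions of \cref{lem:bag-filling} hold: $v_{ij} \le x_i$ because no remaining item is worth $\alpha$ or more, and $x_i = 1/2 = |N'| / \bigl(2 + 2(|N'| - 1)\bigr) = v_i(M') / \bigl(\ell + 2(|N'| - 1)\bigr)$ since $v_i(M') = |N'|$. Hence every agent in $N'$ receives a feasible bundle worth at least $\alpha = 1/2 \ge \mu_i / 2$. Note that the ratio $|N'|/\bigl(2 + 2(|N'|-1)\bigr)$ equals $1/2$ regardless of $|N'|$, so there is no worst case to minimize over; in particular, the special subcases of step~3 of \cref{alg:mms-polynomial}, which existed only to handle subgraphs that are not $\Delta(G)$-colorable, are not needed here. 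Finally, the third claim of \cref{lem:alpha-reduction} turns this partial allocation into a feasible complete $1/2$-approximate MMS allocation, placing any leftover items with agents that have no conflicting item, which is possible because $|N| > \Delta(G)$.

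For the running time, the $2$-coloring is found in linear time, the reduction performs at most $|N|$ rounds each of which is a linear scan, \cref{alg:bag-filling} runs in polynomial time by \cref{lem:bag-filling}, and the completion step is polynomial by \cref{lem:alpha-reduction}; so the whole procedure is polynomial. I do not expect a genuine obstacle: the only point requiring care is the bookkeeping of the reduction phase---namely that rescaling to $v_i(M') = |N'|$ after each single-item removal still certifies $\mu_i \le 1$ for the agents that remain, which is exactly \cref{lem:alpha-reduction} combined with \cref{lem:normalization}---together with checking that, once one commits to two source bundles, the guarantee of \cref{lem:bag-filling} collapses to the clean value $1/2$.
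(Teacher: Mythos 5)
Your proposal is correct and takes essentially the same approach as the paper, which likewise replaces the $\Delta(G)$-coloring in \cref{alg:mms-polynomial} with a polynomial-time $2$-coloring and observes that the bound $|N'|/(\ell + 2(|N'|-1))$ from \cref{lem:bag-filling} collapses to $1/2$ when $\ell = 2$. The paper's proof is a terse remark to this effect; you have merely spelled out the normalization, reduction, and completion steps it implicitly reuses from \cref{thr:mms-approx-polytime}.
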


\begin{proof}
    When $G$ is bipartite, we do not need to rely on a $\Delta(G)$-coloring, as
    in \cref{alg:mms-polynomial}, in order to find an approximation in
    polynomial time. Instead, a $2$-coloring can greedily be found in polynomial
    time. If the $\Delta(G)$-coloring is replaced by a $2$-coloring, the proof
    of \cref{thr:mms-approx-polytime}, without consideration for
    $\Delta(G)$-colorablility and ($\Delta(G) + 1$)-colorings, guarantees an
    $\alpha$ of $1/2$.
\qed
\end{proof}

\noindent
While the use of better colorings can improve the value of $\alpha$ for specific
graph classes, there are still quite strong limitations on how well
\cref{alg:bag-filling} can perform on a coloring of the graph. We
will, for the straightforward approach used, be limited by an $\alpha$ of around
$1/\chi(G)$---and generally worse, because of the problems of finding a
$\chi(G)$-coloring. Using existing algorithms for more complex types of
valuation functions with no conflicting items, we can guarantee a constant
approximation factor in polynomial time for certain classes of graphs.%
\footnote{Post-publication note: The results of \citet{Li:2021} can be used to
find $(11/30 - \epsilon)$-approximate MMS allocations ($\epsilon > 0$) in
polynomial time when the maximum weighted independent set problem can be solved
in polynomial time on the graph G.}

\begin{proposition}\label{prop:xos-reduction}
    For a problem instance $(N, M, V, G)$ with $|N|>\Delta(G)$, a
    $1/8$-approximate MMS allocation may be found in polynomial time if the
    maximum weighted independent set problem can be solved in polynomial time on
    $G$ and all induced subgraphs of $G$.
\end{proposition}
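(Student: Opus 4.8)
The plan is to reduce the problem to unconstrained fair allocation under XOS valuations, for which a polynomial-time algorithm that — using only value and XOS supporting-function oracles, or equivalently demand oracles — gives every agent at least a $1/8$ fraction of its maximin share is known from prior work. For each agent $i$, define a new valuation $\hat v_i$ on the ground set $M$ by $\hat v_i(S) = \max\{ v_i(T) : T \subseteq S \text{ and } T \text{ independent in } G \}$, i.e.\ the value $i$ would assign to the best feasible sub-bundle of $S$. For any independent set $T$ the map $S \mapsto v_i(S \cap T)$ is additive, and $\hat v_i$ is the pointwise maximum of these additive functions over the finitely many independent sets of $G$; hence $\hat v_i$ is XOS (fractionally subadditive). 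An XOS supporting-function oracle for $\hat v_i$ at a set $S$ is precisely a maximum-$v_i$-weight independent set of $G[S]$, and a demand oracle at prices $p$ is a maximum-weight independent set of $G$ under weights $\max(v_{ij}-p_j,0)$; both are induced-subgraph instances of the weighted independent set problem, so by hypothesis both oracles run in polynomial time.

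Next I would relate the maximin shares. Write $\hat\mu_i$ for agent $i$'s maximin share in the unconstrained instance with valuation $\hat v_i$. Take a feasible MMS partition of the original conflict instance for $i$; each of its bundles is independent in $G$, so $\hat v_i$ agrees with $v_i$ on every bundle, and since this partition is also an admissible unconstrained $|N|$-partition, it witnesses $\hat\mu_i \ge \mu_i^I$. Hence any allocation that is $1/8$-approximate MMS for the $\hat v_i$ instance gives every agent a bundle $B_i$ with $\hat v_i(B_i) \ge \hat\mu_i/8 \ge \mu_i^I/8$.

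The algorithm is then: construct the valuations $\hat v_i$ (accessed through the independent-set oracles above), run the known polynomial-time $1/8$-approximate MMS algorithm for XOS valuations to obtain an $|N|$-partition $\langle B_1,\dots,B_n\rangle$ of $M$, and, for each agent $i$, replace $B_i$ by a maximum-$v_i$-weight independent set $T_i$ of $G[B_i]$, so that $v_i(T_i) = \hat v_i(B_i) \ge \mu_i^I/8$. The $T_i$ are pairwise disjoint and each independent in $G$, hence $\langle T_1,\dots,T_n\rangle$ is a feasible partial allocation; since $|N|>\Delta(G)$, every leftover item conflicts with items in at most $\Delta(G)<|N|$ of the current bundles, so it can be added to some bundle without creating a conflict, exactly as in the completion step of the proof of \cref{lem:alpha-reduction}. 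This yields a feasible complete $1/8$-approximate MMS allocation, and every step --- the polynomially many oracle calls, the XOS algorithm, the independent-set extractions and the completion --- runs in polynomial time.

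The main obstacle is the plumbing of the reduction rather than any single hard estimate: one must verify that the oracle interface required by the XOS maximin-share algorithm is faithfully realized by maximum-weighted-independent-set computations on the relevant induced subgraphs, and that feeding that algorithm the bound $\hat\mu_i$ in place of the unknown, NP-hard quantity $\mu_i^I$ is legitimate --- which is exactly what the inequality $\hat\mu_i \ge \mu_i^I$ secures. A secondary point worth making explicit is that $\hat v_i$ is in general only XOS and not submodular, so one is forced to use the weaker XOS machinery, with its correspondingly weaker $1/8$ factor, rather than the stronger constants available for submodular valuations.
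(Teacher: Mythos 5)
Your proposal is correct and follows essentially the same route as the paper: reducing to an unconstrained XOS instance via the maximum-weight-independent-set valuation, implementing the XOS and demand oracles with independent-set computations, invoking the $1/8$-approximation of \citet{Ghodsi:2018}, and recovering a feasible allocation by extracting independent sets and completing using $|N|>\Delta(G)$. The only (harmless) difference is that you establish just the one-sided bound $\hat\mu_i \ge \mu_i^I$, which indeed suffices, whereas the paper verifies value-preserving conversions in both directions.
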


\begin{proof}
    To find a polynomial-time algorithm, we will show a reduction to an
    unconstrained instance with fractionally subadditive (XOS) valuations.
    \citet{Ghodsi:2018} showed that for an instance of this type, a
    polynomial-time algorithm exists for $1/8$-approximate MMS allocations given
    a polynomial-time \textit{demand oracle} and \textit{XOS oracle}.

    Recall that a set function $f: 2^M \to \mathbb{R}_{\ge 0}$ is XOS if it can
    be represented as a finite set of additive functions $F = \{f_1, f_2, \dots,
    f_\ell\}$, such that $f(X)$ is the maximum of all the additive functions
    applied to $X$, i.e., $f(X) = \max_{i = 1}^{\ell} f_i(X)$. We will show that
    an oracle that finds the maximum weight $f(S)$ of an independent set in
    $S\subseteq M$ is XOS. For a graph $G$, let $\Ind(G)$ be the set of
    all the independent sets of $G$. For each $I \in \Ind(G)$, let $f_I : 2^M
    \to \mathbb{R}_{\ge 0}$ be an additive function, with values for $m \in M$
    given by
    \[
        f_I(\{m\}) =
        \begin{cases}
            w(m) & \text{if } m \in I\,;
            \\ 0 & \text{otherwise.}
        \end{cases}
    \]
    For a set of vertices $S \subseteq M$, $f_I$ finds the sum of the weight of
    the vertices in $S \cap I$. Let $F = \{f_I : I \in \Ind(G)\}$. Because
    $\Ind(G)$ contains all possible independent sets of $G$, the XOS function
    over $F$, $f'$, will for any set of vertices $S \subseteq M$ be maximized by
    $f_I$ for any maximum weighted independent set, $I$, on $G[S]$. Thus, $f =
    f'$ and the oracle is XOS. Consequently, XOS valuations may be calculated
    for each agent $i$ by using a maximum weighted independent set oracle on the
    graph using agent $i$'s valuations as vertex weights.

    We wish to use the $1/8$-approximation algorithm for unconstrained MMS for
    instances with XOS valuations to find $1/8$-approximate MMS allocations for
    instances with additive valuations and conflicting items. In order for this
    to be possible, we need to show that feasible allocations in either setting
    can be converted to feasible allocations in the other with at least the same
    value. If this is possible, then the MMS guarantee is the same in both
    settings and any $\alpha$-approximate MMS allocation in one setting can be
    converted to an $\alpha$-approximate MMS allocation in the other setting.

    Let $A = \langle A_1, A_2, \dots, A_{|N|} \rangle$ be an allocation in the
    XOS setting. For each agent $i$, a maximum weighted independent set, $S_i$,
    of $A_i$, with respect to agent $i$'s valuations, will have the same value
    as $A_i$. Consequently, $S = \langle S_1, S_2, \dots, S_{|N|} \rangle$ is an
    allocation, possibly partial, where each agent receives the same value as in
    $A$. Since each $S_i$ forms an independent set in $G$, $S$ is also a
    feasible partial allocation with respect to the conflict graph, where each
    agent receives the same value as in the XOS setting. Without losing value,
    the unallocated items may be allocated to arbitrary agents without
    conflicts, as $|N| > \Delta(G)$. As a result, all allocations in the XOS
    setting can be converted to ones in the conflicting items setting with at
    least the same value.

    Let $A = \langle A_1, A_2, \dots, A_{|N|} \rangle$ be a feasible allocation
    in the conflicting items setting. Then each bundle $A_i$ forms an
    independent set of $G$ and $A_i \in \mathcal{I}$. Consequently,
    $f_{A_i}(A_i) = v_i(A_i)$ and bundle $A_i$ has at least the same value in
    the XOS setting.

    In order for the $1/8$-approximation algorithm of \citeauthor{Ghodsi:2018}
    to be usable in polynomial time, we must show that the reduction and
    subsequent conversion of the $\alpha$-approximate MMS allocation can be
    performed in polynomial time. The latter must be true, as it consists of
    solving the maximum independent set problem $|N|$ times in addition to
    reallocating the set of unallocated items, which can both be performed in
    polynomial time given our assumptions.

    In order for the reduction to be performable in polynomial time, the
    \textit{demand oracle} and \textit{XOS oracle} used in the algorithm of
    \citeauthor{Ghodsi:2018} must both be creatable and queryable in polynomial
    time. The latter is an oracle that given a set of items, $S$, and the XOS
    function of an agent, $f_i$, provides a representation of the value each
    item in $S$ contributes in the $f \in F_i$ that is maximized for $S$. The
    construction of $f_i$ implies that $f = f_I$, where $I$ is a maximum
    weighted set of $G[S]$ with respect to agent $i$'s valuations. An oracle
    that finds $I$ and provides a representation of $f_I$ is therefore an XOS
    oracle. Both of these operations can be performed in polynomial time, given
    our assumptions.

    The demand oracle is an oracle that, given a list of prices $\langle p_1,
    p_2, \dots, p_m \rangle$ and the XOS function of an agent, $f_i$, finds a
    set of items, $S$, that maximizes $f_i(S) - \sum_{s \in S} p_s$. In the
    algorithm, the prices are non-negative, which implies that $v_{is} - p_s \le
    v_{is}$. Let $S'$ be the possibly empty set of items, where $v_{is} < p_s$.
    Then any maximum weighted independent set, $S$, in $G[M \setminus S']$ with
    weights $v_{is} - p_s$ maximizes $f_i(S) - \sum_{s \in S} p_s$. To see that
    this is true, let $S$ be any set that maximizes the function. Any item in
    $S'$ would provide a negative contribution and cannot be in $S$, i.e., $S
    \subseteq M \setminus S'$. If $S$ is an independent set, then $f_i(S) -
    \sum_{s \in S} p_s = \sum_{s \in S} (v_{is} - p_s)$, and consequently if $S$
    is an independent set, it is a  maximum weighted independent set in $G[M
    \setminus S']$ with weights $v_{is} - p_s$. Now, we only need to show that
    for any maximum that is not an independent set, there exists an independent
    set with at least the same value. Assume that $S$ is not an independent set,
    and let $S^*$ be a maximum weighted independent set of $S$ with respect to
    the valuations of $i$. Then $f_i(S) = v_i(S^*)$ and $f_i(S) - \sum_{s \in S}
    p_s \le f_i(S^*) - \sum_{s \in S^*} p_s$.  Consequently, finding a maximum
    weighted independent set maximizes the function. This can be done in
    polynomial time and thus, the demand oracle can be queried in polynomial
    time.
\qed
\end{proof}

\noindent
The maximum weighted independent set problem is NP-hard, which limits the
usefulness of \cref{prop:xos-reduction}. However, there are several graph
classes where the problem is solvable in polynomial time; two noteworthy
examples are bipartite graphs and claw-free graphs~\citep{Minty:1980}. While we
for bipartite graphs already know how to find $1/2$-approximate MMS allocations
in polynomial time, there is no restriction on the possible values for $\chi(G)$
in claw-free graphs.\footnote{For example, a complete graph is claw-free and has
$\chi(G) = \Delta(G) + 1$.} As a result, the reduction to XOS valuations could
provide major improvements for dense, claw-free conflict graphs.

While we for some graphs cannot use the results of \cref{prop:xos-reduction} to
find $1/8$-approximate MMS allocations in polynomial time, the reduction to XOS
valuations holds nonetheless. This provides an alternative existence proof for
MMS approximation, as \citet{Ghodsi:2018} showed the existence of
$1/5$-approximate MMS allocations for XOS valuations. Note that
\citeauthor{Ghodsi:2018} also showed similar, but better, results for the more
restrictive setting of submodular valuation functions. It might tempting to
think that the maximum weighted independent set oracle is submodular; however,
it is not, as is evident from \cref{exp:submodular}.

\begin{example}\label{exp:submodular}
    Consider a problem instance consisting of 4 items, where the valuation
    function of agent $i$ is given by $v_{i1} = 5, v_{i2} = 3, v_{i3} = 5,
    v_{i4} = 3$. Let the conflict graph be as follows:

    \begin{center}
        \begin{tikzpicture}[graph]
            \draw
                (0.75, 1.5) node (1) {1}
                (1.5, 0.75) node (2) {2}
                (0.75, 0) node (3) {3}
                (0, 0.75) node (4) {4}
                ;

            \draw
                (1) edge (2)
                (2) edge (3)
                (3) edge (4)
                (4) edge (1)
                ;
        \end{tikzpicture}
    \end{center}
    Let $f$ be the function that given a set of items finds the weight of the
    maximum weighted independent set using the valuations of the agent as vertex
    weights. If we let $X = \{1, 2, 4\}$ and $Y = \{2, 3, 4\}$, then $f(X) +
    f(Y) = 12 < 16 = f(X \cup Y) + f(X \cap Y)$. In other words, $f$ is not
    submodular.
\end{example}

\section{Conflicting Items in Practice}
\label{sec:practice}

We are interested in determining how fairness is affected, and the extent to
which existing tools and formalisms still apply, with item conflicts.
Specifically, when imposing item conflicts:

\begin{rqs}[itemsep=0.75ex]
    \item To what extent do fair allocations (EF1, MMS) exist?
        \label{rq:exist}
    \item How is the fairness (MMS, PROP) of random allocation affected?
        \label{rq:rand}
    \item To what extent does MNW imply fairness (EF1, MMS)?
        \label{rq:mnw}
\end{rqs}
First of all, we wish to know to what extent fair allocation is actually
possible in this new setting (\ref{rq:exist}). We know EF1 is not guaranteed in
this setting, and that MMS may not be achievable even in the unconstrained
case~\citep{Kurokawa:2016}, but will the fairness guarantees go down to the
point where these properties become the exception rather than the rule? And if
MMS were to be unattainable for some instance, was it attainable \emph{without}
the item conflicts, or could the opposite be true, because of declining
individual maximin shares?

We also wish to look at the \emph{prevalence} of fair allocations
(\ref{rq:rand}). With item conflicts, many allocations are no longer feasible,
so utility will tend to decrease. However, whether \emph{fairness} decreases is
not a given. For example, although individual maximin shares will generally be
lower in any setting with additional constraints, the degree of attainable MMS
approximation may very well go \emph{up}. And although one may lose the fairest
allocations, it is quite possible that one loses many more \emph{unfair}
allocations; each item conflict, for example, prevents some degree of hoarding,
as no agent can hold both items. To examine this effect empirically, we apply a
random allocation procedure, as described in \cref{prop:random}, studying the
effect on the approximation of both MMS and proportionality, the latter to
separate the effect of forced distribution of items from the lowering of
individual maximin shares. This question is of interest in its own right,
describing inherent properties of the problem. An answer might, however, also
shed some light on the relative hardness of finding fair allocations, e.g.,
through randomized or heuristic procedures, with and without item conflicts.

Finally, we look at the impact on the maximum Nash welfare (MNW), inspired by
the work of \citeauthor{Caragiannis:2019} (\ref{rq:mnw}). They show that MNW is
a useful tradeoff between efficiency and fairness, in the unconstrained
allocation setting~\citep{Caragiannis:2019}. Is this still the case with item
conflicts? For one thing, we look for a decrease in MNW, which might indicate
either lower efficiency or fairness, or both. And while MNW still implies Pareto
optimality when item conflicts are introduced, it no longer implies EF1 in
general, and it is uncertain to which extent its approximation of MMS is
preserved, so we explore the relationship between MNW and both of these
properties.

\subsection{Experimental Setup}

To address our empirical questions, we generated a collection of random
instances, and found randomized allocations, MMS allocations, MNW allocations
with and without EF1, all with and without item conflicts.

\paragraph{Problem instances.} A central issue in generating instances is the
choice of graph models. We selected three of the most popular and well-studied
models of real-world graphs as our constraints:

\begin{stmts}[widest=iii, itemsep=0.75ex]
    \item The Erdős--Rényi
        model, where each edge is present with
        probability $p$~\citep{Bollobas:2001};
    \item The Barabási--Albert model, where a small graph is extended by
        preferential attachment~\citep{Albert:2002}; and
    \item The Watts--Strogatz model, where the edges of a regular ring lattice
        are randomly rewired with a certain probability~\citep{Watts:1998}.
\end{stmts}

\noindent
Before producing an instance for a given graph type, all parameters were
selected uniformly at random, in the ranges shown in \cref{tab:params}. The
limits for $n$ and $m$ are based on real-world data from
\citet{Caragiannis:2019}, where the largest $n$ observed was~\num{10}, and the
average ratio $m/n$ was approximately~\num{3}. The upper limit for $d$ is based
on the assumption of \citeauthor{Watts:1998} that $d\ll m$~\citep{Watts:1998}.

\begin{table}
\caption{The ranges of the randomly selected graph parameters}
\label{tab:params}
\begin{tabular}{@{}llll@{}}
\toprule
\textbf{Par.} & \textbf{Range} & \textbf{Description} & \textbf{Model} \\
\midrule
$n$ & $2,3, \dots, 10$ & The number of agents & All \\
$m$ & $2n,2n\!+\!1, \dots, 4n$ & The number of items (vertices) & All \\
$p$ & $(0, 1)$ & Edge probability & Erdős--Rényi \\
$k$ & $1, 2, \dots, m$ & Initial size and connection degree & Barabási--Albert \\
$d$ & $2, 4, \dots, m/2$ & Average degree & Watts--Strogatz \\
$\beta$ & $[\mkern1mu0, 1)$ & Rewiring probability & Watts--Strogatz \\
\bottomrule
\end{tabular}
\end{table}

Once the parameters were set, a random graph $G$ was generated. If the graph had
no edges, it was discarded, as this would merely be an instance of the ordinary
allocation problem.\footnote{This only applies to the Erdős--Rényi case.} If
$\Delta(G) \geq n$, the graph was also discarded; for such instances, an
allocation may not be feasible, and many results and methods, including the
randomized algorithm of \cref{prop:random}, do not apply. The process was
repeated until we had \num{5000} graphs of each kind for which $n\leq\CC(G)$.
This was done to have enough data to study the prevalence of EF1 in cases where
it is not immediately implied by \cref{prop:ef1}. Finally, for each instance, a
valuation was created by randomly dividing approximately~\num{1000} points among
the items, for each agent, in line with the value specification mechanism of
\citeauthor{Caragiannis:2019}
More specifically, each item was given a random real value, and the sum for each
agent was scaled to~\num{1000}. Finally, individual values were rounded, as the
mixed-integer program used to find MNW requires integral
valuations~\citep{Caragiannis:2019}.

\paragraph{Implementation.}

The experiments were implemented in the Julia programming language, version
1.5.3~\citep{Bezanson:2017}, using the LightGraphs
package~\citep{Bromberger:2017} for handling graphs, and the JuMP
package~\citep{Dunning:2017} with Gurobi 9.1.1~\citep{Gurobi:2021} as the
backend for solving mixed-integer linear programs (MIPs).\footnote{The source
code and raw experimental results are available as ancillary files for the
preprint of this paper at \url{https://arxiv.org/abs/2104.06280}.} For the randomized
allocation, \num{1000} trials were performed for each instance, and the average
recorded. Gurobi was run with a timeout of \SI{5}{\min} (on an 8-core Intel
i9-9900K at \SI{3.60}{\GHz}).

For finding individual maximin shares, a straightforward maximin MIP was used,
with $n$ clones of the given agent. The MMS allocations were then found by
another maximin MIP, where agents' values were divided by their maximin shares.
These maximin shares were also used to find the proportion of MMS for randomized
allocation and MNW. The MNW allocations were found by a MIP based on the one
described by \citet{Caragiannis:2019} (adapted to permit varying maximum bundle
values), and MNW with EF1 was found by the same program, with added constraints
requiring EF1. In all cases, item conflicts were handled by adding the necessary
constraints to the relevant MIPs.

\subsection{Experimental Results}

\begin{table}
\caption{Summary of the generated instances}%
\label{tab:instances}%
\begin{tabular}{@{}lccccccc@{}}
\toprule
\textbf{Type} &
\textbf{Num.} &
\textbf{MMS} &
$\bm{n}$ &
$\bm{m}$ &
$\bm{|E|}$ &
$\bm{\Delta}$ &
$\bm{\CC}$ \\
\midrule

Erdős--Rényi & 8620 & 8583 & 6.4 & 18.1 & 17.4 & 3.6 & 11.4\\
Barabási--Albert & 5000 & 4956 & 7.7 & 19.8 & 20.5 & 6.0 & 19.8\\
Watts--Strogatz & 5009 & 4975 & 7.1 & 20.1 & 30.7 & 4.4 & 19.4\\
\bottomrule
\end{tabular}

\end{table}

A summary of the instances is given in \cref{tab:instances}. While running the
experiments, \ndropped{} instances (\SI{\pctdropped}{\percent}) timed out when
solving the mixed-integer programs for MMS, and were dropped from any MMS-based
calculations. The number of remaining instances are listed in the MMS column.
Following this are averages for the number of agents ($n$), items ($m$) and
edges ($|E|$), as well as for the maximum degree ($\Delta$) and largest
component ($\CC$).
Beyond timeouts, another issue is the required precision to compute MNW, which
may be quite high, even for a modest number of agents. When the precision is not
available, the result may not be exactly MNW, but will still---with much laxer
precision requirements---be Pareto-optimal and, for the unconstrained case,
EF1~\citep[cf.][]{Caragiannis:2019}. For our instances, only
\SI{\pctmnwprec}{\percent} had the sufficient precision, and were thus
guaranteed to find an MNW allocation, and not some close approximation. This
means that our results may in some sense be seen as an empirical evaluation of
the mixed-integer program of \citet{Caragiannis:2019} for finding MNW, rather
than of the MNW itself. As this is currently the only feasible approach for
finding MNW solutions, this still tells us something about the usefulness of MNW
in practice. And although the solutions found might, in principle, lose some of
the power of a guaranteed MNW solution, it seems unlikely that they would do
\emph{better} on any fairness criteria, at least in any systematic way; thus
positive results could still be taken as support for the use of MNW.

\paragraph{The existence of fairness.}
Without constraints, EF1 allocations always exist, and all but the rarest
instances have MMS allocations; we wish to know to what extent this holds true
also under item conflicts (\ref{rq:exist}). Under item conflicts, we have
established that EF1 is not guaranteed for any graph when $n\leq\Delta(G)$
(\cref{prop:noef1}), and is otherwise guaranteed in some cases
(\cref{prop:ef1,prop:pathef1,prop:equit}) but not in others
(\cref{prop:non-existence-ef1}). The question is how common it is in practice.
Similarly, we have shown that many approximations to MMS are guaranteed to
exist, and although MMS is not guaranteed in
general, might their existence still be the rule, despite having conflicting
items? Our results are easily summed up in the affirmative: Every instance had
both an EF1 allocation and an MMS allocation (also in the unconstrained case).

\paragraph{Expected fairness.}

When allocating items randomly to agents, one would expect a fairly even
distribution, and introducing item conflicts does not affect this expectation
too much on an individual level (cf.\@ \cref{prop:random}). We wish to examine
the effects on the minimum as well, i.e., how closely we approximate MMS and
proportionality (\ref{rq:rand}). For our random instances, random allocation
disregarding the conflicts (i.e., a lottery) achieved, on average, an
\randmmsalphawocon-approximate MMS allocation. With conflicts, however, we got
an average approximation ratio of \randmmsalpha. A possible cause for the
increase is that in the maximin problem being solved, values are normalized by
the individual maximin shares of the agents, and these will generally go down
with additional constraints. However, a similar increase is seen in the
approximation ratio for proportionality (from \randpropalphawocon\@ to
\randpropalpha), which would seem to indicate that the conflicts enforce a
certain level of distribution of the items, with a forced increase in fairness.
The distribution of approximation ratios for MMS is shown in
\cref{fig:randalpha}.

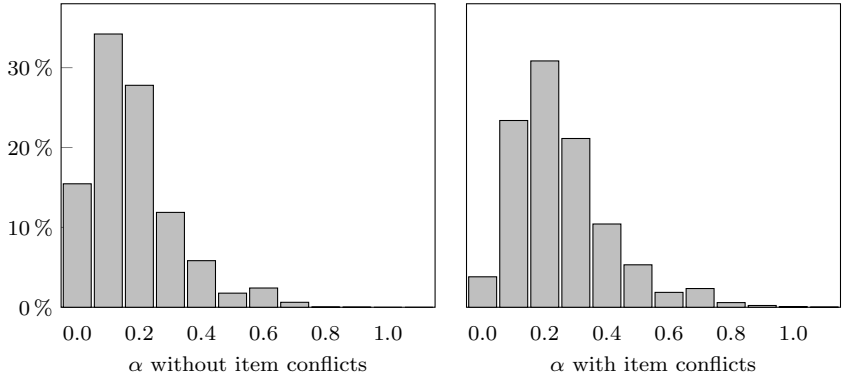
\begin{figure}
\centering
\begin{tikzpicture}
    \begin{axis}[
        small,
        ybar,
        xtick=data,
        bar width=.9,
        enlarge x limits={abs=.52},
        xticklabels from table={randalphabars.tsv}{e},
        xtick style={draw=none},
        xtick pos=bottom,
        ytick pos=left,
        ymin=0,
        ymax = .38,
        yticklabel={\phantom{1}\pgfmathparse{\tick*100}\pgfmathprintnumber{\pgfmathresult}\,\%},
        xlabel={$\alpha$ without item conflicts},
        x tick label style={inner xsep=0pt},
    ]
    \addplot[fill=lightgray] table[x=x, y=ywo] {randalphabars.tsv};
    \end{axis}
\end{tikzpicture}
\quad
\begin{tikzpicture}
    \begin{axis}[
        small,
        ybar,
        xtick=data,
        bar width=.9,
        enlarge x limits={abs=.52},
        xticklabels from table={randalphabars.tsv}{e},
        xtick style={draw=none},
        xtick pos=bottom,
        ytick align=outside,
        ytick pos=left,
        ymin=0,
        ymax=.38,
        ymajorticks=false,
        xlabel={$\alpha$ with item conflicts},
        x tick label style={inner xsep=0pt},
    ]
    \addplot[fill=lightgray] table[x=x, y=y] {randalphabars.tsv};
    \end{axis}
\end{tikzpicture}
\caption{The distribution of the approximation ratio $\alpha$ for MMS of a
random distribution, with randomized tie breaking in the case of item
conflicts (\cref{alg:random}). The bin labels indicate the lower inclusive
limits. The item conflicts are seen to force a shift toward MMS}
\label{fig:randalpha}
\end{figure}

\paragraph{The fairness of MNW.}

As opposed to the proportion of MMS for randomized allocation, the maximum Nash
welfare will never increase when adding item conflicts, simply because the
original is then a relaxation, with a higher possible optimum. The question is
how much this impacts its usefulness as a tool for fair allocation
(\ref{rq:mnw}). For one thing, we know it can no longer guarantee EF1, even with
$n>\Delta(G)$~(\cref{prop:mnwnotef1}). It turns out, however, that it is still
very close, with \SI{\efonefrommnw}{\percent} of the MNW solutions for
conflicting items being EF1. Conversely, adding a requirement of EF1 reduces the
average MNW by \SI{\mnwchange}{\percent}.

The reduction in proportion of MMS was not too substantial either. On average,
the MNW solutions for our instances were \mnwmmsalphawocon-MMS, and adding item
conflicts reduced this to \mnwmmsalpha-MMS (i.e., on average well above full MMS
in both cases). The proportion of cases where MMS was achieved fell from
\SI{\mnwmmswocon}{\percent} to \SI{\mnwmms}{\percent}, as illustrated in
\cref{fig:mnwalpha}.

\begin{figure}
\centering
\begin{tikzpicture}
    \begin{axis}[
        small,
        ybar,
        xtick=data,
        bar width=.9,
        enlarge x limits={abs=.52},
        xticklabels from table={mnwalphabars.tsv}{e},
        xtick style={draw=none},
        xtick pos=bottom,
        ymin=0,
        ymax=1.11,
        ytick pos=left,
        point meta={y*100},
        nodes near coords={\scriptsize\pgfmathprintnumber\pgfplotspointmeta\%},
        nodes near coords style={/pgf/number format/precision=3,
             /pgf/number format/fixed},
        yticklabel={\pgfmathparse{\tick*100}\pgfmathprintnumber{\pgfmathresult}\,\%},
        xlabel={$\alpha$ without item conflicts},
    ]
    \addplot[fill=lightgray] table[x=x, y=ywo] {mnwalphabars.tsv};
    \end{axis}
\end{tikzpicture}
\quad
\begin{tikzpicture}
    \begin{axis}[
        small,
        ybar,
        xtick=data,
        bar width=.9,
        enlarge x limits={abs=.52},
        xticklabels from table={mnwalphabars.tsv}{e},
        xtick style={draw=none},
        xtick pos=bottom,
        ytick align=outside,
        ymin=0,
        ymax=1.11,
        ymajorticks=false,
        point meta={y*100},
        nodes near coords={\scriptsize\pgfmathprintnumber\pgfplotspointmeta\%},
        nodes near coords style={/pgf/number format/precision=3,
             /pgf/number format/fixed},
        xlabel={$\alpha$ with item conflicts},
    ]
    \addplot[fill=lightgray] table[x=x, y=y] {mnwalphabars.tsv};
    \end{axis}
\end{tikzpicture}
\caption{The proportion of MMS attained by MNW before and after adding item
conflicts}
\label{fig:mnwalpha}
\end{figure}
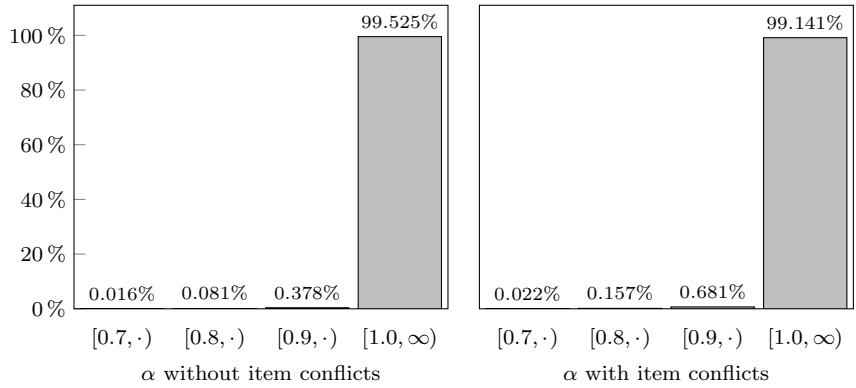

\section{Discussion}

The main purpose of this paper was to explore how introducing item conflicts
affects the existence, prevalence and implications of fairly common fairness
criteria. From a general theoretical standpoint, item conflicts seem to result
in a certain reduction in fairness guarantees. At least, both the existence of
EF1 and the guarantee that MNW leads to EF1 no longer hold. However, from a
practical point of view, the reductions in guarantees seem to be minor and for
some cases---non-existence of EF1---do not occur in our randomly generated
instances.

It is interesting to see that while our theoretical results show that MNW no
longer leads to EF1, in practice this only occurs in a handful of instances.
Additionally, MNW seems to produce only slightly worse MMS approximations than
in the unconstrained setting, still mostly resulting in MMS allocations. While
no longer theoretically backed to the same extent as in the unconstrained setting,
in practice MNW still seems to mostly provide similar benefits as those
advocated by \citet{Caragiannis:2019}, providing a practical tradeoff between
efficiency and fairness.

For the non-existence of EF1, the experiments indicate that this most likely
only very rarely occurs. This might to a certain degree be indicative of the
extent to which there exist combinations of conflict graphs and number of
agents such that we can construct instances without EF1 allocations. An insight
that might help for the earlier mentioned open problem about which
combinations of conflict graphs and the number of agents EF1 always exist for.

Both the non-existence proofs and the relative dearth of counter-examples in
practice for both EF1 and the implication from MNW to EF1 tell us something
about the prevalence of these properties. They do not, however, tell the full
story about the probability of occurrence of non-existence. It might be that
instances generated in our experiments provide types of instances where the
non-existence is either likely or unlikely to occur in relation to the overall
likelihood for all instances. An exploration of the probability with which each
property is expected to occur would provide a better insight into the usefulness
of the properties in real-world settings. Note that similar studies have been
performed for envy-freeness in the unconstrained setting \cite[see,
e.g.,][]{Manurangsi:2018}.

For all but some special cases, there remains a large gap between our lower
bound for the theoretical approximation guarantees of MMS and what our
poly\-nomial-time algorithms can guarantee. An interesting continuation of the
research would be looking into improvements of the guarantees, both in general
and for polynomial-time algorithms, especially considering that all instances in
the experiments admit an MMS allocation. Conversely, it would not be unlikely
that some, not too great, upper bound on polynomial approximation guarantees
exist (unless P$\null=\null$NP), because of the close relation to hard graph
problems and the upper bound for the closely related problem variant of
\citet{Chiarelli:2020}.

A natural extension to conflicting items that may be interesting to explore is
introducing additional conflicts between agents and items, as in the standard
weighted bipartite matching problem, where some agents simply cannot receive
certain items. For example, this would be useful in the real-world example where
items represent positions or roles in an organization. With agent--item
conflicts, it would be possible to limit certain agents from taking on specific
roles, either due to outside conflicts of interest or a lack of the require
skillset needed for the role. In the same way that allocations for conflicting
items form $n$-colorings of the conflict graph, the extension to item--agent
conflicts would have allocations that form list colorings---a fairly well-known
generalization of graph coloring, which has also been studied in the context of
equitable coloring~\citep{Kostochka:2003,Lih:2013}.

\paragraph{Conflicts of interest.}

The authors have no relevant financial or non-financial interests to disclose.

\paragraph{Publication.} Except minor adjustments by the authors, this preprint
has not undergone any post-submission improvements or corrections. The version
of record of this article is published in \emph{Autonomous Agents and
Multi-Agent Systems} and is available online at
\url{https://dx.doi.org/10.1007/s10458-021-09537-3}.

\bibliography{paper}

\end{document}